\newcommand{\mbf}{\mathbf}
\newcommand{\mrm}{\mathrm}
\newcommand{\ha}{\hat{a}}
\newcommand{\hc}{\hat{c}}
\begin{document}

\title{Finite Temperature Simulations of Strongly Correlated Systems}
\author{Chong Sun}

\degreeaward{Doctor of Philosophy}                 
\university{California Institute of Technology}    
\address{Pasadena, California}                     
\unilogo{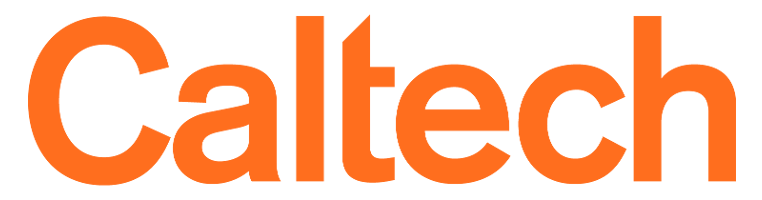}                                 
\copyyear{2021}  
\defenddate{December 1, 2020}          

\orcid{0000-0002-8299-9094}

\rightsstatement{All rights reserved}
\maketitle[logo]

\begin{acknowledgements} 	 
	I am deeply blessed as a member of the Caltech community. 
	Caltech provided me opportunities to participate in advanced research projects
	via collaborations with excellent researchers. I am truly thankful to the
	institute and every member of this big family.

	My advisor, Garnet Kin-Lic Chan, has provided me
	invaluable guidance and support throughout my graduate studies.
	From him, 
	I learned to always check my 
	hypothesis carefully against data. Whenever the data not seem reasonable,
	I should question my code first before questioning the theory or algorithm.
	He also provided plenty of
	opportunities for me to attend conferences and communicate with researchers
	in the field. Garnet is and will be the role model as a scientist to me 
	for the rest of my life.

	I would like to thank my committee, 
	Professor Mitchio Okumura, 
	Professor Thomas Miller and Professor Austin Minnich. They provided
	many helpful advices during my graduate career and tried to 
	bring the best out of me. I am also thankful to Professor Lu Wei who
	was really patient and helpful with my many questions about stimulated
	Raman spectroscopy. 

	Thank you, CCE administrative staff, in particular Alison and Elizabeth. Without
	the help from you, I would not have been able to fully focus on research without 
	worrying about many tough errands.

	I am grateful to be a member of the Chan group and work with so many awesome 
	colleagues. Everyone in the Chan group is nice and always willing to help. 
	I have been working closely with Zhihao, Mario, Ushnish, and Reza, from whom
	learned useful knowledge and skills.
	I am still close friends with previous group members such as Boxiao, Zhendong,
	and Mario, who constantly provide valuable suggestions to me when I need
	help.
	I also enjoyed group activities.
	Before the pandemic, we hung out monthly and tried many good or mediocre
	restaurants. We also had trips to Yosemite, Sequoia, and Universal Studios.
	My graduate school life is full of fun because of the Chan group villagers.

	Lastly, I would like to thank my family. My parents are the best parents I 
	could ever dream of. They did not have opportunities for good education,
	but they value education for my sister and I, and fully support 
	my career as a scientist. I was lucky to have my young sister as my close
	friend since childhood when most of my friends are only children in their 
	families. Having a smart and aggressive sibling was overall helpful to 
	push me to work harder. My husband, James has always been there to make 
	me laugh when I was unhappy with my research progress. Theoretically, my cat Jujube
	should thank me for providing her a home and food, but I know she 
	does not think in that way. I am thankful for her company and for not making loud
	noises when I have Zoom meetings.

\end{acknowledgements}

\newpage
\begin{vplace}[0.7]
\hspace*{\fill}{\large\emph{To my beloved parents}}
\end{vplace}

\justify
\begin{abstract}
This thesis describes several topics related to finite temperature studies
of strongly correlated systems: finite temperature density matrix embedding
theory (FT-DMET), finite temperature metal-insulator transition, and quantum
algorithms including quantum imaginary time evolution (QITE), quantum
Lanczos (QLanczos), and quantum minimally entangled typical thermal states (QMETTS)
algorithms.

While the absolute zero temperature is not reachable, studies of physical 
and chemical problems at finite temperatures, especially 
at low temperature, is essential for understanding the quantum 
behaviors of materials in realistic conditions. Here we define low 
temperature as the temperature regime where the quantum effect is not 
largely dissipated due to thermal fluctuation. Treatment of systems
at low temperature is specially difficult compared to both 
high temperature - where classical approximation can be applied - and zero temperature
where only the ground state is required to describe the system of interest.
FT-DMET is a
wavefunction-based embedding scheme which can handle finite temperature 
simulations of a variety of strongly correlated problems. The 
"high-level in low-level" framework enables FT-DMET to tackle large bulk sizes
and capture the majority of the entanglement at the same time. FT-DMET 
formulations and implementation details for both model systems and 
\textit{ab initio} problems are provided in Chapter~\ref{chp:dmet} and 
Chapter~\ref{chp:hlatt}.

Metal-insulator transition is a common but important phase transition in many
strongly correlated materials. The widely accepted scheme to distinguish
an insulator from a metal is band structure theory based on a single-particle
picture. However, insulating phases caused by disorder or strong correlation
cannot be explained merely with the band structure. In Chapter~\ref{chp:cp},
we demonstrate that electron locality/mobility is a more general criteria
to detect metal-insulator transition. We further introduce complex polarization
as the order parameter to reflect the electron locality/mobility and 
provide a formalism based on thermofield theory to evaluate the complex 
polarization at finite temperature.

Quantum algorithms are designed to perform simulations on a quantum device.
The infrastructure of a quantum processing unit (QPU) utilizes the 
superposition property of quantum bits (qubits), and thus can potentially
outplay the classical simulations in  computational scaling
for certain problems. In Chapter~\ref{chp:qite}, we introduce the
QITE algorithm, which can be applied to quantum
simulations of both ground state and finite temperature problems. We
further introduce a subspace method, QLanczos algorithm,
and a a finite temperature quantum algorithm, QMETTS, where QITE is used
as a building block for the two algorithms. We demonstrate above quantum
algorithms with simulations on both classical computers and quantum
computers.

\end{abstract}

\begin{publishedcontent}[iknowwhattodo]
\nocite{bMotta2020_own,Cui2020_own} 
\nocite{aSun2020_own}
\end{publishedcontent}

\chapter*{Contents}
\begin{center}
\SingleSpacing
\vskip 35pt
\SingleSpacing 
\boolfalse {citerequest}\boolfalse {citetracker}\boolfalse {pagetracker}\boolfalse {backtracker}\relax 
\defcounter {refsection}{0}\relax 
\contentsline {chapter}{{Acknowledgements}}{iii}{chapter*.1}%
\defcounter {refsection}{0}\relax 
\contentsline {chapter}{{Abstract}}{vi}{chapter*.2}%
\defcounter {refsection}{1}\relax 
\contentsline {chapter}{{Published Content and Contributions}}{viii}{chapter*.3}%
\defcounter {refsection}{0}\relax 
\contentsline {chapter}{Contents}{ix}{section*.4}%
\defcounter {refsection}{0}\relax 
\contentsline {chapter}{List of Figures}{xii}{section*.5}%
\defcounter {refsection}{0}\relax 
\contentsline {chapter}{List of Tables}{xviii}{section*.6}%
\defcounter {refsection}{0}\relax 
\contentsline {chapter}{\chapternumberline {1}Introduction}{1}{chapter.1}%
\defcounter {refsection}{0}\relax 
\contentsline {section}{\numberline {1.1}Finite temperature algorithms}{3}{section.1.1}%
\defcounter {refsection}{0}\relax 
\contentsline {subsection}{Direct evaluation of the trace}{4}{equation.1.1.3}%
\defcounter {refsection}{0}\relax 
\contentsline {subsection}{Imaginary time evolution}{7}{Item.13}%
\defcounter {refsection}{0}\relax 
\contentsline {section}{\numberline {1.2}Summary of research}{13}{section.1.2}%
\defcounter {refsection}{0}\relax 
\contentsline {chapter}{\chapternumberline {2}Finite temperature density matrix embedding theory}{18}{chapter.2}%
\defcounter {refsection}{0}\relax 
\contentsline {section}{\numberline {2.1}Abstract}{18}{section.2.1}%
\defcounter {refsection}{0}\relax 
\contentsline {section}{\numberline {2.2}Introduction}{18}{section.2.2}%
\defcounter {refsection}{0}\relax 
\contentsline {section}{\numberline {2.3}Theory}{20}{section.2.3}%
\defcounter {refsection}{0}\relax 
\contentsline {subsection}{Ground state DMET}{20}{section.2.3}%
\defcounter {refsection}{0}\relax 
\contentsline {subsubsection}{DMET bath construction}{21}{equation.2.3.1}%
\defcounter {refsection}{0}\relax 
\contentsline {subsubsection}{Embedding Hamiltonian}{23}{equation.2.3.7}%
\defcounter {refsection}{0}\relax 
\contentsline {subsubsection}{Self-consistency}{24}{equation.2.3.10}%
\defcounter {refsection}{0}\relax 
\contentsline {subsection}{Ground-state expectation values}{25}{equation.2.3.14}%
\defcounter {refsection}{0}\relax 
\contentsline {subsection}{Finite temperature DMET}{25}{equation.2.3.15}%
\defcounter {refsection}{0}\relax 
\contentsline {subsubsection}{Finite temperature bath construction}{26}{equation.2.3.15}%
\defcounter {refsection}{0}\relax 
\contentsline {subsubsection}{Thermal observables}{29}{equation.2.3.21}%
\defcounter {refsection}{0}\relax 
\contentsline {section}{\numberline {2.4}Results}{29}{section.2.4}%
\defcounter {refsection}{0}\relax 
\contentsline {subsection}{Computational details}{29}{section.2.4}%
\defcounter {refsection}{0}\relax 
\contentsline {subsection}{1D Hubbard model}{30}{section.2.4}%
\defcounter {refsection}{0}\relax 
\contentsline {subsection}{2D Hubbard model}{33}{figure.caption.15}%
\defcounter {refsection}{0}\relax 
\contentsline {section}{\numberline {2.5}Conclusions}{39}{section.2.5}%
\defcounter {refsection}{0}\relax 
\contentsline {chapter}{\chapternumberline {3}\textit {Ab initio} finite temperature density matrix embedding theory}{41}{chapter.3}%
\defcounter {refsection}{0}\relax 
\contentsline {section}{\numberline {3.1}Abstract}{41}{section.3.1}%
\defcounter {refsection}{0}\relax 
\contentsline {section}{\numberline {3.2}Introduction}{41}{section.3.2}%
\defcounter {refsection}{0}\relax 
\contentsline {section}{\numberline {3.3}\textit {Ab initio} FT-DMET}{43}{section.3.3}%
\defcounter {refsection}{0}\relax 
\contentsline {subsection}{Orbital localization}{43}{section.3.3}%
\defcounter {refsection}{0}\relax 
\contentsline {subsection}{Bath truncation and finite temperature bath}{44}{equation.3.3.1}%
\defcounter {refsection}{0}\relax 
\contentsline {subsection}{Embedding Hamiltonian}{47}{figure.caption.20}%
\defcounter {refsection}{0}\relax 
\contentsline {subsection}{Impurity solver}{49}{figure.caption.21}%
\defcounter {refsection}{0}\relax 
\contentsline {subsection}{Thermal observables}{52}{figure.caption.21}%
\defcounter {refsection}{0}\relax 
\contentsline {section}{\numberline {3.4}Results}{54}{section.3.4}%
\defcounter {refsection}{0}\relax 
\contentsline {section}{\numberline {3.5}Conclusion}{58}{section.3.5}%
\defcounter {refsection}{0}\relax 
\contentsline {chapter}{\chapternumberline {4}Finite temperature complex polarization and metal-insulator transition}{59}{chapter.4}%
\defcounter {refsection}{0}\relax 
\contentsline {section}{\numberline {4.1}Abstract}{59}{section.4.1}%
\defcounter {refsection}{0}\relax 
\contentsline {section}{\numberline {4.2}Introduction}{60}{section.4.2}%
\defcounter {refsection}{0}\relax 
\contentsline {section}{\numberline {4.3}Ground state complex polarization and electron localization }{62}{section.4.3}%
\defcounter {refsection}{0}\relax 
\contentsline {subsection}{Electron localization}{63}{equation.4.3.5}%
\defcounter {refsection}{0}\relax 
\contentsline {subsection}{Complex polarization for independent electrons}{65}{equation.4.3.17}%
\defcounter {refsection}{0}\relax 
\contentsline {section}{\numberline {4.4}Finite temperature complex polarization}{66}{section.4.4}%
\defcounter {refsection}{0}\relax 
\contentsline {section}{\numberline {4.5}Tight binding model}{69}{section.4.5}%
\defcounter {refsection}{0}\relax 
\contentsline {section}{\numberline {4.6}Hydrogen chain}{75}{section.4.6}%
\defcounter {refsection}{0}\relax 
\contentsline {section}{\numberline {4.7}Conclusion}{80}{section.4.7}%
\defcounter {refsection}{0}\relax 
\contentsline {chapter}{\chapternumberline {5}Quantum imaginary time evolution and quantum thermal simulation}{81}{chapter.5}%
\defcounter {refsection}{0}\relax 
\contentsline {section}{\numberline {5.1}Abstract}{81}{section.5.1}%
\defcounter {refsection}{0}\relax 
\contentsline {section}{\numberline {5.2}Introduction}{81}{section.5.2}%
\defcounter {refsection}{0}\relax 
\contentsline {section}{\numberline {5.3}Quantum imaginary-time evolution}{83}{section.5.3}%
\defcounter {refsection}{0}\relax 
\contentsline {section}{\numberline {5.4}Quantum Lanczos algorithm}{88}{section.5.4}%
\defcounter {refsection}{0}\relax 
\contentsline {section}{\numberline {5.5}Quantum thermal averages}{90}{section.5.5}%
\defcounter {refsection}{0}\relax 
\contentsline {section}{\numberline {5.6}Results}{92}{section.5.6}%
\defcounter {refsection}{0}\relax 
\contentsline {subsection}{Benchmarks}{95}{equation.5.6.22}%
\defcounter {refsection}{0}\relax 
\contentsline {section}{\numberline {5.7}Conclusions}{98}{section.5.7}%
\defcounter {refsection}{0}\relax 
\contentsline {appendix}{\chapternumberline {A}Appendix for Chapter~\ref {chp:dmet} and Chapter~\ref {chp:hlatt}}{99}{chapter.A}%
\defcounter {refsection}{0}\relax 
\contentsline {section}{\numberline {A.1}Proof of the finite temperature bath formula}{99}{section.A.1}%
\defcounter {refsection}{0}\relax 
\contentsline {section}{\numberline {A.2}Analytic gradient of the cost function for correlation potential fitting in DMET at finite temperature}{101}{section.A.2}%
\defcounter {refsection}{0}\relax 
\contentsline {section}{\numberline {A.3}Davidson diagonalization}{102}{section.A.3}%
\defcounter {refsection}{0}\relax 
\contentsline {appendix}{\chapternumberline {B}Appendix for Chapter~\ref {chp:qite}}{104}{chapter.B}%
\defcounter {refsection}{0}\relax 
\contentsline {section}{\numberline {B.1}Representing imaginary-time evolution by unitary maps}{104}{section.B.1}%
\defcounter {refsection}{0}\relax 
\contentsline {section}{\numberline {B.2}Proof of correctness from finite correlation Length}{105}{section.B.2}%
\defcounter {refsection}{0}\relax 
\contentsline {section}{\numberline {B.3}Spreading of correlations}{110}{section.B.3}%
\defcounter {refsection}{0}\relax 
\contentsline {section}{\numberline {B.4}Parameters used in QVM and QPUs simulations}{112}{section.B.4}%
\defcounter {refsection}{0}\relax 
\contentsline {chapter}{Bibliography}{114}{section*.47}%

\end{center}
\listoffigures
\listoftables

\mainmatter

\chapter{Introduction\label{chp:intro}}

We live in an era where the computational power is one of the main driving 
forces for science and technology development. The hardware breakthroughs in
supercomputers, graphical processing unit (GPU) and quantum computers made
heavy computational tasks possible. The development in machine learning
algorithms and artificial intelligence changed the way people live 
tremendously. Many new materials and drugs are discovered via computational 
simulations, saving hundreds of laboratory hours. We believe in the 
computational power to bring us new knowledge and concepts, as well as to 
solve fundamental problems that remain unclear for decades. 
In quantum chemistry and condensed matter physics, those hard problems 
include the phase diagram of high-temperature superconductors (HTSC)
~\citep{Dagotto1994,Nikolay2010},
 the mechanism of nitrogen fixation~\citep{Hoffman2014,Cherkasov2015}, 
protein folding~\citep{Englander2014}, etc.
The barrier for efficient simulations of the above problems is usually 
either the system size is too big or the interaction is too complicated. 
The strongly correlated systems, unfortunately, have both of the above two barriers.
The hallmark of strongly correlated systems is localized orbitals such
as $d$ and $f$ orbitals, where electrons experience strong Coulomb repulsion. 
For instance, transition metal compounds usually contain strong correlations
due to the localized $3d$ orbitals. 
 Strongly correlated materials attract
tremendous interest of both experimental and theoretical researchers
because they exhibit a plethora of exotic phases or behaviors: HTSC, 
spintronic materials~\citep{Hirohata202016}, Mott insulators~\citep{Hubbard1963}, etc. Those
strongly correlated behaviors evoked novel applications such as
quantum processing units~\citep{Ladd2010}, superconducting magnets~\citep{Wilson1983,Chester1967},
and magnetic storage~\citep{Comstock2002}. Being able to simulate strongly correlated
problems and thus understand the physics behind them has been a key 
task for theoretical and computational chemists.

This thesis focuses on developing theoretical and computational approaches
to simulate strongly correlated problems at finite temperature. 
While ground state simulations provide basic information on the system
such as ground state energy and band gap, finite temperature is where
the real-life phase transitions happen. The complexity of a quantum
many-body problem can be described by a term called \textit{entanglement}.
At ground state away from the critical point, the entanglement is bounded
by the area law~\citep{Eisert2010}. However, at finite temperature, 
especially low temperature where the quantum effect is not fully dissipated 
by thermal fluctuation, the area law is no longer valid. One would expect
the entanglement strength to decay while the 
 entanglement length to grow with temperature. The interplay between the 
entanglement strength and entanglement length decides the complexity of the
system. Normally one would expect more computational efforts for finite 
temperature calculations than ground state calculations.

The complexity of finite temperature calculations can also be understood 
in the ensemble picture. Most of the physical and chemical systems can be
seen as open systems, where the thermodynamic statistics is described by
the grand canonical ensemble. In the grand canonical ensemble picture, both
energy fluctuations and particle number fluctuations are involved. 
The system at temperature $T$ is fully described by the density matrix 
\begin{equation}
\hat{\rho}(T) = e^{-(\hat{H} - \mu\hat{N})/k_BT},
\end{equation}
where $\hat{H}$
is the Hamiltonian, $\mu$ is the chemical potential, $\hat{N}$ is the
number operator and $k_B \approx 1.38\times 10^{-23} \mathrm{J}\cdot\mathrm{K}^{-1}$ is the Boltzmann constant. The partition function
 is defined as the trace of the density matrix:
 $\mathcal{Z} = \text{Tr}(\hat{\rho})$. If one choose the
eigenstates of the Hamiltonian $\hat{H}$ as the basis to perform
the trace summation, each eigenstate would participate in the statistics with
probability 
\begin{equation}\label{eq:intro_prob}
P(n, i) = e^{-(\varepsilon_i^n - \mu n)/k_BT}/\mathcal{Z},
\end{equation}
where $\varepsilon_i^n$ is the eigenvalue of the $i$th eigenstate in the
Fock space of $n$ particles. If $\varepsilon_i^n < \mu n$, $P(n,i)$ 
decreases to $1/\mathcal{N}$ as temperature rises; if 
$\varepsilon_i^n > \mu n$, $P(n,i)$ increases to $1/\mathcal{N}$ 
as temperature rises, where $\mathcal{N}$ is the total number of eigenstates. 
At $T=0$, only the ground state is involved;
as one raises the temperature, the contribution from the ground state
drops and excited states enter the ensemble. Eventually at infinite 
temperature, all states are equally involved with a probability 
$1/\mathcal{N}$. The inclusion of many excited states is the source of 
the high complexity of finite temperature simulations. For instance,
for an electronic structure problem with $L$ orbitals, where each orbital
can take four states: $|0\rangle$, $|\uparrow\rangle$, $|\downarrow\rangle$,
and $|\uparrow\downarrow\rangle$. The total number of states is $\mathcal{N} = 4^L$,
which scales exponentially with $L$.  

Albeit the high computational cost of finite temperature simulations, there
exist a variety of finite temperature algorithms that can fulfill different
computational tasks. 
Section~\ref{sec:ftalgos} presents a detailed review of current finite 
temperature algorithms. We hope this review could be helpful to researchers
who are interested in learning about or using finite temperature algorithms.
Section~\ref{sec:sumsec} provides an outline for the rest of the chapters in
this thesis.

\section{Finite temperature algorithms\label{sec:ftalgos}}
At finite temperature $T$, the grand canonical ensemble average of an operator
$\hat{O}$ is evaluated by
\begin{equation}\label{eq:average_intro}
\langle \hat{O}\rangle (T) =  \frac{\mrm{Tr}\left(e^{-(\hat{H}- \mu \hat{N})/k_BT } \hat{O}\right)}
{\mrm{Tr}\left(e^{-(\hat{H} -\mu\hat{N})/k_BT}\right)}.
\end{equation}
There are generally two approaches to design a finite temperature algorithm:
(i) directly evaluate the trace in Eq.~\eqref{eq:average_intro} by summation
over the expectation values under an orthonormal basis; (ii) imaginary 
time evolution from infinite temperature. Theoretically the two approaches
are all based on Eq.~\eqref{eq:average_intro}, so one could argue that 
there is no big difference between the two approaches. Technically, however,
the first approach usually involves exact or approximate diagonalization
of the Hamiltonian, while the latter approach does not. In the following,
we will discuss the two approaches with some example algorithms. 

\subsection{Direct evaluation of the trace}
We first discuss the non-interacting case. For a non-interacting Hamiltonian,
only one-body terms are involved, and the Hamiltonian can be simply written 
as an $L\times L$ matrix, where $L$ is the number of orbitals in the system.
For most cases, this $L\times L$ Hamiltonian matrix can be directly 
diagonalized, with eigenvalues $\varepsilon_i$ and eigenvectors 
$|\phi_i\rangle$ (molecular orbitals, MOs). A direct implementation of 
Eq.~\eqref{eq:average_intro} is to construct Slater determinants of
all possible particle numbers and evaluate the traces, where the number
of Slater determinants in the summation scales exponentially with $L$.
Luckily, for non-interacting electrons, the grand canonical density matrix 
can be evaluated by Fermi-Dirac equation
\begin{equation}\label{eq:fd_intro}
\rho = \frac{1}{1+e^{(H-\mu\mathbb{I})/k_BT}},
\end{equation}
where $\mathbb{I}$ is the identity matrix. The occupation numbers on 
MOs are the diagonal terms of the density matrix:  
$n_i = 1/(1+e^{(\varepsilon_i - \mu)/k_BT})$. Thus Eq.~\eqref{eq:average_intro}
can be rewritten as
\begin{equation}
\langle \hat{O}\rangle_{NI} (T) = \sum_{ij} \rho_{ij} \langle \phi_j |\hat{O}|\phi_i\rangle,
\end{equation}
where the subscript "NI" stands for "non-interacting". 

Finite temperature Hartree-Fock is an example of the above approach, with the
algorithm summarized in Algorithm~\ref{alg:fthf}.

\begin{algorithm}[h]
\SetAlgoLined
\vspace{0.2em}
\begin{description}[topsep=0pt,itemsep=-1ex,partopsep=1ex,parsep=1ex,leftmargin=*]
\item[] Construct the Fock matrix $F$ from the Hamiltonian.
Define $F'$ as identity.\\
\While{$F\neq F'$}{
\begin{enumerate}[topsep=0pt,itemsep=-1ex,partopsep=1ex,parsep=1ex,leftmargin=*]
\item Store the Fock matrix into $F' = F$;
\item Diagonalized $F$ to get MO energies and coefficients;
\item Calculate the chemical potential $\mu$ by minimizing 
$(N_{\text{elec}} -  \sum_i n_i)^2$, where $N_{\text{elec}}$ is the
target electron number and $n_i$ is the occupation number of the $i$th MO;
\item Calculate density matrix $\rho$ from Eq.~\eqref{eq:fd_intro} by 
substituting $H$ with $F$;
\item Evaluate the new Fock matrix $F$ from the density matrix $\rho$ as in
ground state Hartree-Fock algorithm;
\end{enumerate}
}
\item[] Evaluate thermal observables with converged $\rho$.
\end{description}
\caption{Finite temperature Hartree-Fock algorithm}\label{alg:fthf}
\end{algorithm}

Note that in above algorithm, the convergence criteria can also be the 
 density matrix or MO energies.

For the interacting case, a naive approach is exact diagonalization
(ED), where all eigenstates of the Hamiltonian $\hat{H}$ are explicitly
calculated and the thermal average of an observable $\hat{A}$ is evaluated 
by
\begin{equation}
\langle \hat{O}\rangle (T) = \frac{\sum_{n,i}\langle \psi_i^n | \hat{O} 
e^{-(\varepsilon_i^n - \mu n)/k_BT}|\psi_i^n\rangle}
{\sum_{n,i}\langle \psi_i^n | e^{-(\varepsilon_i^n - \mu n)/k_BT}|\psi_i^n\rangle},
\end{equation}
where $|\psi_i^n\rangle$ is the $i$th eigenstate in the Fock space with 
$n$ particles. The algorithm of ED is described in 
Algorithm~\ref{alg:intro_ed}. The expense of ED scales exponentially
with the number of orbitals $L$, and thus is only limited to small systems. 
For electronic systems with two spins, the maximum $L$ is $\sim 8$. Therefore,
nearly no meaningful calculations can be done with ED.

\begin{algorithm}[H]
\SetAlgoLined
\vspace{0.2em}
\begin{description}[topsep=0pt,itemsep=-1ex,partopsep=1ex,parsep=1ex,leftmargin=*]
\item[] $\mathcal{Z} = 0, O = 0$;
\item[] \For{$n_a$ in $[0, L]$}{
\For{$n_b$ in $[0, L]$}{
\begin{enumerate}[topsep=0pt,itemsep=-1ex,partopsep=1ex,parsep=1ex]
\item Construct Hamiltonian $H(n_a, n_b)$;
\item Diagonalize $H(n_a, n_b)$ to get eigenvalues $\varepsilon_i^{n_a, n_b}$ and
eigenstates $\{|\psi_i^{n_a, n_b}\rangle\}$;
\item Evaluate $\mathcal{Z}^{n_a, n_b} = \sum_i e^{-(\varepsilon_i^{n_a, n_b} - \mu (n_a + n_b))/k_BT}$ and $O^{n_a, n_b} = \sum_i e^{-(\varepsilon_i^{n_a, n_b} - \mu (n_a + n_b))/k_BT} \langle \psi_i^{n_a, n_b} | \hat{O} | \psi_i^{n_a, n_b}\rangle$;
\item $\mathcal{Z}$ += $\mathcal{Z}^{n_a, n_b}$ ; $O$ += $O^{n_a, n_b}$;
\end{enumerate}
}
}
\item[] $\langle O \rangle (T) = O/\mathcal{Z}$ 
\end{description}
\caption{Finite temperature exact diagonalization}\label{alg:intro_ed}
\end{algorithm}

One could reduce the computational cost by only including low-lying states 
in the ensemble.
Davidson diagonalization~\citep{Davidson1975} and Lanczos algorithm~\citep{Lanczos1950} are two
methods that construct a smaller subspace of the Hilbert space containing
the low-lying states.
In the Lanczos algorithm, starting with a normalized vector $|\phi_0\rangle$, one 
could generate a set of orthonormal Lanczos vectors $\{|\phi_m\rangle, m = 0, ..., M\}$ to span the Krylov space $\{|\phi_0\rangle, \hat{H} |\phi_0\rangle, ..., \hat{H}^M |\phi_0\rangle\}$ with the following steps:
\begin{enumerate}
\item Apply $\hat{H}$ to $|\phi_0\rangle$ and split the resulting vector into
$a_0|\phi_0\rangle$ and $b_1 |\phi_1\rangle$ with $|\phi_1\rangle \perp |\phi_0\rangle$
\begin{equation}
\hat{H}|\phi_0\rangle = a_0 |\phi_0\rangle + b_1 |\phi_1\rangle,
\end{equation}
where $a_0 = \langle \phi_0 | \hat{H}|\phi_0\rangle$ and $b_1$ is chosen so
that $|\phi_1\rangle$ is normalized.
\item Iteratively apply $\hat{H}$ to $|\phi_i\rangle, i = 1,...,M$ to get
\begin{equation}
|\phi_i\rangle = b_i |\phi_{i-1}\rangle + a_i |\phi_i\rangle + b_{i+1}
|\phi_{i+1}\rangle,
\end{equation} 
where the iteration stops at $i=M$ with $b_{M+1} = 0$ or when $b_i = 0$ with
$i < M$.
\item Construct the matrix representation of the Krylov space Hamiltonian as
\begin{equation}
H' = \begin{bmatrix}
a_0 & b_1 & 0 & \cdots & 0 \\
b_1 & a_1 & b_2 & \cdots & 0\\
0 & b_2 & a_2 & \cdots & 0\\
 &  &  & \ddots &  \\
0 & 0 & 0 & \cdots & a_M
\end{bmatrix},
\end{equation}
where we choose $b_i$ to be real numbers. 
\item Diagonalize the Krylov Hamiltonian $H'$ to get the eigenvalues and
eigenvectors in the basis of $\{|\phi_i\rangle, i = 0, ..., M\}$.
Note that the 
$H'$ is a tridiagonal matrix, and the typical cost to diagonalize an 
$M\times M$ symmetric tridiagonal matrix is $\mathcal{O}(M^2)$, while the 
cost of diagonalizing a random symmetric $M\times M$ matrix is $\mathcal{O}(M^3)$.
\end{enumerate}
The quality of the Krylov space depends heavily on the initial state 
$|\phi_0\rangle$. For instance, if $|\phi_0\rangle$ has zero overlap
with the ground state, then the leading part of the trace summation at low
temperature is missing and the result is inaccurate. One could 
sample initial states and take the average of the sample to get a better approximation. Note that the above routine is for a system with fixed particle numbers,
so to fulfill the grand canonical ensemble, one should also sample the
Fock spaces with all possible particle numbers. For low temperature 
simulation, sampling particle numbers near the targeted electron number
 is usually enough. We also provide a summary of the Davidson algorithm in 
Appendix~\ref{sec:apdx_davidson}

\subsection{Imaginary time evolution}
The imaginary time evolution operator is defined as
$e^{-\beta \hat{H}}$, where $\beta$ is called the imaginary time. This 
approach can be used in both ground state search and the finite 
temperature calculations. In the latter case, $\beta$ has a physical 
meaning: the
inverse temperature $\beta = 1/k_B T$. At $\beta = 0$ (infinite temperature),
the density matrix $\hat{\rho}(\beta = 0)$ is proportional to the identity operator
 and the
system is maximally entangled. 
Differentiating $\hat{\rho}(\beta) = e^{-\beta \hat{H}}$ with respect to $\beta$ is described by the Bloch equation
\begin{equation}\label{eq:bloch_eq_intro}
\frac{\mrm{d}\hat{\rho}}{\mrm{d}\beta} = -\hat{H}\hat{\rho}
= -\frac{1}{2}(\hat{H}\hat{\rho} + \hat{\rho}\hat{H}),
\end{equation}
where the last equal sign used $[\hat{H}, e^{-\beta \hat{H}}] = 0$. The
solution to Eq.~\eqref{eq:bloch_eq_intro} can also be written in a
symmetrized form
\begin{equation}\label{eq:dm_evolve_intro}
\hat{\rho}(\beta) = e^{-\beta \hat{H}/2} \hat{\rho}(\beta = 0) e^{-\beta \hat{H}/2}.
\end{equation}

Density matrix quantum Monte Carlo (DMQMC)~\citep{Blunt2014,Petras2020} is an example of the above
approach. We introduce an energy shift $\Delta E$ to the original Hamiltonian
$\hat{H}$, and Eq.~\eqref{eq:bloch_eq_intro} turns into
\begin{equation}
\frac{\mrm{d}\hat{\rho}}{\mrm{d}\beta}= -\frac{1}{2}(\hat{T}\hat{\rho} + \hat{\rho}\hat{T}),
\end{equation}
where $\hat{T} = \hat{H} - \Delta E\hat{\mathbb{I}}$, and $\Delta E$ is 
slowly adjusted to control the population. A similar concept
of $\Delta E$ is also employed in diffusion Monte Carlo (DMC)~\citep{Hammond1994,Foulkes2001}
and full configuration interaction quantum Monte
Carlo (FCIQMC)~\citep{Booth2009,Booth2013}.

The general form of $\hat{\rho}(\beta)$ can be written as a linear combination
\begin{equation}
\hat{\rho}(\beta) = \sum_{ij}\rho_{ij}(\beta) |\psi_i\rangle\langle\psi_j|,
\end{equation}
where $\{|\psi\rangle\}$ forms a complete orthonormal basis of the Hilbert
space. Here we choose $\{|\psi\rangle\}$ to be Slater determinants.
$\{|\psi_i\rangle\langle\psi_j|\}$ forms a basis for operators in this
Hilbert space, denoted as $\{X_{ij}\}$ for simplicity. Here we introduce 
a term "psips"~\citep{Anderson1975,Anderson1976}: each psip resides on a particular basis operator $X_{ij}$ or
site $(i,j)$ with "charge" $p_{ij} = \pm 1$. The imaginary time evolution is divided into $N_{\beta}$
tiny steps: $\tau = \beta / N_{\beta}$. For each step, DMQMC
loops over the sample of psips and perform the following steps:
\begin{enumerate}
\item \textbf{Spawning along columns of the density matrix}. Starting from a
 psip on site $(i,j)$, calculate the transition probabilities
$\frac{1}{2}|T_{ik}|\tau $ to spawn onto sites $(k,j)$ with $T_{ik}\neq0$
and $i\neq k$. If the spawning attempt is accepted, a psip is born 
on site $(k,j)$ with charge $q_{kj} = \mrm{sign}(T_{ik})q_{ij}$.
\item \textbf{Spawning along rows of the density matrix}. Repeat the above step to 
spawn psips from site $(i,j)$ onto sites $(i,k)$.
\item \textbf{Psips replication and death}. Evaluate the diagonal sum $d_{ij} = T_{ii} + T_{jj}$ for site $(i,j)$: if $d_{ij} > 0$, a copy of the psip on 
site $(i,j)$ is added to the pool with probability $p_d = \frac{1}{2} 
|d_{ij}|\tau$; if $d_{ij} < 0$, the psip on site $(i,j)$ is removed
with probability $p_d$.
\item \textbf{Annihilation}. Pairs of psips on the same site with opposite charges
are removed from the pool.
\end{enumerate}
The distribution of the psips generated by repeating $N_{\beta}$ times the
above procedure provides an approximation of the unnormalized density matrix 
at $\beta$. The thermal average of an observable $\hat{O}$ is then calculated
by
\begin{equation}
\langle \hat{O}\rangle (\beta) = \frac{\sum_{ij}\bar{q}_{ij}O_{ji}}
{\sum_i \bar{q}_{ii}},
\end{equation}
where $\bar{q}$ is an average of density matrices evaluated from a large 
number of repeats of the above imaginary time evolution process.

The main concern of the above approach is the size of the density matrix. The 
number of independent elements in the density matrix is 
$\sim \mathcal{N}(\mathcal{N}+1)/2$, where $\mathcal{N}$ is the 
Hilbert space size which grows exponentially with the system size.
Even with heavy parallelization, DMQMC still suffers from considerable
computational cost. Moreover, the accuracy of DMQMC becomes worse as 
the temperature lowers, limiting this method to applications for intermediate
or high temperature calculations.

One could circumvent evolving a density matrix by artificially constructing 
an enlarged space in which the density matrix of the original system can
be obtained by partial trace from the pure state solution of the enlarged 
system. The above approach is called purification~\citep{Palser1998}. The idea of purification
is the following: suppose a system $\mathcal{S}$ can be bipartitioned into 
two smaller systems $\mathcal{A}$ and $\mathcal{B}$; then a state 
$|\Psi\rangle$ in $\mathcal{S}$ can be written as 
\begin{equation}
|\Psi\rangle = \sum_{ij} c_{ij}|A_i\rangle|B_j\rangle,
\end{equation}
where $\{|A_i\rangle\}$ and $\{|B_i\rangle\}$ are orthonormal bases
of $\mathcal{A}$ and $\mathcal{B}$ respectively, and $\sum_{ij}|c_{ij}|^2 = 1$.
 The density matrix of
the total system is $\hat{\rho}_{\mathcal{S}} = |\Psi\rangle\langle\Psi|$,
and the density matrix of $\mathcal{A}$ can be obtained by
\begin{equation}\label{eq:purify_intro}
\begin{split}
\hat{\rho}_{\mathcal{A}} &= \text{Tr}_{\mathcal{B}}\left(\hat{\rho}_{\mathcal{S}}\right) \\
&= \sum_k \langle B_k| \left(\sum_{ij} c_{ij} |A_i\rangle|B_j\rangle \right)
\left(\sum_{i'j'} c^*_{i'j'} \langle A_{i'}|\langle B_{j'}|\right)
| B_k\rangle\\
&= \sum_{ii'} \left(\sum_k c_{ik}c^*_{i'k}\right)|A_i\rangle\langle A_{i'}|\\
&= \sum_{ii'} w_{ii'} |A_i\rangle\langle A_{i'}|.
\end{split}
\end{equation}
Eq.~\eqref{eq:purify_intro} has the form of a density matrix operator, with 
matrix elements $w_{ii'}$.  
The matrix $\mathbf{w}$ has the following properties: (i) Hermitian;
(ii) diagonal terms $w_{ii} = \sum_{k} |c_{ik}|^2 \geq 0$; 
and (iii) $\sum_i w_{ii} = 1$. Based on the above properties, we confirm that
$\mathbf{w}$ is a density matrix. 

Given a density matrix $\hat{\rho}_{\mathcal{A}}$ and basis $\{|A_i\rangle\}$, 
one could also find a set of $\{|B_i\rangle\}$ to construct a state $|\Psi\rangle$ such that $\hat{\rho}_{\mathcal{A}}$ can be derived from the partial trace
of $|\Psi\rangle\langle\Psi|$ with $\{|B_i\rangle\}$. The above procedure is 
called purification. Note that for a system $\mathcal{A}$, there exist
more than one purified state $|\Psi\rangle$, and one could choose certain
$\{|B_i\rangle\}$ and $|\Psi\rangle$ for their convenience. At infinite 
temperature, the density matrix of subspace $\mathcal{A}$ can be written as
\begin{equation}
\hat{\rho}_{\mathcal{A}}(\beta = 0) = \frac{1}{N_{\mathcal{A}}}\sum_{i}
|A_i\rangle\langle A_i|,
\end{equation}
where $N_{\mathcal{A}}$ is the size of $\mathcal{A}$. One could introduce
a set of ancillary orbitals $\{|\tilde{A}_i\rangle\}$ which are copies of 
$\{|A_i\rangle\}$ and define the purified state as
\begin{equation}
|\Psi(\beta = 0)\rangle = \frac{1}{\sqrt{N_{\mathcal{A}}}}\sum_{i}
|A_i\rangle|\tilde{A}_i\rangle.
\end{equation}
It is easy to prove that $\hat{\rho}_{\mathcal{A}}(\beta = 0) $ can be derived as the partial trace of $|\Psi(\beta = 0)\rangle\langle \Psi(\beta = 0)|$ 
with  $\{|\tilde{A}_i\rangle\}$. 

Now one could apply imaginary time evolution onto $|\Psi(\beta = 0)\rangle$
instead of $\hat{\rho}_{\mathcal{A}}(\beta = 0)$,
\begin{equation}
|\Psi(\beta)\rangle \propto e^{-\beta(\hat{H}\otimes \hat{\mathbb{I}})}
|\Psi(\beta = 0)\rangle,
\end{equation}
where $\hat{H}$ is the original Hamiltonian on $\mathcal{A}$ and 
$\hat{\mathbb{I}}$ is the identity operator on $\tilde{\mathcal{A}}$. 
The thermal average of operator $\hat{O}$  in $\mathcal{A}$ is simply evaluated as 
\begin{equation}
\langle \hat{O} \rangle (\beta) = \langle \Psi(\beta)| \hat{O}\otimes \hat{\mathbb{I}} |\Psi(\beta)\rangle.
\end{equation}
The most time consuming step in the above procedure is applying $e^{-\beta\hat{H}}$
onto $|\Psi\rangle$. A commonly accepted way to deal with $e^{-\beta \hat{H}}$
is Trotter-Suzuki decomposition. Again we divide $\beta$ into $N_{\beta}$
tiny steps $\tau = \beta/N_{\beta}$, and 
$e^{-\beta\hat{H}} = \left(e^{-\tau \hat{H}}\right)^{N_{\beta}}$, where
we assumed that $\hat{H}$ does not change with temperature. 
Suppose $\hat{H}$ can be decomposed into $\hat{H} = \hat{H}_1 + \hat{H}_2
+ \cdots + \hat{H}_n$, according to Trotter-Suzuki approximation
\begin{equation}
e^{-\tau \hat{H}} = e^{-\tau \hat{H}_1/2}e^{-\tau \hat{H}_2/2} \cdots 
e^{-\tau \hat{H}_2/2} e^{-\tau \hat{H}_1/2} + \mathcal{O}(\tau^3).
\end{equation}

Another more accurate approach is the 4th order Runge-Kutta (RK4) algorithm, which
is based on solving the differentiation form of the imaginary time evolution
\begin{equation}
\frac{\mrm{d}|\Psi\rangle}{\mrm{d}\beta} = -\hat{H} |\Psi\rangle.
\end{equation}
Let $t_m = m\tau$, then one update step in RK4 algorithm is
\begin{equation}
|\Psi(t_{m+1})\rangle = |\Psi(t_{m})\rangle = \frac{1}{6}\tau 
(k_1 + 2k_2 + 2k_3 + k_4),
\end{equation}
with initial condition $t_0 = 0$ and $|\Psi(t_0)\rangle = 
|\Psi(\beta=0)\rangle$. $k_i (i=1,2,3,4)$ are defined from the $m$th step
values
\begin{equation}
\begin{split}
k_1 &= -\hat{H} |\Psi(t_{m})\rangle, \\
k_2 &= -\hat{H} \left(|\Psi(t_{m})\rangle + \frac{\tau}{2}k_1 \right),\\
k_3 &= -\hat{H} \left(|\Psi(t_{m})\rangle + \frac{\tau}{2}k_2 \right),\\
k_4 &= -\hat{H} \left(|\Psi(t_{m})\rangle + \tau k_3 \right).\\
\end{split}
\end{equation}
The error of one RK4 iteration scales as $\mathcal{O}(\tau^5)$, and 
the accumulated error is $\mathcal{O}(\tau^4)$.

An example which adopted the purification approach is the finite temperature
density matrix renormalization group (FT-DMRG)~\citep{Feiguin2005} algorithm. The matrix product
state (MPS) is defined with alternating physical and ancillary sites, as shown
in Fig.~\ref{fig:ftdmrg_mps}. The operators are arranged in the same 
alternating manner. The imaginary time evolution routine then follows
the same procedure as previously developed time-evolving block decimation
(TEBD)~\citep{Verstraete2004,Vidal2004}.

\begin{figure}
\centering
\justify
\includegraphics[width=1\textwidth]{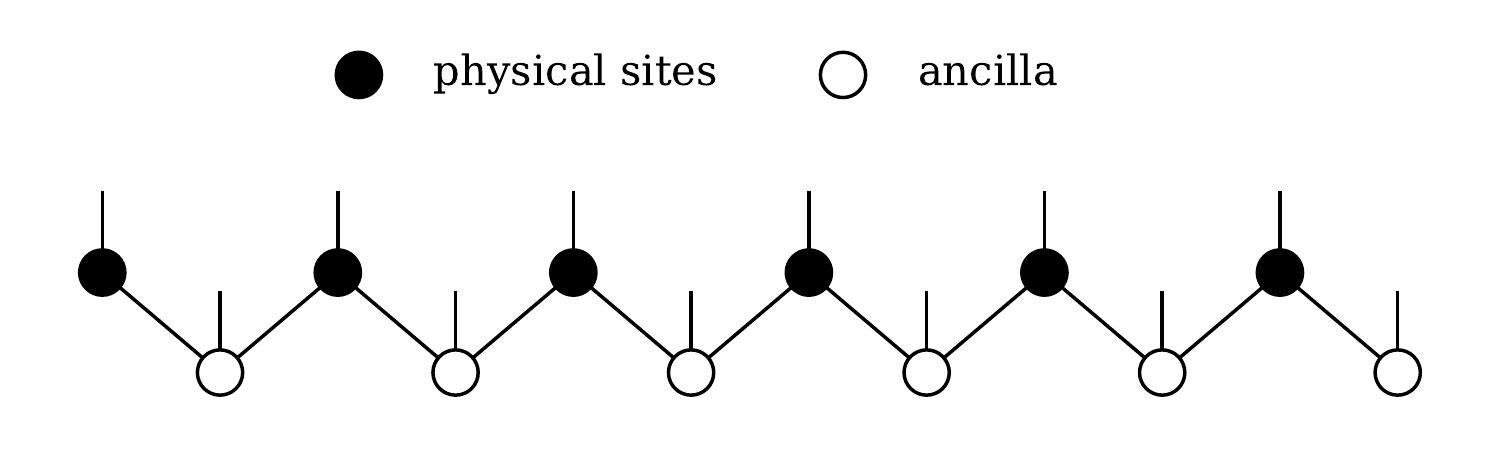}
\caption{Structure of the matrix product states used in the purification approach
of the finite temperature density matrix renormalization group algorithm. 
 }\label{fig:ftdmrg_mps}
\index{figures}
\end{figure}

In addition to the examples mentioned above, there exist several other finite
temperature algorithms. Minimally entangled typical thermal states (METTS)
algorithm~\citep{White2009,Stoudenmire2010} which will be mentioned in Chapter~\ref{chp:qite} is 
another fulfillment of finite temperature DMRG based on importance sampling.
Compared to the purification approach, METTS requires a smaller bond
dimension and the statistical error decreases as the temperature lowers.
However, METTS has only been applied to spin systems, because the original
 formulation does not allow the variation of electron numbers and thus is
limited to canonical ensemble. One could potentially adapt METTS for
a grand canonical ensemble by sampling the electron numbers or introducing
a set of initial states which do not preserve the electron numbers.
Determinantal quantum Monte Carlo (DQMC)~\citep{Blankenbecler1981} 
and finite temperature auxiliary field quantum Monte Carlo (FT-AFQMC)~\citep{Liu2018,He2019}
are two other finite temperature algorithms based on importance sampling
of Slater determinants. Both of the two QMC methods utilizes
Hubbard-Stratonovich transformation to transform the many-body 
imaginary time evolution operator to single-particle operators 
expressed as free fermions coupled to auxiliary fields. AFQMC applies
a constrained path to alleviate the sign problem, yet the computational
cost is still non-negligible to reach low enough temperatures with large
system sizes. The dynamical mean-field theory (DMFT)~\citep{Georges1996,Kotliar2006} is an embedding
 method which maps a many-body lattice problem to a many-body local problem.
Since DMFT evaluates the frequencies, it can be naturally extended to 
finite temperature calculations with a finite temperature impurity solver.
As most embedding methods, DMFT results are affected by the 
finite size effect, and extrapolation to thermodynamic limit (TDL) is 
needed to remove the artifact from the finite impurity size. All the above numerical
algorithms have their pros and cons, and one could make their choices 
based on the properties of the system and evaluate the results by 
careful benchmarking.

\section{Summary of research}\label{sec:sumsec}
\begin{figure}[t!]
\centering
\begin{subfigure}[t]{0.45\textwidth}
\includegraphics[width=\textwidth]{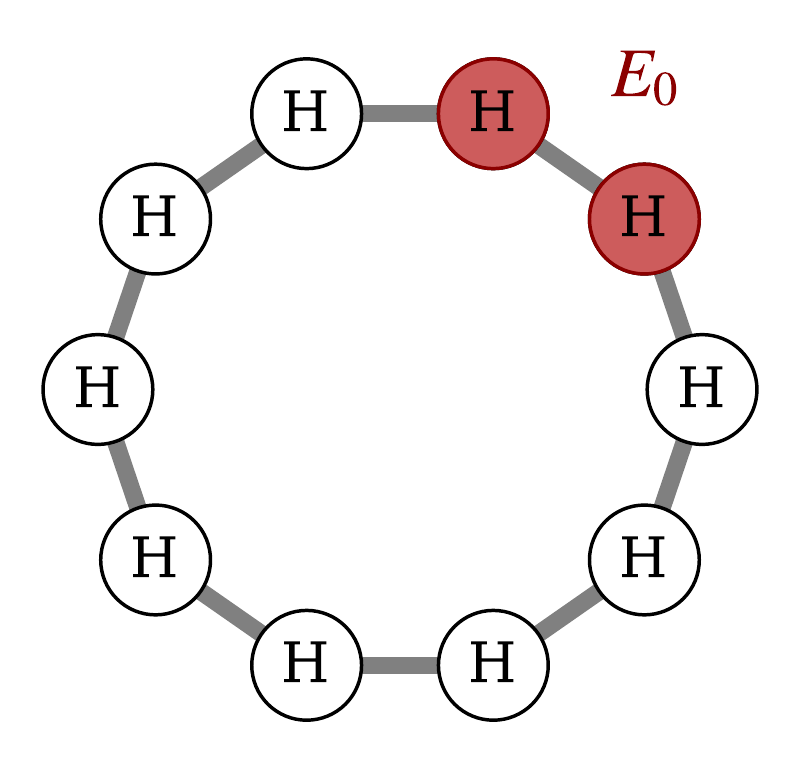}
\caption{$E_{\text{tot}} = 5E_0$}
\end{subfigure}
\hfill
\begin{subfigure}[t]{0.45\textwidth}
\includegraphics[width=\textwidth]{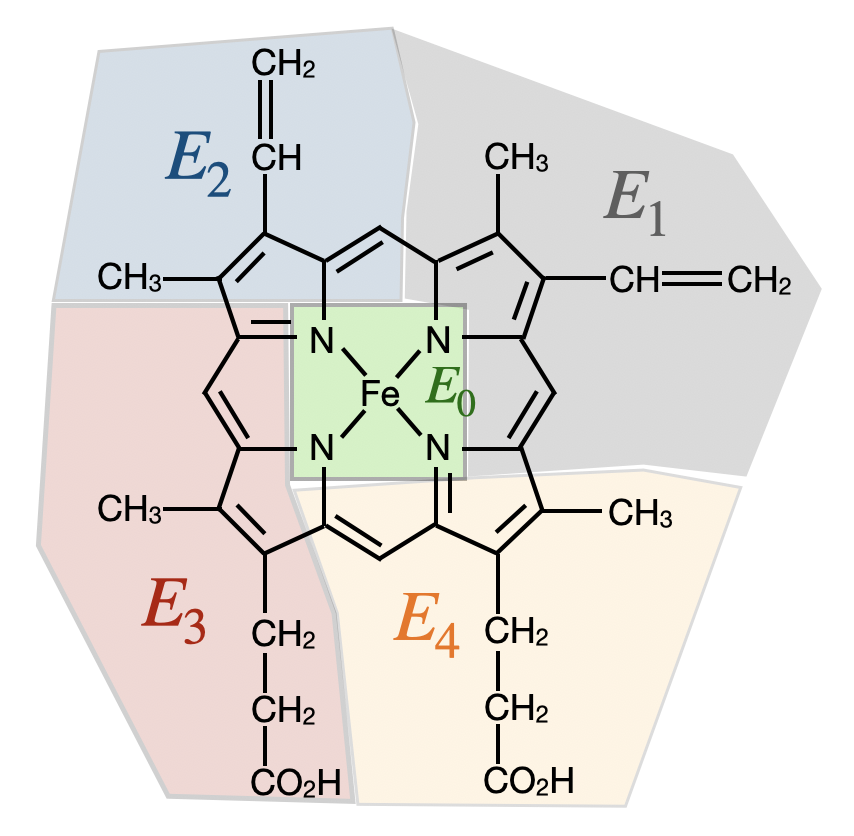}
\caption{$E_{\text{tot}} = \sum_{i=0}^4 E_i$}
\end{subfigure}
\caption{Evaluating the total energy with density matrix embedding theory.
(a) A hydrogen ring composed of $10$ atoms obeying the periodic boundary condition,
and the impurity (supercell) is two adjacent atoms. The total energy equals
the energy of the supercell times the number of supercells.
(b) A single ligand heme molecule is divided into $5$ non-overlapping
fragments, and the DMET energy of each fragment is calculated. The total energy
is the sum of energies from all fragments. 
 }\label{fig:partition_intro}
\end{figure}

This thesis provides several tools to study the finite temperature behaviors
of strongly correlated materials. First we will introduce the finite temperature density matrix embedding theory (FT-DMET) in Chapter~\ref{chp:dmet}.  
FT-DMET, as a thermal extension of ground state DMET (GS-DMET)~\citep{Knizia2012,KniziaJCTC2013,ZhengPRB2016,WoutersJCTC2016},
maps the many-body lattice thermal problem onto an impurity thermal problem.
Same as in GS-DMET, the system is divided into non-overlapping
fragments, which are defined by a set of local orbitals (LOs).
For periodic systems, the fragments are chosen as supercells and thus all 
fragments are equivalent. For systems that do not obey periodicity,
extensive observables are evaluated for each fragment and the total 
value of the observable is the summation of those from all fragments.
An illustration is shown in Fig.~\ref{fig:partition_intro} of the above two cases.
Note
that in the latter case, one should be careful when evaluating intensive
properties, which should either be defined for a specific fragment or
evaluated from global extensive properties.
This real-space partition ensures that most of the entanglement is retained 
in the fragment for systems with short correlation lengths. 
When treating one fragment, we call this fragment the \textit{impurity}
and the rest of the fragments the \textit{environment}.

To further capture
the entanglement between the impurity and the environment, we introduce 
a term called \textit{bath}. Bath in DMET is a subspace of the environment which
is directly entangled with the impurity, spanned by a set of basis called
bath orbitals. In strongly correlated systems, the correlation is highly
localized, and the entanglement entropy obeys the area law. One could imagine
that the bath orbitals mostly come from sites adjacent to the impurity.
In practice, the bath orbitals are derived from the Schmidt decomposition
of the total system wavefunction, which is initialized as the mean-field
wavefunction and optimized in a bootstrap manner. A nice property of GS-DMET
is that the number of bath orbitals generated from Schmidt decomposition
is exactly equal to the number of impurity orbitals, with the assumption
that the impurity is much smaller than the environment.

The key issue going from GS-DMET to FT-DMET is that the Schmidt decomposition
no longer works since the system cannot be described by one single 
wavefunction. In fact the finite temperature state is described by a 
density matrix of the mixed state. Remember that the Schmidt decomposition
of a wavefunction is equivalent to the singular value decomposition (SVD) of
the corresponding density matrix. In FT-DMET algorithm, we start from
the mean-field single-particle density matrix $\rho_{0}$, and apply SVD to the
impurity-environment block to generate a set of bath orbitals, as 
described in the theory part in Chapter~\ref{chp:dmet}. Note that 
since the temperature enlarges the entanglement length, one should expect
more bath orbitals to cover all impurity-environment entanglement than in
GS-DMET. To do so, we continue to apply  SVD to the impurity-environment block 
of powers of $\rho_0$ to get the rest of the bath orbitals. The algorithm is 
benchmarked with one- and two-dimensional Hubbard models, and shows
systematically improved accuracy by increasing bath or impurity size.

In Chapter~\ref{chp:hlatt}, we further extend the FT-DMET algorithm to handle
\textit{ab initio} problems. While model systems can be used to reproduce 
some of the behaviors and phases in realistic lattices, being able to perform
\textit{ab initio} simulations is key to achieve a complete understanding
of the materials. There are two technical differences between model systems
and \textit{ab initio} systems: (ii) in most of the model systems, 
site basis is used which is naturally localized, while in \textit{ab initio} 
systems the Coulomb interaction is of long range and the basis set used
is usually not localized; (ii) in model systems, the two-body interaction
form is very simple, while the two-body interaction in an \textit{ab initio} Hamiltonian
is described by a complicated rank-$4$ matrix. The above two technical difficulties
are universal for all \textit{ab initio} simulations. For \textit{ab initio} 
FT-DMET, one also needs to deal with the large embedding space due to the size
of the supercell and the basis set, which requires necessary truncation
to the bath space. Moreover, a finite temperature impurity solver that
can handle \textit{ab initio} Hamiltonian efficiently is also crucial 
for any meaningful simulations. In Chapter~\ref{chp:hlatt}, we provide
solutions to the above problems and present the \textit{ab initio} FT-DMET
algorithm. We further use this algorithm to explore properties and
 phase transitions of hydrogen lattices.

Chapter~\ref{chp:dmet} and Chapter~\ref{chp:hlatt} present an efficient
numerical tool to simulate both strongly-correlated model systems and
\textit{ab initio} systems. The next question to answer is what
order parameters we can use to capture essential thermal properties 
and phase transitions at finite temperature. In Chapter~\ref{chp:cp},
we will study one of the most common but complex phase transitions:
metal-insulator transition (MIT). We argue that compared to the band
structure theory which is widely used to distinguish metal from insulator, 
electron locality is a more universal criteria which can be used to detect
finite temperature MIT. We further introduce an order parameter named
complex polarization to measure the locality of electrons and provide
a thermofield approach to evaluate finite temperature complex polarization.
The finite temperature complex polarization formulation provides an easy 
but well-defined way to characterize MIT in any periodic materials.

In Chapter~\ref{chp:qite}, several quantum algorithms will be introduced
for both ground state and finite temperature simulations on quantum devices.
With the development of quantum computing technology, especially the hardware,
it can be foreseen that certain categories of difficult problems in classical
simulations can be solved with less effort on a quantum device. 
The bridge to connect chemical problems and successful quantum simulations 
is efficient quantum algorithms for noisy intermediate-scale quantum (NISQ) devices.
Several quantum algorithms have been developed
to carry out quantum chemical simulations in the past decades, including
quantum phase estimation (QPE)~\citep{Farhi_MIT_2000,Kitaev_arxiv_1995} and hybrid quantum-classical variational
algorithms such as quantum approximate optimization algorithm (QAOA)~\citep{Farhi_MIT_2014,Otterbach_arxiv_2017,Moll_QST_2018}
and variational quantum eigensolver (VQE)~\citep{Peruzzo_Nature_2013,McClean_NJP_2016,grimsley2018adapt}. While the above algorithms
have many advantages as advertised, they all require quantum or classical
resources that can easily exceed the capacity of current devices.
In Chapter~\ref{chp:qite}, the key quantum algorithm that will be introduced
is called quantum imaginary time evolution (QITE). As mentioned in 
Section~\ref{sec:ftalgos}, imaginary time evolution
is an efficient algorithm to find the ground state. If the initial state is
the identity density matrix at infinite temperature, then one could evaluate 
the density matrix and thus the thermal observables at any temperature. 

The conflict of implementing imaginary time evolution on a quantum device is that
the imaginary evolution operator $e^{\beta \hat{H}}$ is a non-unitary operator,
while only unitary operators are allowed on a quantum device. We present an 
approach to reproduce a non-unitary operator with a rescaled unitary 
operation on an enlarged domain. This approach could be flexibly performed
both exactly and approximately, depending on the computational resources 
available. The result is systematically improved and converges rapidly
 by increasing the size of the unitary domain. The convergence to the ground
state can be further accelerated by the quantum Lanczos algorithm (QLanczos).
QLanczos constructs a Krylov subspace with the intermediate states in 
QITE simulation, and then diagonalizes the Hamiltonian in the subspace 
representation to get a better approximation of the ground state. Unlike the
classical Lanczos algorithm mentioned in Section~\ref{sec:ftalgos} where
the Krylov subspace is spanned by $\{|\psi_0\rangle, H|\psi_0\rangle, ..., 
H^m|\psi_0\rangle\}$, the Krylov space in QLanczos is spanned by
$\{|\psi_0\rangle, e^{2\tau \hat{H}}|\psi_0\rangle, ..., e^{2m\tau \hat{H}}|\psi_0\rangle\}$. The Hamiltonian in the quantum Krylov space can be 
collected from the energy measurement at each step for free and no additional measurement is needed. 

The third algorithm introduced in Chapter~\ref{chp:qite} is the quantum
minimally entangled typical thermal states (QMETTS) algorithm. While the
first two algorithms (QITE and QLanczos) can be applied to both 
ground state and finite temperature calculations, QMETTS is designed 
in particular for finite temperature simulations. QMETTS samples a set
of minimally entangled thermal states under the thermal statistics by
a repeated imaginary time evolving and then collapsing onto the product states 
routine. The advantage
of the QMETTS algorithm is that the imaginary time evolution (fulfilled by
QITE) always starts from a product state, so that the entanglement will not
grow too large even at very low temperature. We present both
 classical and quantum simulations on a variety of problems using the above 
three quantum algorithms as examples and tests.

\chapter{Finite temperature density matrix embedding theory\label{chp:dmet}}
\section{Abstract}
We describe a formulation of the density matrix embedding theory
at finite temperature. We present a generalization of the
ground-state bath orbital construction that embeds a mean-field finite-temperature density matrix up to a given order in the Hamiltonian, or the Hamiltonian
up to a given order in the density matrix. We assess the performance
of the finite-temperature density matrix embedding on the 1D Hubbard model both at
half-filling and away from it, and the 2D Hubbard model at half-filling,
comparing to exact data where available, as well as results from finite-temperature
density matrix renormalization group, 
 dynamical mean-field theory,
 and dynamical cluster approximations. 
The accuracy of finite-temperature
density matrix embedding appears comparable to that of the ground-state theory, with at most a modest increase in bath size,
and competitive with that of cluster dynamical mean-field theory.

\section{\label{sec:intro_dmet}Introduction}
The numerical simulation of strongly correlated electrons is key
to understanding the quantum phases that derive from
electron interactions, ranging from the Mott transition~\citep{MottRMP1968,BullaPRL1999,BelitzRMP1994,QazilbashScience2007} to high 
temperature superconductivity~\citep{AndersonScience1987,LakeScience2001,LakeNature2002}. Consequently, many numerical methods have been developed for this task. In the setting of quantum lattice models, quantum
embedding methods~\citep{SunACR2016}, such as dynamical mean-field theory (DMFT)\cite{KotliarRMP2006,GeorgesRMP1996,LichtensteinPRL2001,LichtensteinPRB2000,ZgidJCP2011} and density matrix embedding theory (DMET)\cite{Knizia2012,KniziaJCTC2013,WoutersJCTC2016,ZhengPRB2016,ZhengScience2017,BulikPRB2014,BulikJCP2014},
have proven useful in obtaining insights into complicated quantum phase diagrams.
These methods are based on an approximate mapping from the full interacting quantum lattice to a simpler
self-consistent quantum impurity problem, consisting of a few sites of the original lattice
coupled to an explicit or implicit bath. In this way, they avoid
treating an interacting quantum many-body problem in the thermodynamic limit.

The current work is concerned with the extension of DMET to finite temperatures.
DMET so far has mainly been applied in its ground-state formulation (GS-DMET), where it has achieved some success, particularly in applications to quantum phases where the order is associated with large unit cells~\citep{ZhengPRB2016,ZhengScience2017,ChenPRB2014}.
The ability to treat large unit cells at relatively low cost compared to other quantum embedding methods is due to the
computational formulation of DMET, which is based on modeling 
the ground-state impurity density matrix, a time-independent quantity
accessible to a wide variety of efficient quantum many-body methods.
Our formulation of finite-temperature DMET (FT-DMET) is based on the simple structure of GS-DMET,
but includes the possibility to generalize the bath so as to better capture the finite-temperature impurity density matrix.
Bath generalizations have previously been used to extend GS-DMET to the calculation
of spectral functions and other dynamical quantities~\citep{BoothPRB2015,fertitta2019energy}. Analogously to GS-DMET, since one only needs to
compute time-independent observables,
finite-temperature DMET can be paired with the wide variety of quantum impurity solvers which can provide the finite-temperature
density matrix.

We describe the theory of FT-DMET in  Section \ref{sec:theory_dmet}. In Section \ref{sec_results_dmet} we
carry out numerical calculations on the 1D and 2D Hubbard models, using exact diagonalization (ED) and the finite-temperature density matrix
renormalization group (FT-DMRG)~\citep{FeiguinPRB2005} as quantum impurity solvers. We benchmark our results against those from the Bethe ansatz in 1D, and
DMFT and the dynamical cluster approximation (DCA) in 2D, and also explore the quantum impurity derived N\'eel transition in the 2D Hubbard model.
We finish with brief conclusions about prospects for the method in \ref{sec:conc_dmet}.

\section{\label{sec:theory_dmet}Theory}

\subsection{Ground state DMET}\label{sec:gsdmet}

In this work, we exclusively discuss DMET in lattice models (rather than for 
\emph{ab initio} simulations~\citep{WoutersJCTC2016,KniziaJCTC2013,BulikJCP2014,cui2019efficient}).
As an example of a lattice Hamiltonian, and one that we will use in numerical simulations,
we define the Hubbard model~\citep{HubbardPRS1963,GutzwillerPRL1963}, 
\begin{equation}\label{eq:theory-hubham}
\hat{H} = -t\sum_{\langle i,j\rangle,\sigma} \hat{a}^{\dagger}_{i\sigma}
\hat{a}_{j\sigma} - \mu \sum_{i,\sigma} \hat{a}^{\dagger}_{i\sigma}
\hat{a}_{i\sigma} + U\sum_{i}\hat{n}_{i\uparrow}\hat{n}_{i\downarrow}
\end{equation}
where $\hat{a}^{\dagger}_{i\sigma}$ creates an electron with spin $\sigma$
on site $i$ and $\hat{a}_{i\sigma}$ annihilates it;
$\hat{n}_{i\sigma} = \hat{a}^{\dagger}_{i\sigma}\hat{a}_{i\sigma}$;
$t$ is the nearest-neighbour (denoted $\langle i,j\rangle$) hopping amplitude, here set to $1$;
$\mu$ is a chemical potential; and $U$ is the on-site repulsion.

The general idea behind a quantum embedding method such as DMET
is to approximately solve the interacting problem in the large lattice by dividing the lattice into small
fragments or impurities~\citep{SunACR2016}. (Here we will assume that the impurities are non-overlapping).
The main question is how to treat the coupling and entanglement between the impurities.
In DMET, other fragments around a given impurity are modeled by a set of bath orbitals.
The bath orbitals are constructed to exactly reproduce the entanglement between
the impurity and environment when the full lattice is treated at a mean-field level (the so-called ``low-level'' theory).
The impurity together with its bath orbitals then constitutes a small embedded quantum impurity problem, 
which can be solved with a ``high-level'' many-body method. The low-level lattice wavefunction 
 and the high-level embedded impurity wavefunction are made approximately
consistent, by enforcing self-consistency of the single-particle density matrices
of the impurities and of the lattice. This constraint is implemented by introducing a static correlation potential on
the impurity sites into the low-level theory. 
{The correlation potential introduced in DMET is analogous to the
DMFT self-energy. A detailed discussion of the correlation potential 
including the comparison to other approaches such as density functional 
theory (DFT) can be found in ~\citep{SunACR2016,KniziaJCTC2013}.}

To set the stage for the finite-temperature theory, in the following we briefly
recapitulate some details of the above steps in the GS-DMET formulation. In particular, we discuss
how to extract the bath orbitals, how to construct the embedding Hamiltonian, and how to carry out the
self-consistency between the low-level and high-level methods. 
Additional details for the GS-DMET algorithm can be found in several articles~~\citep{Knizia2012,ZhengPRB2016,WoutersJCTC2016}, including the review in Ref.~\citep{WoutersJCTC2016}.

\subsubsection{DMET bath construction}
Given a  full lattice of $L$ sites, we define the impurity $x$ over $L_x$ sites, the Hilbert space of which 
is denoted as $\mathcal{A}^x$ and spanned by a set of orthonormal basis
$\{|A^x_i\rangle\}$. The rest of the lattice is treated as the environment of
impurity $x$, the Hilbert space of which is denoted as $\mathcal{E}^x$
spanned by an orthonormal basis $\{|E^x_i\rangle\}$. The Hilbert space of 
the entire lattice $\mathcal{H}$ is the direct product of the two 
subsystem Hilbert spaces: $\mathcal{H} = \mathcal{A}^x\otimes \mathcal{E}^x$.
Any state $|\Psi\rangle$ in $\mathcal{H}$ can be written as
\begin{equation}\label{eq:bipartite_dmet}
|\Psi\rangle = \sum_{ij} \psi_{ij} |A^x_i\rangle |E^x_j\rangle,
\end{equation}
where the coefficients $\psi_{ij}$ form a $2^{n_A}\times 2^{n_E}$ matrix. 
Absorbing $\psi_{ij}$ into the environment orbitals, one could rewrite 
Eq.~\eqref{eq:bipartite_dmet} as
\begin{equation}\label{eq:schmidt_dmet}
\begin{split}
|\Psi\rangle &= \sum_i  |A^x_i\rangle \left(\sum_j \psi_{ij} |E^x_j\rangle\right) \\
& = \sum_i |A^x_i\rangle |B^x_i\rangle
\end{split},
\end{equation}
where $|B^x_i\rangle = \sum_j \psi_{ij} |E^x_j\rangle$. 
Eq.~\eqref{eq:schmidt_dmet} tells us that the orbitals in $\mathcal{E}^x$ 
that are entangled to the impurity $x$ are of the same size as the impurity
orbitals. Note that $\{|B^x_i\rangle\}$ are not orthonormal and the 
rest of the environment enters as a separatable product state 
$|\Psi_{\text{core}}\rangle$
called "core contribution". Let $\{|\tilde{B}^x_i\rangle\}$ denote the 
orthonormal states derived from $\{|B^x_i\rangle\}$, then
Eq.~\eqref{eq:schmidt_dmet} can be rewritten as
\begin{equation}\label{eq:emb_core_dmet}
|\Psi\rangle = \left(\sum_i \lambda_i |A^x_i\rangle |\tilde{B}^x_i\rangle\right)
|\Psi_{\text{core}}\rangle.
\end{equation}
The orbitals $\{|\tilde{B}^x_i\rangle\}$ are directly entangled with the 
impurity $x$, and thus are called \textit{bath orbitals}. The space spanned
by impurity and bath is called \textit{embedding space}. One can then 
derive the embedding state as
\begin{equation}\label{eq:emb_wf_dmet}
|\Psi_{\text{emb}}\rangle = \sum_i \lambda_i |A^x_i\rangle |\tilde{B}^x_i\rangle.
\end{equation}
If $|\Psi\rangle$ is an eigenstate of the Hamiltonian $\hat{H}$ in the full lattice,
then one can prove that $|\Psi_{\text{emb}}\rangle$ is also an eigenstate
of the embedding Hamiltonian $\hat{H}_{\text{emb}}$ defined as the projection
of $\hat{H}$ onto the embedding space. The two eigenvalues are identical. 
Therefore, the full lattice problem can be reduced to a smaller embedding
problem.

In practice, the exact bath orbitals are unknown since the many-body
eigenstate $|\Psi\rangle$ 
is the final target of the calculation. Instead, we construct a set of 
approximated bath orbitals from a mean-field ("low-level") wavefunction 
$|\Phi\rangle$, which is an eigenstate of a quadratic lattice Hamiltonian $\hat{h}$.
We rewrite $|\Phi\rangle$ according to Eq.~\eqref{eq:emb_core_dmet} and
Eq.~\eqref{eq:emb_wf_dmet} in the form
\begin{align}
  \label{eq:theory-mfwf}
  |\Phi\rangle = |\Phi_\text{emb}\rangle |\Phi_\text{core}\rangle.
  \end{align}

The single-particle density matrix $D^\Phi$ obtained from $|\Phi\rangle$
contains all information on the correlations in $|\Phi\rangle$.
Thus the bath orbitals can be defined from this density matrix.

We consider the impurity-environment block $D^{\Phi}_{\text{imp-env}}$ ($D_{ij}$ for $i \in x, j \notin x$) of dimension $L^{x} \times (L-L^x)$.
Then taking the thin SVD
\begin{equation}
D^{\Phi}_{\text{imp-env}} = U\lambda B^{\dagger},
\end{equation}
the columns of $B$ specify the bath orbitals in the lattice basis.
The bath space is thus a function of the density matrix, denoted $B(D)$.

\subsubsection{Embedding Hamiltonian}

After obtaining the bath orbitals, we construct the embedded Hamiltonian of the quantum impurity problem. 
In GS-DMET, there are two ways to do so: the
interacting bath formulation and the non-interacting bath formulation. The conceptually simplest approach
is the interacting bath formulation. In this case, we project
the interacting lattice Hamiltonian $\hat{H}$ into the space of
the impurity plus bath orbitals, defined by the projector $\hat{P}$, i.e. the embedded Hamiltonian
is  $\hat{H}_\text{emb} = \hat{P}\hat{H}\hat{P}$. 
$\hat{H}_\text{emb}$  in general contains non-local interactions involving the bath orbitals, as they are non-local
orbitals in the environment. From the embedded Hamiltonian, we compute the high-level
ground-state impurity wavefunction,
\begin{align}
  \hat{H}_\text{emb} |\Psi\rangle = E |\Psi\rangle.
  \end{align}
If $\hat{H}$ were itself the quadratic lattice Hamiltonian $\hat{h}$, then
then $\Psi = \Phi$ and 
\begin{equation}\label{eq:theory-samegs}
\hat{P}\hat{h}\hat{P} |\Phi\rangle = E |\Phi\rangle.
\end{equation}
Another way to write Eq.~(\ref{eq:theory-samegs}) for a mean-field state is
\begin{align}\label{eq:theory-samegs2}
  [{P}{h}{P}, {P}{D}^\Phi{P}] = 0,
  \end{align}
where $h$ denotes the single-particle Hamiltonian matrix and $P$ is the single-particle
projector into the impurity and bath orbitals. 
These conditions imply that the  lattice Hamiltonian and the embedded Hamiltonian $\hat{H}_\text{emb}$ share
the same ground-state at the mean-field level, which is the basic approximation in GS-DMET.

In the alternative non-interacting bath formulation, interactions on the bath are
approximated by a quadratic correlation potential (discussed below). This formulation retains the same exact embedding property as the interacting
bath formulation for a quadratic Hamiltonian.
In practice,  both formulations give similar results in the Hubbard model~\citep{BulikPRB2014,WuJCP2019}, and the choice between the two
depends on the available impurity solvers; the interacting bath formulation generates non-local two-particle interactions in the bath
that not all numerical implementations can handle.
In this work, we use the interacting bath formulation in the 1D Hubbard model where an ED solver is used.
In the 2D Hubbard model, we use the non-interacting bath formulation, where both ED and FT-DMRG solvers are used.
This latter choice is because the cost of treating non-local interactions in FT-DMRG is relatively high (and we
make the same choice with ED solvers to keep the results strictly comparable).

\subsubsection{Self-consistency}

To maintain self-consistency between the ground-state of the lattice mean-field $|\Phi\rangle$,  
and that of the interacting embedded Hamiltonian $|\Psi\rangle$, we introduce
a quadratic correlation potential $\hat{u}$ into $h$, i.e.
\begin{align}
  \hat{h} \to \hat{h} + \hat{u},
  \end{align}
where $\hat{u}$ is constrained to act on sites in the impurities, i.e. $\hat{u} = \sum_x \hat{u}^x$. To study magnetic order, we choose
the form
\begin{align}
  \hat{u}^x = \sum_{ij \in x, \sigma \in \{ \uparrow,\downarrow\}} u^x_{i j\sigma} a^\dag_{i\sigma} a_{j\sigma}.
\end{align}
The coefficients $u^x_{ij\sigma}$ are adjusted to match the density
matrices on the impurity that are evaluated from the low-level wavefunction $|\Phi\rangle$
and from the high-level embedded wavefunction $|\Psi\rangle$. In this work, we
only match the single-particle density matrix elements of the impurity (impurity-only matching~\citep{WoutersJCTC2016}) by minimizing the cost function:
\begin{equation}\label{eq:cost_func_dmet}
f(u) = \sum_{i,j\in \text{imp}}(D_{ij}^{\text{low}} - D_{ij}^{\text{high}})^2,
\end{equation}
where $D^{\text{low}}$ and $D^{\text{high}}$ are single-particle density
matrices of low-level and high-level solutions, respectively. For each
minimization iteration, we assume that the high-level single-particle density
matrix is fixed, and the gradient of Eq.~\eqref{eq:cost_func_dmet} is
\begin{equation}\label{eq:gradient_cost_dmet}
\frac{\mrm{d}f}{\mrm{d}u} = \sum_{i,j\in \text{imp}}2(D_{ij}^{\text{low}} - D_{ij}^{\text{high}}) 
\frac{\mrm{d}D_{ij}^{\text{low}} }{\mrm{d}u}.
\end{equation}
For ground state, Ref.~\citep{WoutersJCTC2016} provided an analytical 
approach to evaluate $\frac{\mrm{d}D_{ij}^{\text{low}} }{\mrm{d}u}$ using 
the first order perturbation theory. At finite temperature, one could also
evaluate the gradient analytically as shown in the Appendix.

Note also that we will only be considering translationally invariant systems, and thus $\hat{u}^x$ is the same
for all impurities.

\subsection{Ground-state expectation values}

\label{sec:gsexpect} 
Ground-state expectation values are evaluated from the density matrices of each high-level impurity wavefunctions
$|\Psi^x\rangle$. Since there are multiple impurities (in a translationally invariant system, these
are constrained to be identical), an expectation value is typically assembled from
the multiple impurity wavefunctions using a democratic 
partitioning~\citep{WoutersJCTC2016}. For example, given two sites $i$, $j$, 
where $i$ is part of impurity $x$ and $j$ is part of impurity $y$, 
the expectation value of $a^\dag_i a_j$ is
\begin{align}
  \langle a^\dag_i a_j \rangle = \frac{1}{2}[\langle \Psi^x | a^\dag_i a_j | \Psi^x\rangle +\langle \Psi^y | a^\dag_i a_j | \Psi^y\rangle].
\end{align}
Note that the {\textit pure bath} components of the high-level wavefunctions, e.g. $\langle \Psi^x | a^\dag_i a_j |\Psi^x\rangle$ for $i, j \notin x$
{are not used in defining the DMET expectation values.
Instead, the democratic partitioning is arranged such that an individual impurity embedding contributes the correct amount to a global expectation value so long as the impurity wavefunction produces
correct expectation values for operators that act on the impurity alone, or the impurity and bath together.}

\subsection{Finite temperature DMET}\label{sec:ftdmet}

Our formulation of FT-DMET follows the same rubric as the ground-state theory: a low-level
(mean-field-like) finite-temperature
density matrix is defined for the lattice; this is used to
obtain a set of bath orbitals to define the impurity problem;
a high-level finite-temperature
density matrix is calculated for the embedded impurity; and
self-consistency is carried out between the two via a correlation potential.
The primary difference lies in the properties of the bath, which we focus on below,
as well as in the appearance of quantities such as the entropy, which are
formally defined from many-particle expectation values.

\subsubsection{Finite temperature bath construction}
In GS-DMET bath construction, the bath orbitals are directly defined
from Schmidt decomposition of the full lattice ground state wavefunction as 
in Eq.~\eqref{eq:emb_core_dmet}. However, at finite temperature, the state
of an open quantum system (grand canonical ensemble) is described by a 
mixed state: the density matrix is described by a linear combination of
pure state density matrices. As a consequence, the Schmidt decomposition
can no longer be used to define bath orbitals. In fact, with non-zero temperature,
the entanglement becomes more delocalized. To capture the entanglement between
the impurity and environment, a larger bath space is needed compared to 
that of ground state. In Fig.~\ref{fig:bath_occ}, we plotted the weight of 
entanglement with the impurity as a function of distance (in sites) from 
the impurity for a $100$-site tight binding model ($\hat{H} = \sum_{i}
\hat{a}_i^\dag\hat{a}_{i+1} + \text{h.c.}$). One could see that as the 
temperature rises, more and more farther sites are entangled with the
impurity, and eventually all sites are uniformly and maximumly entangled. 
At ground state
($T = 0$), the weight decreased with distance with an oscillating manner, with
wavelength = $2$ sites; at $T = 0.15$, the wavelength increased to $6$
sites due to the smearing effect of finite temperature. The increase of 
oscillating wavelength is another example
of the increase of correlation length with temperature. 
\begin{figure}
    \centering
    \includegraphics[width=1\textwidth]{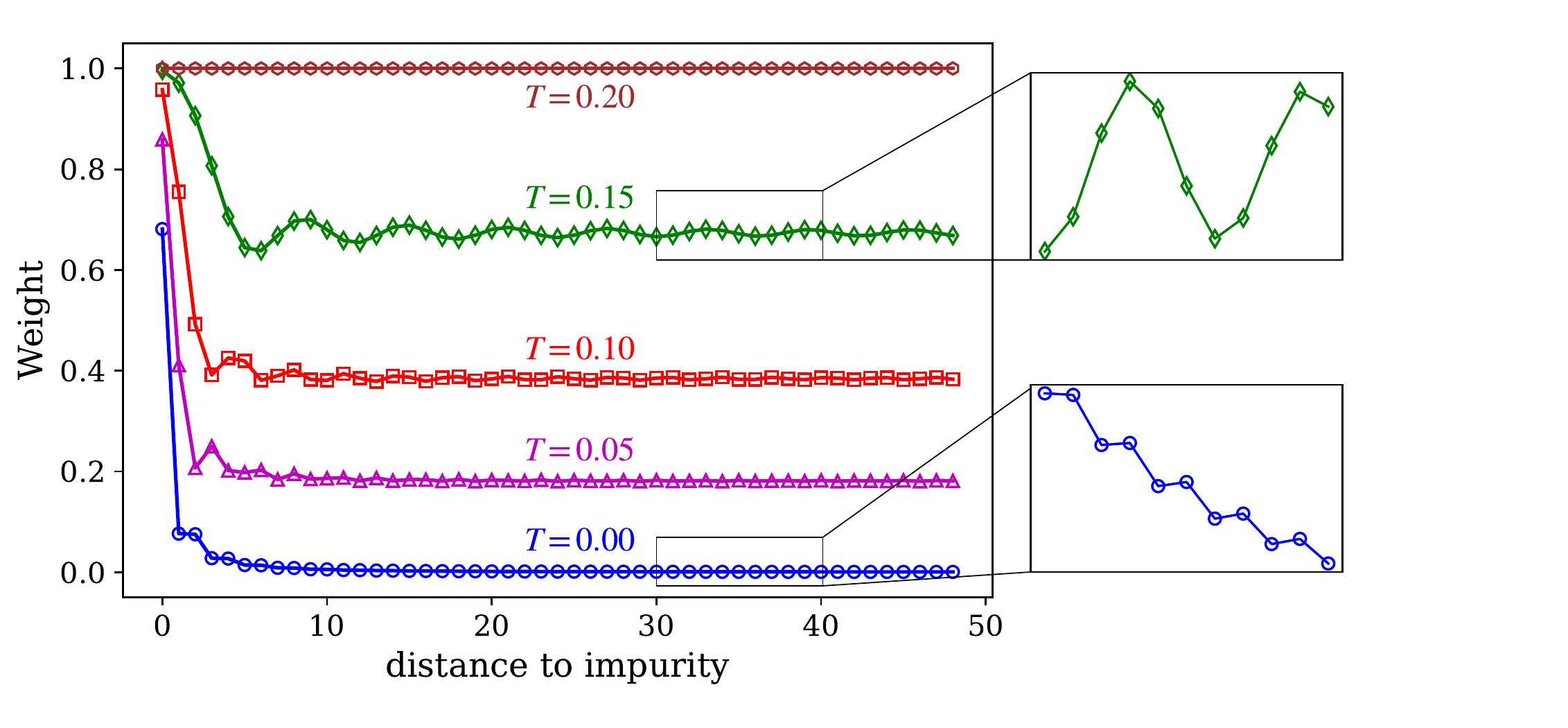}
    \caption{Weight of the entanglement with the impurity on environmental sites.
    The weights are evaluated as the square norm of projections of 
    environmental sites to the bath space. Due to periodicity of the system,
    only half of the environmental sites are shown in the figure.
    } \label{fig:bath_occ}
\end{figure}

The difficulty of finite temperature bath orbital construction can also be
demonstrated by the commutation relation between the projected 
single-particle density matrix and projected Hamiltonian.
In GS-DMET, the bath orbital construction is designed to be exact
if all interactions are treated at the mean-field level, giving rise
to the commuting condition for the projected single-particle density matrix
and projected Hamiltonian in Eq.~(\ref{eq:theory-samegs2}).
At finite temperature, the above commuting condition does not stand and  
one should expect approximated bath orbitals even at mean-field level.
In general, we can look for a finite-temperature bath construction that preserves a similar property.
As pointed out in Sec.~\ref{sec:gsexpect}, the DMET embedding is still exact
for single-particle expectation values if the embedded projected single-particle
density matrix produces the correct expectation values in the impurity and impurity-bath
sectors, due to the use of the democratic partitioning. We aim to satisfy this slightly relaxed condition.

The finite temperature single-particle density matrix of a quadratic Hamiltonian $\hat{h}$
is given by the Fermi-Dirac function 
\begin{equation}\label{eq:theory-fdfull}
{D}(\beta) = \frac{1}{1 + e^{({h}-\mu)\beta}},
\end{equation}
where $\beta = 1/k_B T$ ($k_B$ is the Boltzmann constant, $T$ is the temperature). In the following, we fix $k_B = 1$, thus
$\beta = 1/T$. If we could find an embedding directly analogous to the ground-state construction, we would obtain a projector ${P}$, such that
the embedded density matrix ${P} {D}{P}$ is the Fermi-Dirac function of the embedded quadratic Hamiltonian, i.e.
${P} h {P}$, i.e.
\begin{equation}\label{eq:theory-fdemb}
{P} {D} {P} = \frac{1}{1 + e^{({P}{h}{P}-\mu)\beta}}.
\end{equation}
However, unlike in the ground-state theory, the non-linearity of the exponential function means that
Eq.~(\ref{eq:theory-fdemb}) can only be satisfied 
exactly if ${P}$ projects back into the full lattice basis. Thus a bath orbital construction at
finite temperature is necessarily always approximate, even for quadratic Hamiltonians.

Nonetheless, one can choose the bath orbitals to reduce the error between the l.h.s. and r.h.s. in
Eq.~(\ref{eq:theory-fdemb}). First, we require that the equality is satisfied only
for the impurity-environment block of $D$, following the relaxed requirements
of the democratic partitioning.  Second, we require the equality to
be satisfied only up to a finite order $n$ in $h$, i.e.
\begin{align}
\label{eq:fdemb_nth}
[P D P ]_{ij} = \left[\frac{1}{1 + e^{({P}{h}{P}-\mu)\beta}}\right]_{ij} + O(h^n) \quad i \in x,j \notin x.
\end{align}
Then there is a simple algebraic construction of the bath space  as (see Appendix for a proof)
\begin{align}
\label{eq:hbath}
\span\{B(h) \oplus B(h^2) \oplus B(h^3) \ldots B(h^n) \},
\end{align}
where $B(h^k)$ is the bath space derived from $h^k$, $k = 1, ..., n$. Note that each order of $h$ adds $L_x$ bath orbitals to the total
impurity plus bath space.

We can alternatively choose the bath to preserve the inverse relationship between the density matrix and Hamiltonian,
\begin{align}
[P h P ]_{ij} = \mathrm{inverseFD}(P D P) + O(D^n) \quad \mathrm{not}\ i,j \notin x,
\end{align}
where $\mathrm{inverseFD}$ is the inverse Fermi-Dirac function, and the bath space is
then given as
\begin{align}
\label{eq:dbath}
\span\{B(D) \oplus B(D^2) \oplus B(D^3) \ldots B(D^n) \}.
\end{align}
The attraction of this construction is that the lowest order corresponds to the standard GS-DMET bath construction.

The above generalized bath constructions allow for the introduction of an unlimited number of bath sites (so long as the total number
of sites in the embedded problem is less than the lattice size). Increasing the size of the embedded problem by increasing the number of bath
 orbitals (hopefully) increases the accuracy of the embedding, but it also
 increases the computational cost. However, an alternative way to increase
accuracy is simply to increase the number of impurity sites. Which strategy is better is problem dependent,
and we will assess both in our numerical experiments.

\subsubsection{Thermal observables}

The thermal expectation value of an observable $\hat{O}$ is defined as
\begin{equation}\label{eq:theory-ftexpval}
\langle \hat{O}(\beta)\rangle = \Tr\left[\hat{\rho}(\beta)\hat{O}\right].
\end{equation}
Once $\hat{\rho}(\beta)$ is obtained from
 the high-level impurity calculation, for observables based on low-rank (e.g. one- and two-) particle
reduced density matrices, we evaluate Eq.~(\ref{eq:theory-ftexpval}) using the democratic partitioning formula for
expectation values in Sec.~\ref{sec:gsexpect}. 

We will also, however, be interested in the entropy per site, which is a many-particle expectation value.
Rather than computing this directly as an expectation value, 
we will obtain it by using the thermodynamic relation
$\frac{\dd S}{\dd E} = \beta$, and
\begin{equation} 
S(\beta_0) = S(0) + \int_{E(0)}^{E(\beta_0)} \beta(E) \dd E 
\end{equation}
 where $\beta_0$ is the desired inverse temperature, and $S(0) = \ln 4 \approx 1.386$.

\section{\label{sec_results_dmet}Results}

\subsection{Computational details}

We benchmarked the performance of FT-DMET in the 1D and 
2D Hubbard models as a function of $U$ and $\beta$. For the 1D Hubbard model, we compared our FT-DMET results
to exact solutions from the thermal Bethe ansatz ~\citep{TakahashiPRB2002}.
For the 2D Hubbard model, the FT-DMET results were compared to DCA and 
 DMFT
 results~\citep{MF_DMFT2d,maier2005,kunes,jarrellPRB,jarrellEPL,LeBlancPRX2015}. We used large DMET mean-field lattices with periodic boundary conditions  (240 sites in 1D, $24 \times 24$ sites in 2D).
We used exact diagonalization (ED) and finite temperature DMRG (FT-DMRG)
as impurity solvers. 
There are two common ways to carry out FT-DMRG calculations: the purification method~\citep{FeiguinPRB2005} and the
minimally entangled typical thermal states (METTS) method~\citep{StoudenmireNJP2010}.
In this work, we used the purification method implemented with the ITensor
package~\citep{ITensor} as the FT-DMRG impurity solver, as well as to provide the finite lattice reference data in Fig.\ref{fig:1ddopping}.
In the FT-DMRG solver, the sites were ordered with the impurity sites coming first, followed by the bath sites (an orthonormal
  basis for the set of bath sites of different orders was constructed via singular value decomposition, and ordered in
  decreasing size of the singular values) and the ancillae arranged in between each physical site.
In the 1D Hubbard model, we used ED exclusively and the interacting bath formulation of DMET, while in the 2D Hubbard model,
we used ED for the 4 impurity, 4 bath
calculations, and FT-DMRG for the 4 impurity, 8 bath calculations, both within the non-interacting bath formulation.
FT-DMRG was carried out using 4th order Runge-Kutta time evolution. 
To denote different calculations with different numbers of impurity and bath orbitals, we use the notation $InBm$, where $n$
denotes the number of impurity sites and $m$ the number of bath orbitals.

\subsection{1D Hubbard model}\label{sec:1dhub}


The 1D Hubbard model is an ideal test system for FT-DMET as its thermal properties
can be exactly computed via the thermal Bethe ansatz. We thus use it to assess various choices
within the FT-DMET formalism outlined above.

We first compare the relative performance of the two proposed bath constructions, generated via the Hamiltonian in Eq.~(\ref{eq:hbath}) or
via the density matrix in Eq.~(\ref{eq:dbath}). In Fig.~\ref{fig:hbath-vs-dbath}, we show the
error in the energy per site (measured from the thermal Bethe ansatz) for $U=2, 4$ and half-filling for these two choices. (The behaviour
for other $U$ is similar).
Using 4 bath sites, the absolute error in the energy is comparable to that of the ground-state calculation (which uses 2 bath sites)
over the entire temperature range.
Although the Hamiltonian
construction was motivated by the high temperature expansion of the density matrix, the density matrix construction appears to perform well
 at both low and high temperatures.
Consequently, we use the density matrix derived bath in the subsequent calculations.


\begin{figure}
\centering
\justify
\includegraphics[width=1\textwidth]{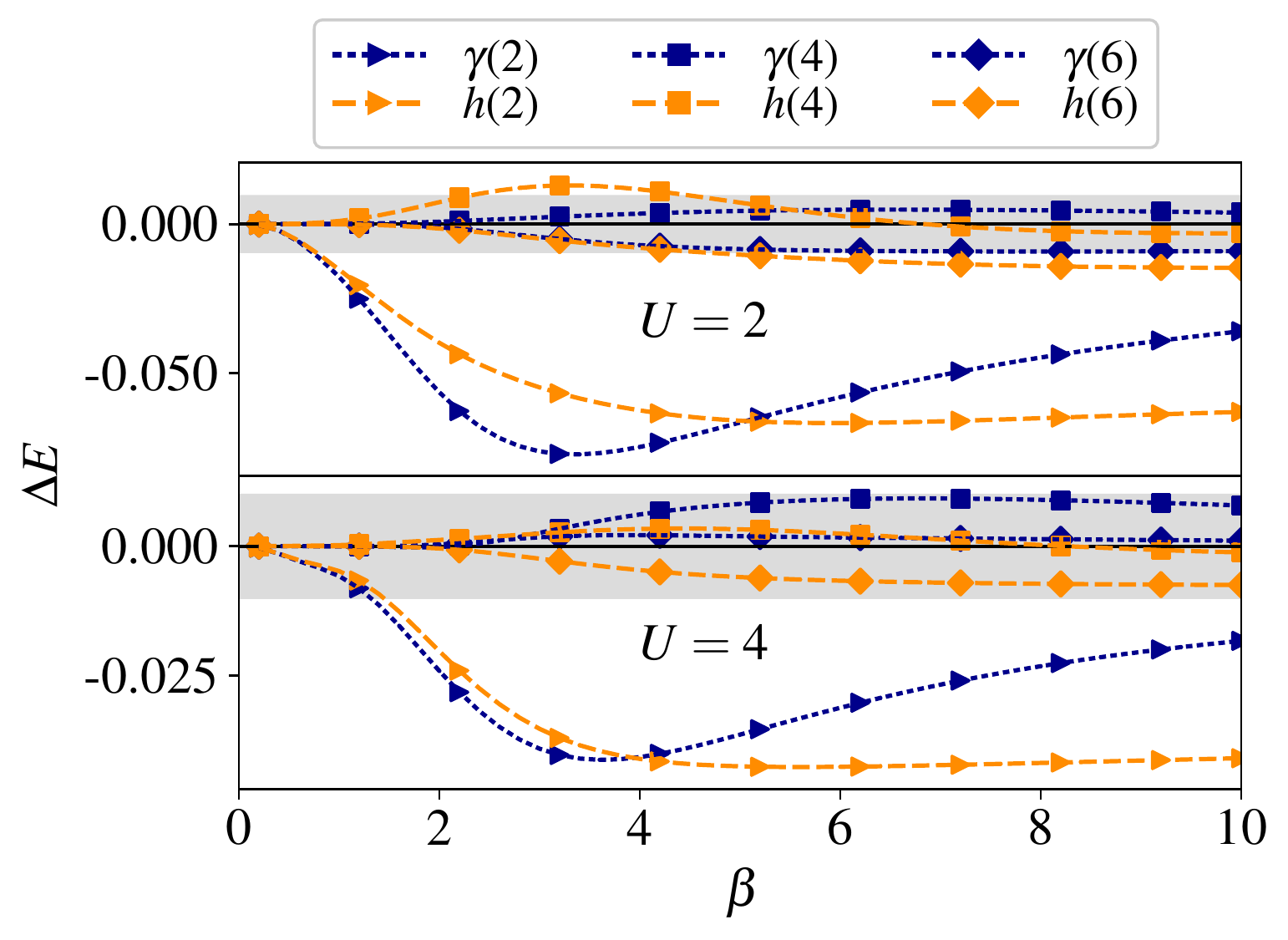}
\caption{Error in energy per site (units of $t$)  of FT-DMET 
     for the 1D Hubbard model at $U=2$ and $U=4$ (2 impurity sites and 
     half-filling) with bath orbitals generated via the 
       density matrix $\gamma$ (Eq.~(\ref{eq:dbath})) (blue lines) 
     or lattice Hamiltonian $h$ (Eq.~(\ref{eq:hbath})) (orange lines)
     as a function of inverse temperature $\beta$.  The numbers in 
    parentheses denote the number of bath orbitals. 
    The grey area denotes the ground state error with 2 impurity orbitals.
    }\label{fig:hbath-vs-dbath}
\index{figures}
\end{figure}

We next examine the effectiveness of the density matrix bath construction in removing
the finite size error of the impurity. As a first test, 
in Fig.~\ref{fig:dmet-vs-ed} we compare the energy error obtained with FT-DMET and  $I2B2$ with
a pure ED calculation with 4 impurity sites ($I4$) and periodic (PBC) or antiperiodic (APBC) boundary
conditions, at various $U$ and $\beta$. For weak ($U=2$) to moderate ($U=4$) coupling, FT-DMET shows a significant improvement
over a finite system calculation with the same number of sites, reducing the error by a factor of $\sim 2-6$ depending on the $\beta$, thus
demonstrating the effectiveness of the bath. The maximum FT-DMET energy error is 8.1, 6.6, 3.1\% for $U=2, 4, 8$.
At very strong couplings, the error of the finite system ED with PBC approaches that of FT-DMET. This is because both the finite size error
and the effectiveness of the DMET bath decrease as one approaches the atomic limit.

\begin{figure}
    \centering
    \includegraphics[width=1\textwidth]{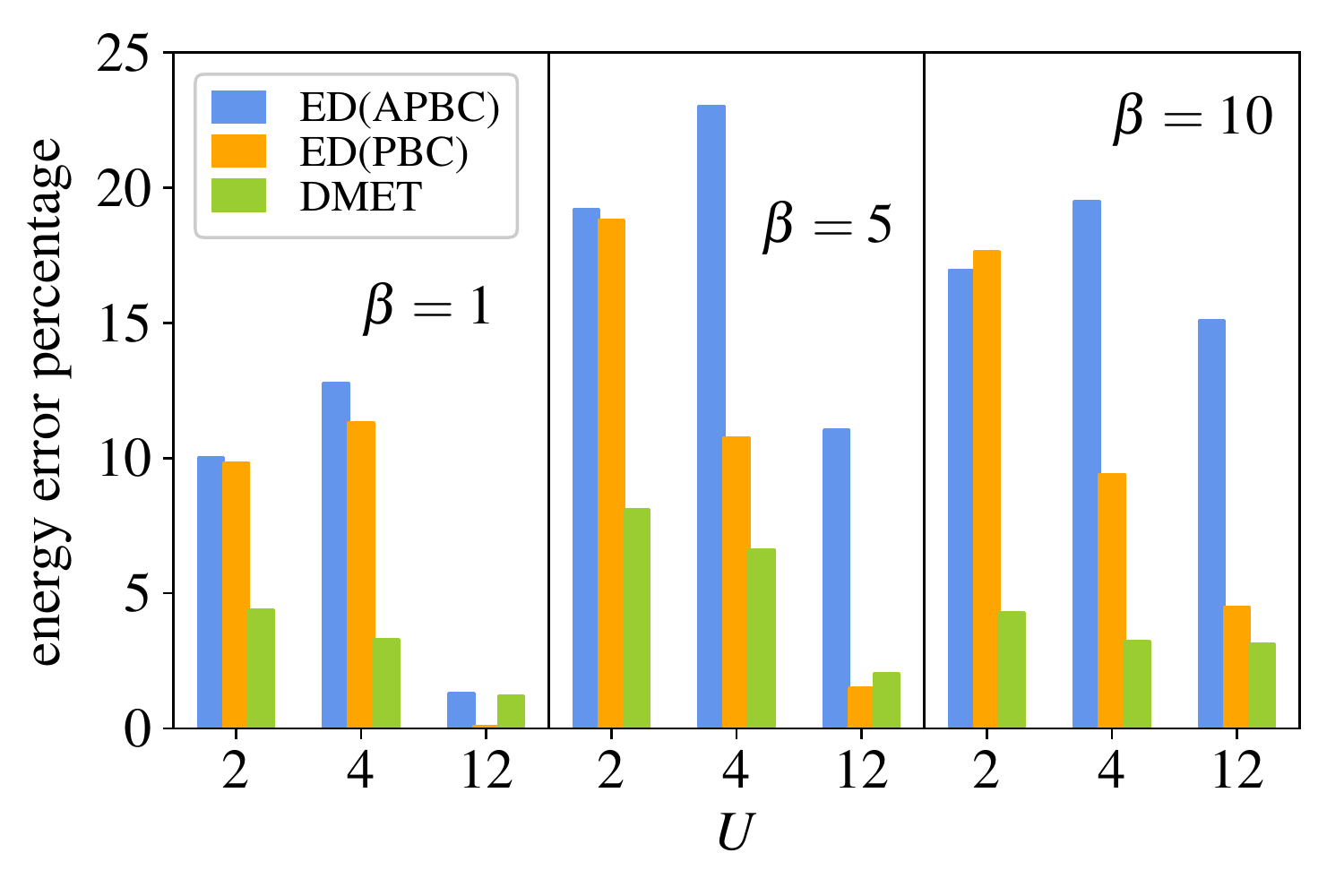}
    \caption{Percentage error of the FT-DMET (with 2 impurity sites and 2 bath orbitals) energy per site vs. ED (4 sites) on a non-embedded cluster with PBC and APBC boundary conditions
    for the 1D Hubbard model at various $U$ and $\beta$.} \label{fig:dmet-vs-ed}
\end{figure}


As a second test, in Fig.~\ref{fig:impvsbath} we compare increasing the number of impurity sites versus increasing
the number of bath orbitals generated in Eq.~(\ref{eq:dbath}) for various $U$ and $\beta$. Although complex behaviour
is seen as a function of $\beta$, we roughly see two patterns. For certain impurity sizes, (e.g. $I4$) it can be slightly
more accurate to use a larger impurity with an equal number of bath sites, than a smaller impurity with a larger number of bath sites.
(For example, at $U=8$, one can find a range of $\beta$ where $I4B4$ gives a smaller error than $I2B6$). However, there
are also some impurity sizes which perform very badly; for example $I3B3$ gives a very large error, because the (short-range) antiferromagnetic
correlations do not properly tile between adjacent impurities when the impurities are of odd size. Thus, due to these
size effects, convergence with impurity size is highly non-monotonic, but  increasing the bath size (by including more terms
in Eq.~(\ref{eq:dbath})) is less prone to strong odd-size effects.
The ability to improve the quantum impurity by simply increasing the number of bath sites,
is expected to be particularly relevant in higher-dimensional lattices such as the 2D Hubbard model, where ordinarily to
obtain a sequence of clusters with related shapes,
it is necessary to increase the impurity size by large steps. Nonetheless, convergence with bath size is also not strictly monotonic,
as also illustrated in Fig.~\ref{fig:1dES}, where we see that the error in the $I2B4$ entropy can sometimes be less
than that of $I2B6$ for certain ranges of $\beta$. For the largest embedded problem $I2B6$, the maximum
error in the entropy is $4\times 10^{-3}$ and $2\times 10^{-2}$ for $U =4$ 
and $8$, respectively.

\begin{figure}
    \centering
    \includegraphics[width=1\textwidth]{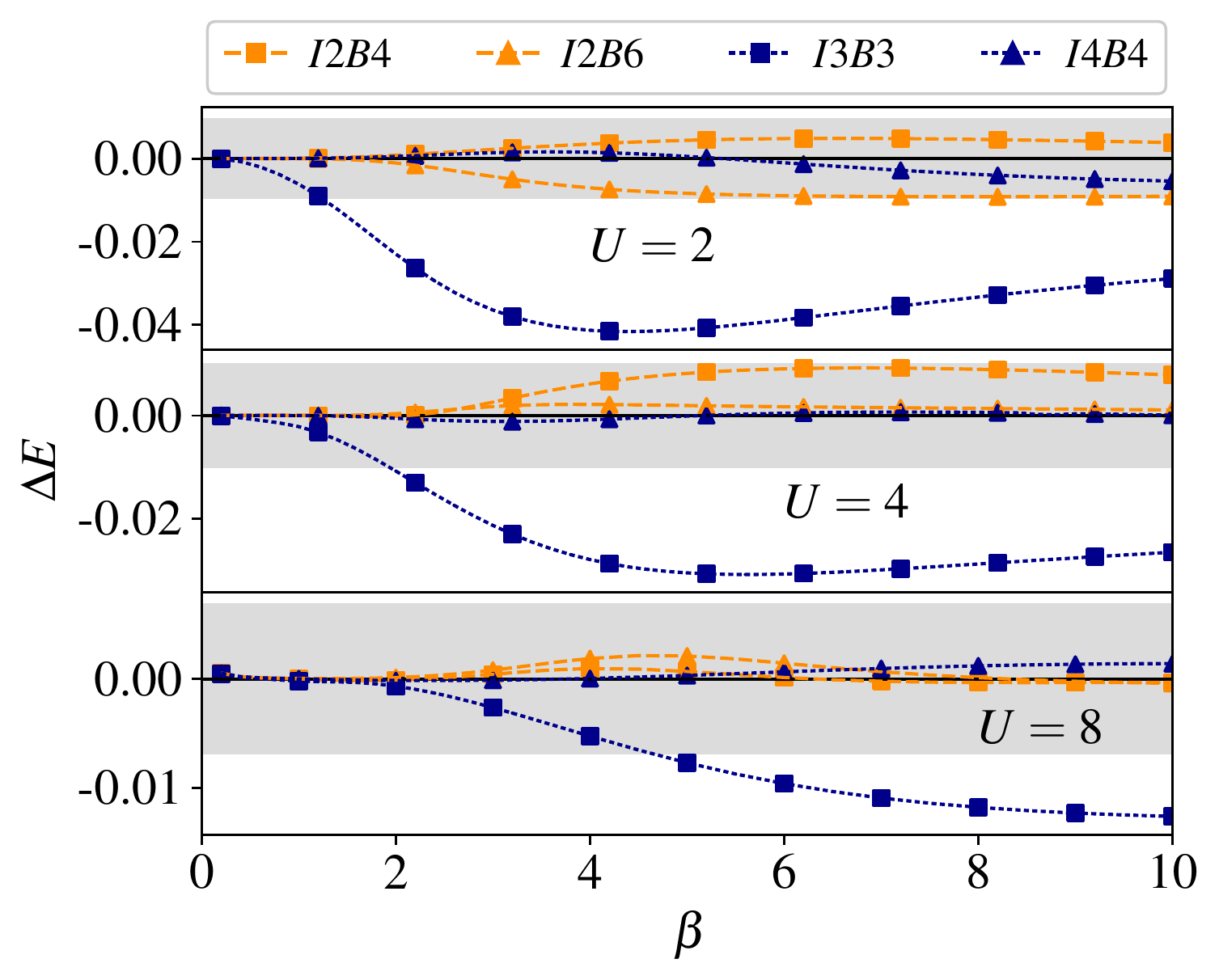}
     \caption{Absolute error of the FT-DMET
      energy per site of the 1D Hubbard model at half-filling
     as a function of impurity and bath size. $InBm$ denotes $n$ impurity sites and $m$ bath orbitals.
Increasing impurity (blue lines); increasing bath (orange lines). The grey band depicts the ground state error with $2$ impurity sites and $2$ bath orbitals.}
     \label{fig:impvsbath}
\end{figure}

\begin{figure}
     \centering
     \includegraphics[width=1\textwidth]{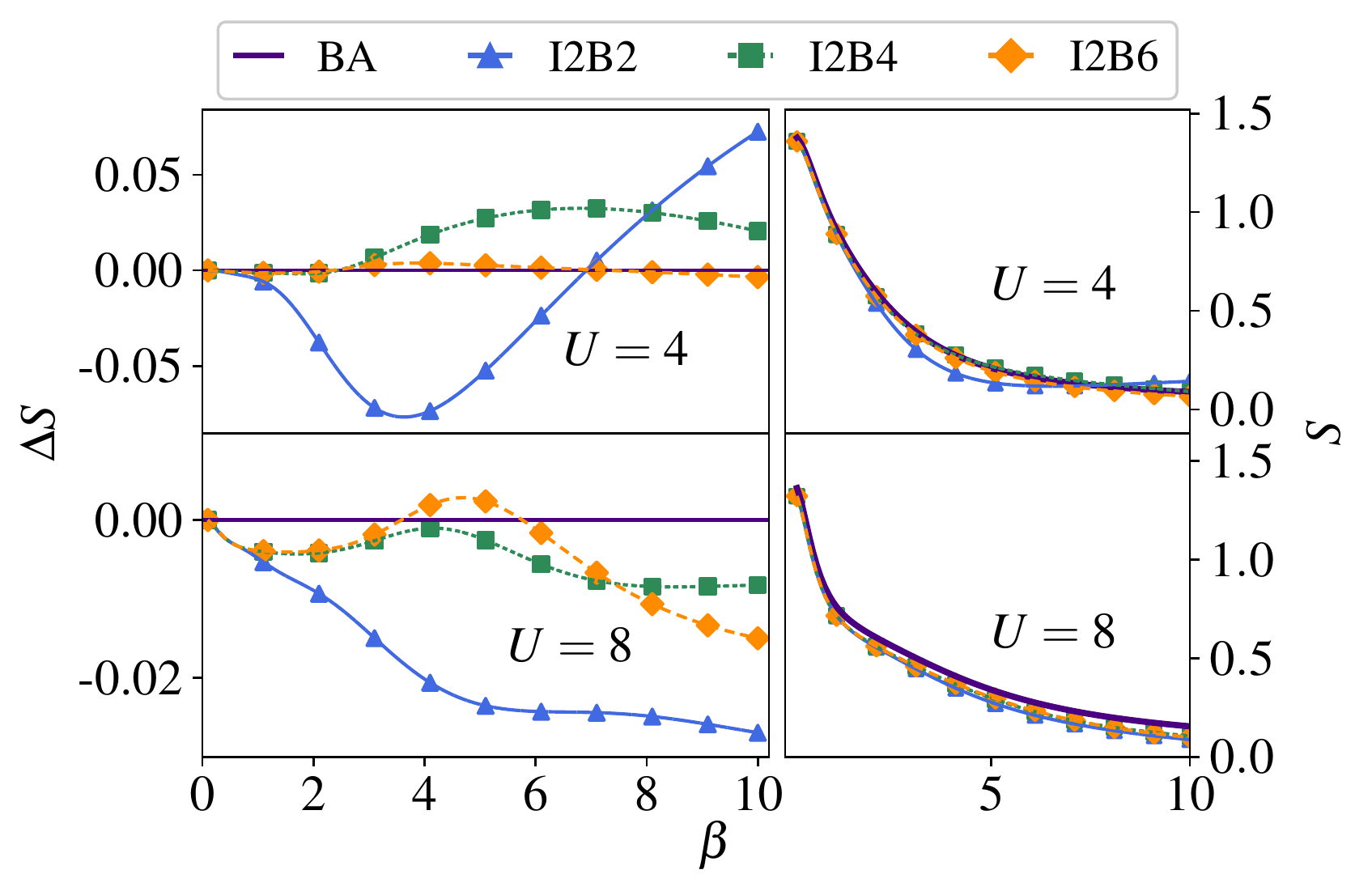}
     \caption{Absolute error of the FT-DMET entropy per 
              site of the 1D Hubbard model at half-filling
              as a function of the number of bath sites. 
              The right panels show the absolute entropy.
               }
     \label{fig:1dES}
\end{figure}

The preceding calculations were all carried out at half-filling. Thus, 
in Fig.~\ref{fig:1ddopping} we show FT-DMET calculations on the 1D Hubbard model away from half-filling at $U=4$.
We chose to simulate a finite Hubbard chain of 16-sites with PBC in order to readily generate  numerically exact reference data
using FT-DMRG (using a maximum bond dimension of $2000$ and an
imaginary time step of $\tau = 0.025$). The agreement between the FT-DMRG energy per site and
that obtained from the thermal Bethe ansatz can be seen at half-filling, corresponding to a chemical potential $\mu=2$.
We see excellent agreement between FT-DMET and FT-DMRG results across the full range
of chemical potentials, and different $\beta$, suggesting that FT-DMET works equally well for doped systems as
well as for undoped systems.

\begin{figure}
     \centering
     \includegraphics[width=1\textwidth]{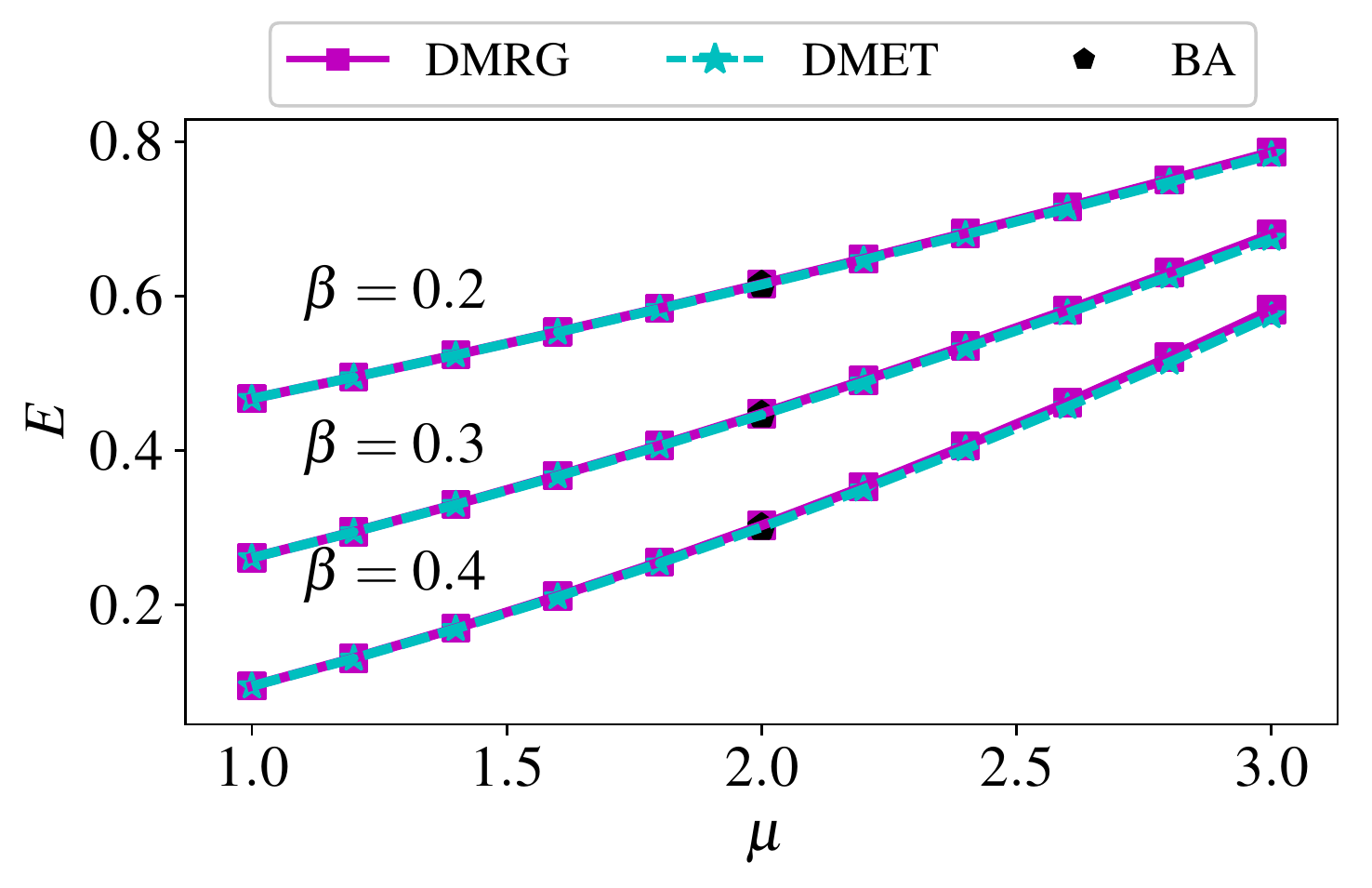}
     \caption{Energy per site (units of $t$) of a 16-site 
     Hubbard chain with periodic boundary conditions at $U=4$ as a function
     of the chemical potential $\mu$ at various $\beta$ values.     
     The difference between the DMRG and DMET ($I2B6$) energies per site is 
     $0.01-1.4\%$.
     Solid lines: DMRG energies; dashed lines: DMET energies; pentagons: Bethe ansatz.}
     \label{fig:1ddopping}
\end{figure}

\subsection{2D Hubbard model}\label{sec:2dhub}

The 2D Hubbard model is an essential model of correlation physics in materials.
We first discuss the accuracy of FT-DMET for the energy 
of the 2D Hubbard model at half-filling, shown in Fig.~\ref{fig:2dE}. The FT-DMET calculations
are performed on a $2\times 2$ impurity, with 4 bath orbitals ($I4B4$) (green diamond markers) 
and 8 bath orbitals ($I4B8$)  (red triangular markers). The results are compared
to DCA calculations with clusters of size $34$ (orange circle markers), $72$
(blue square markers)~\citep{LeBlancPRX2015}, and $2\times 2$  (light blue hexagon markers) (computed for this work). 
The DCA results with the size $72$ cluster can be considered here to represent the thermodynamic limit.
The 
DCA($2\times 2$) 
data provides an opportunity to assess the relative contribution of the FT-DMET embedding to the finite size error; in particular, one can compare the 
difference between FT-DMET and DCA(72) to the difference between 
DCA($2\times2$) and DCA(72).
Overall, we see that the FT-DMET energies with 
8 bath orbitals are in good agreement with the DCA(72) energies across the different $U$ values, and that the accuracy is slightly
better on average than that of
DCA($2\times2$). 
The maximum error in the $I4B8$ impurity compared to thermodynamic limit extrapolations of the
DCA energy~\citep{LeBlancPRX2015} is found at $U=4$ and is in the range of 1-2\%,
comparable to errors observed in ground-state DMET at this cluster size (e.g. the error in GS-DMET at $U=4$ and $U=8$ is 0.3\% and 1.8\%, respectively).
In the $\beta = 8$ case, the FT-DMET calculations with two different bath sizes give very similar results; at low temperature, the 
bath space constructed by the FT procedure is similar to that of the ground state, and the higher order bath sites do not contribute
very relevant degrees of freedom. Thus even the smaller bath achieves good accuracy in the low temperature FT-DMET calculations. 

\begin{figure}
     \centering
     \includegraphics[width=1\textwidth]{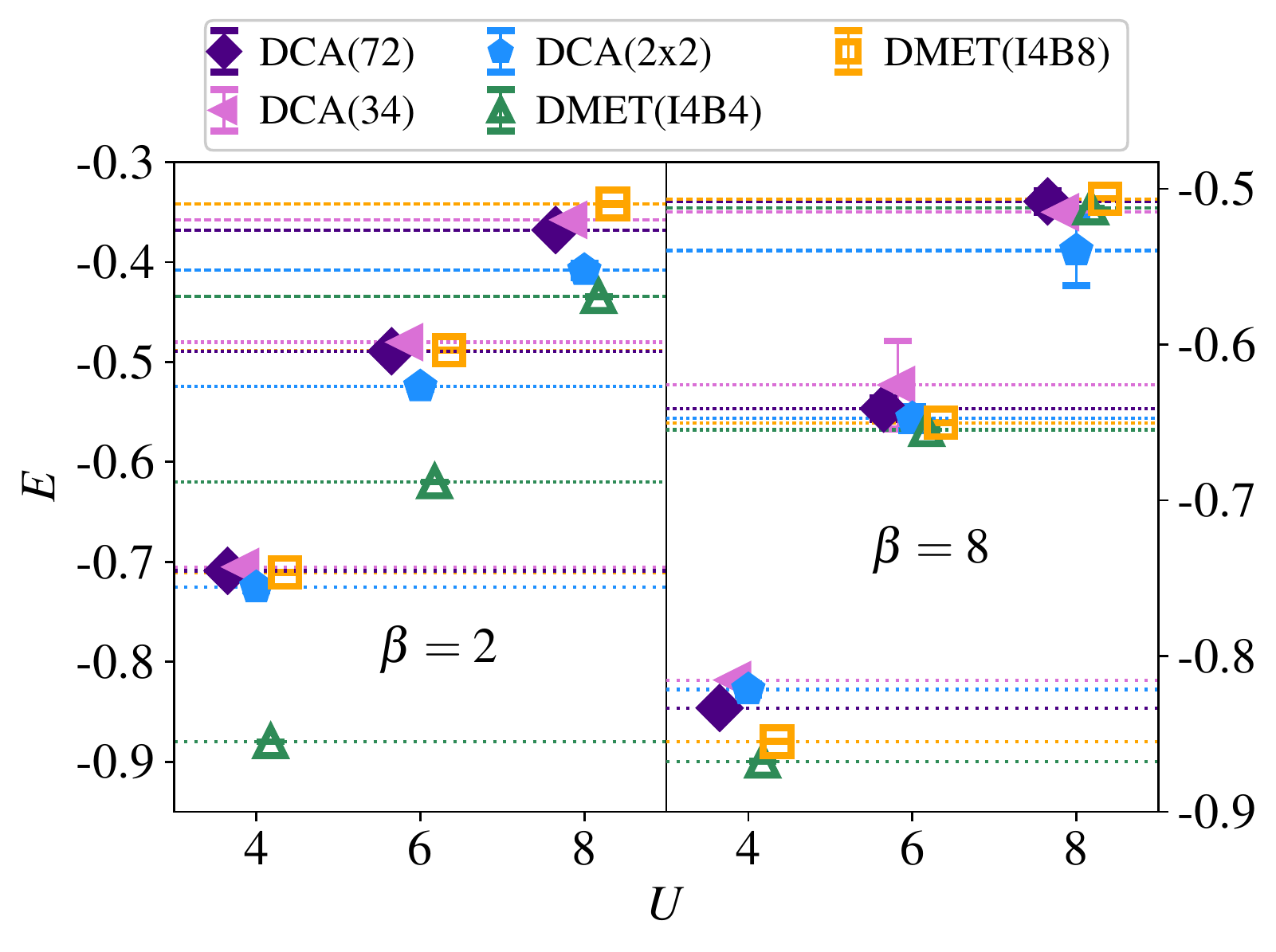}
     \caption{Energy per site versus U (units of $t$) of the 
     2D Hubbard model at half-filling with FT-DMET ($2\times2$ cluster with 
     4 and 8 bath orbitals), 
     DCA (34, 72 and $2\times 2$ site clusters).
     }\label{fig:2dE}
     \end{figure}

\begin{figure}[t!]
\centering
\begin{subfigure}[t]{0.85\textwidth}
\includegraphics[width=\textwidth]{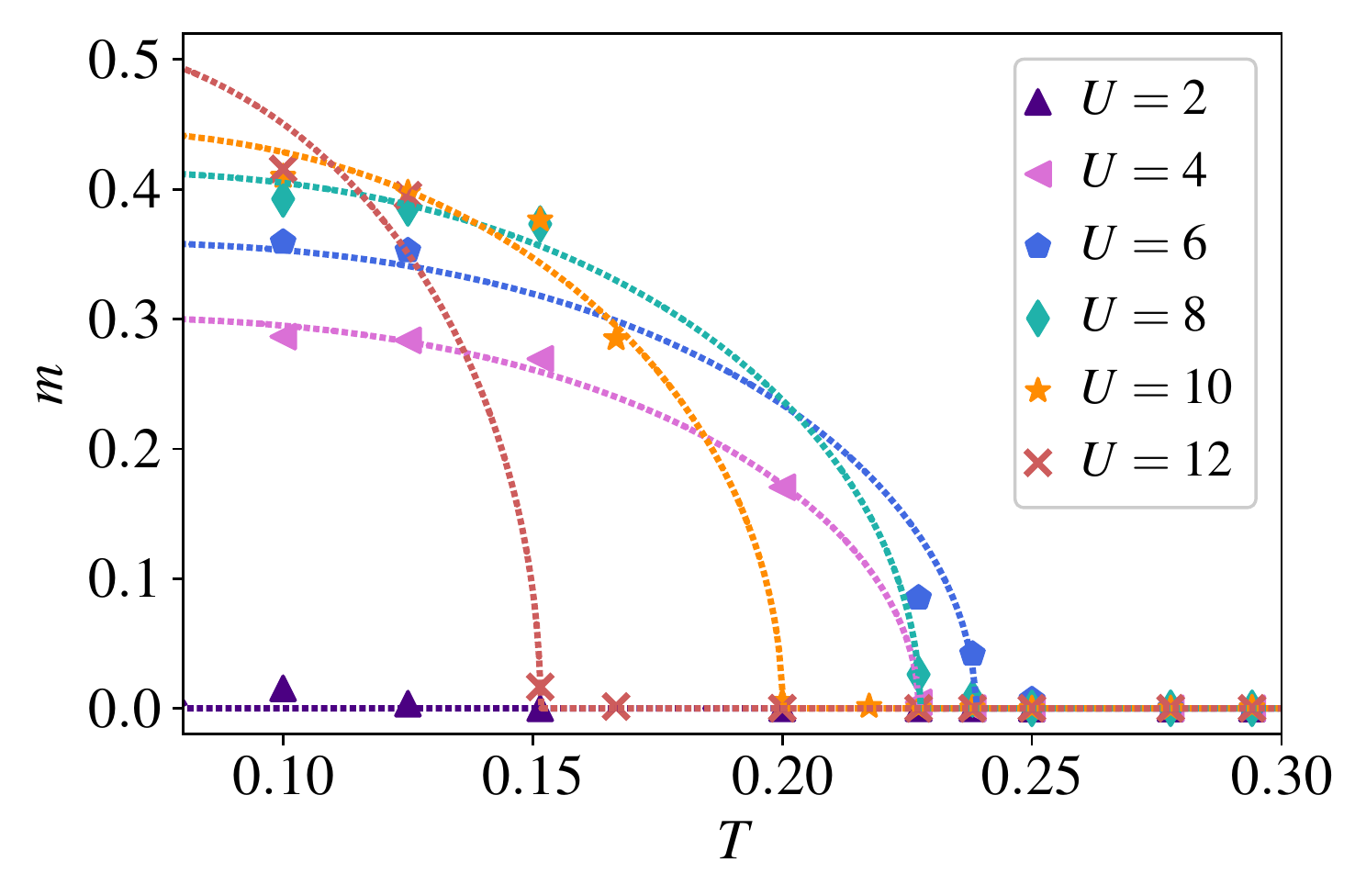}
\caption{ }
\end{subfigure}
\begin{subfigure}[t]{0.85\textwidth}
\includegraphics[width=\textwidth]{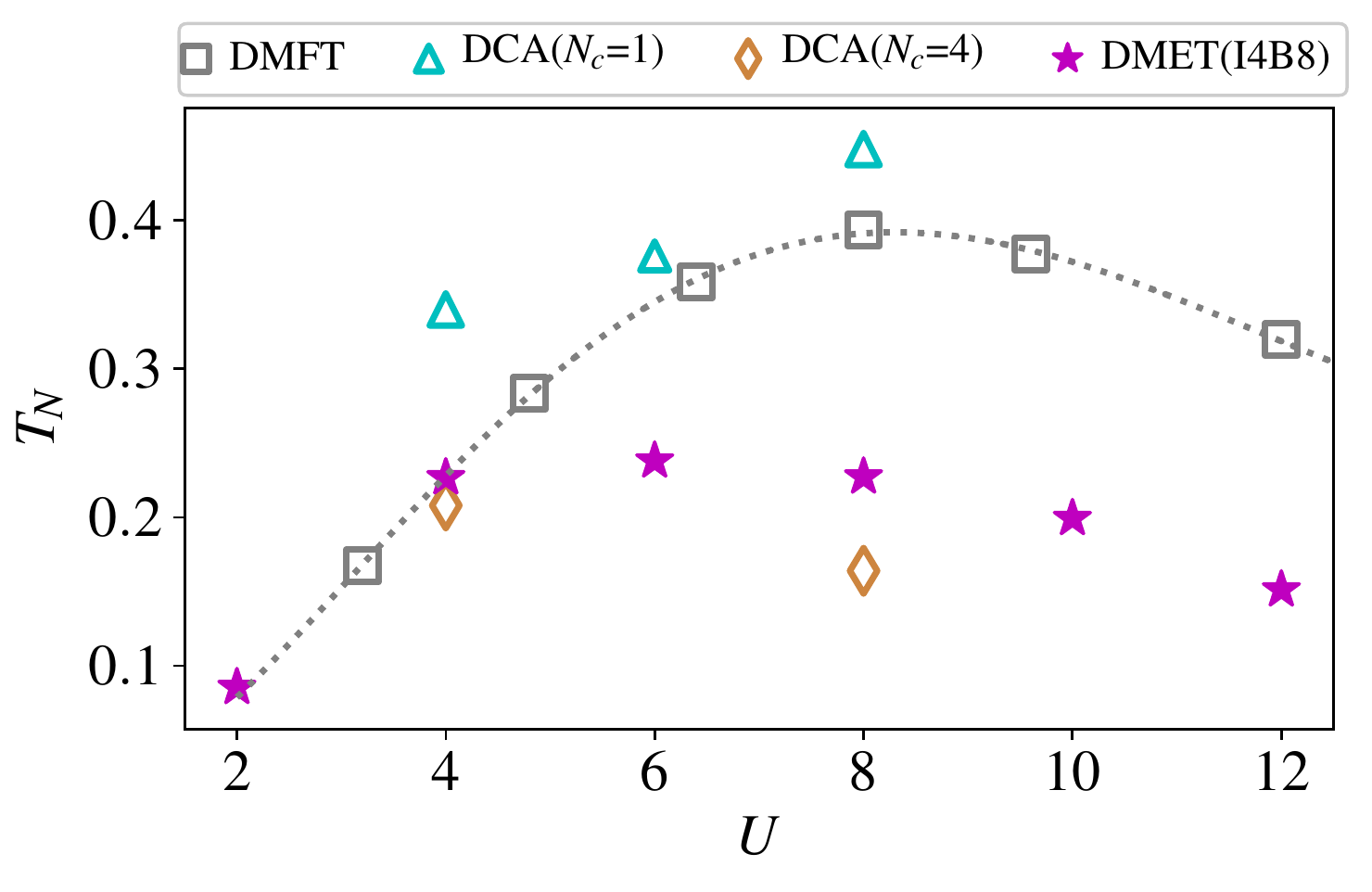}
\caption{ }
\end{subfigure}
\caption{N\'eel  transition for the 2D Hubbard model within quantum impurity simulations.
 (a) Antiferromagnetic moment $m$ as a function of $T$ with various $U$ values (units of $t$);
(b) N\'eel  temperature $T_N$ calculated with FT-DMET, single-site DMFT and DCA.  DMFT data is taken from Ref.~\citep{kunes}, DCA/NCA data for $U=4$ 
 is taken from Ref.~\citep{maier2005}, DCA/QMC data for $U=6$ is 
 taken from Ref.~\citep{jarrellPRB}, and DCA/QMC data for $U=8$ is 
 taken from Ref.~\citep{jarrellEPL}.}\label{fig:2dlinemag}
\end{figure}

A central phenomenon in magnetism is the finite-temperature N\'eel  transition.
In the thermodynamic limit, the 2D Hubbard model does not exhibit a true N\'eel transition,
but shows a crossover~\citep{mermin}. However, in finite quantum impurity calculations,
the crossover appears as a phase transition at a nonzero N\'eel temperature. 
Fig.~\ref{fig:2dlinemag}(a)
shows the  antiferromagnetic moment $m$ calculated as $m = \frac{1}{2L_x}\sum_{i}^{L_x}|n_{i\uparrow}-n_{i\downarrow}|$ as a function of temperature $T$ for various 
$U$ values. As a guide to the eye, we fit the data to a mean-field 
magnetization function $m = a\tanh\left(bm/T\right)$,
where $a$ and $b$ are parameters that depend on $U$. The FT-DMET calculations are performed with a $2\times 2$ impurity and $8$ bath orbitals,
using a finite temperature DMRG solver  with maximal bond dimension $M = 600$ and time step $\tau = 0.1$. With this, the error in $m$ from the solver
is estimated  to be less than 10$^{-3}$.
$m$ drops sharply to zero as $T$ is increased
signaling a N\'eel  transition. The N\'eel  temperature $T_N$ is taken as the point of intersection
of the mean-field fitted line with the $x$ axis; assuming this form of the curve, the uncertainty in $T_N$ is invisible
on the scale of the plot.
The plot of $T_N$ versus $U$ is shown in Fig.~\ref{fig:2dlinemag}(b), showing that
 the maximal $T_N$ occurs at $U=6$.
Similar $T_N$ calculations on the 2D Hubbard model with single site DMFT~\citep{kunes} and DCA\cite{jarrellPRB,jarrellEPL, maier2005} are
also shown in Fig.~\ref{fig:2dlinemag}(b) for reference. Note that the difference in the DMFT results~\citep{kunes} and single-site DCA (formally equivalent to DMFT)~\cite{jarrellPRB,jarrellEPL}
likely arise from the different solvers used.
The behaviour of $T_N$ in our $2\times 2$ FT-DMET calculations
is quite similar to that of the 4-site DCA cluster.
In particular, we see in DCA that the
$T_N$ values obtained from calculations with a single-site cluster ($N_c=1$)
are higher than the $T_N$ values obtained from calculations with a 
4-site cluster ($N_c = 4$). 

An alternative visualization of the N\'eel transition in FT-DMET is shown in 
Fig.~\ref{fig:2dpd4b}. The FT-DMET calculations here were performed with 
a $2\times 2$ impurity and $4$ bath orbitals using an ED solver.
Though less quantitatively accurate than the $8$ bath orbital simulations, these FT-DMET calculations still capture
the qualitative behavior of the N\'eel  transition. Focusing on the dark blue
region of the phase diagram, one can estimate the maximal $T_N$ to occur near $U\approx 9$, an increase
over the maximal N\'eel temperature using the $8$ bath orbital impurity model. This increase
in the maximal $T_N$ appears similar to that which happens when moving from a 4-site cluster to a 1-site cluster in DCA
in Fig.~\ref{fig:2dlinemag}.

\begin{figure}[t!]
\centering
\includegraphics[width=0.9\textwidth]{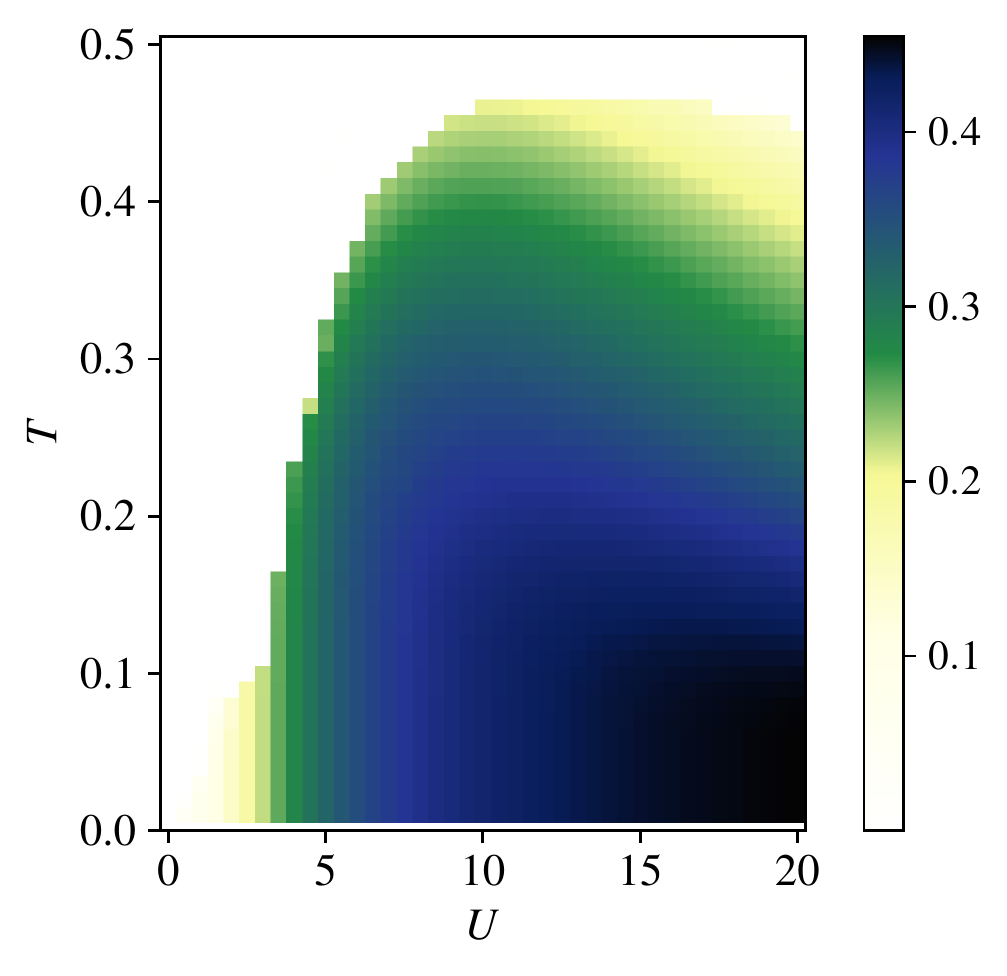}
\caption{2D Hubbard antiferromagnetic moment (color scale) as a function of $T$ and $U$ (units of $t$) in FT-DMET ($2\times 2$ impurity, 4 bath orbitals.)}\label{fig:2dpd4b}
\end{figure}


\section{\label{sec:conc_dmet}Conclusions}
To summarize, we have introduced a finite temperature formulation
of the density matrix embedding theory (FT-DMET). This temperature
formulation inherits most of the basic structure of the ground-state
density matrix embedding theory, but modifies the bath construction
so as to approximately reproduce the mean-field finite-temperature density matrix.
From numerical assessments on the 1D and 2D Hubbard model, we conclude
that the accuracy of FT-DMET is comparable to that of its ground-state counterpart, with
at most a modest increase in size of the embedded problem. From the limited comparisons, it also
appears to be competitive in accuracy with the cluster dynamical mean-field theory for the same sized cluster.
Similarly to ground-state DMET, we expect FT-DMET to be broadly applicable
to a wide range of model and \emph{ab initio} problems of correlated electrons at finite temperature~\citep{ZhengScience2017,cui2019efficient}. 

%


\chapter{\textit{Ab initio} finite temperature density matrix embedding theory\label{chp:hlatt}}
\section{\label{sec:hlatt_abs}Abstract}
This work describes the framework of the finite temperature density matrix 
embedding theory (FT-DMET) for \textit{ab initio} simulations of solids.
We introduce the implementation details 
including orbital localization, density fitting
treatment to the two electron repulsion integrals, bath truncation, lattice-to-embedding
projection, and impurity solvers. We apply this method to study the 
thermal properties and phases of hydrogen lattices. We provide the finite 
temperature dissociation curve, paramagnetic-antiferromagnetic transition,
and metal-insulator transition of the hydrogen chain.

\section{\label{sec:hlatt_intro}Introduction}
The numerical study of the many-electron problem has been playing a profound
role in understanding the electronic behaviors in molecules and materials.
One big challenge for current numerical methods is the description of strong
electron correlations, which requires non-trivial treatment of the 
electron-electron interaction beyond the mean-field level. A variety of numerical
algorithms have been invented in the past decades to treat strong electron
correlations, including post-Hartree-Fock quantum chemistry methods such
as CCSD~\citep{Monkhorst1977, Bartlett2007}, DMRG and its multi-dimensional
alternatives~\citep{White1992, Scholl2005, Chan2011,Evenbly2015,Verstraete2008}, the QMC family such as AFQMC~\citep{Ceperley1977,Acioli1994, Foulkes2001}, and embedding
methods such as DMET~\citep{Honma1995, Carlson2011}. During the past decades, noticeable progress
has been made in the study of strongly correlated models such as one-dimensional
and two-dimensional Hubbard
models\cite{Lieb1989,White1989, LeBlanc2015}, while the \textit{ab initio}
 study of strongly correlated solids is rare. Compared to model systems where
forms of the two-electron interaction are usually simple, \textit{ab initio}
Hamiltonians contain much more complicated two-electron terms with size $N^4$,
where $N$ is the number of orbitals. This complexity brings higher 
computational costs. Therefore, an efficient method that can handle 
the realistic Hamiltonian accurately is crucial for understanding the
physics behind real materials.

The hydrogen lattice is believed
to be the simplest chemical system with a straightforward analog to the
Hubbard model. A thorough comparison between the hydrogen lattice and Hubbard
model could provide insights of the roles of (i) the long range correlation
and (ii) the multi-band effect (with basis set larger than STO-6G). The
numerical study of hydrogen chain can be traced back to the 70s
with simple theoretical tools such as many body perturbation
theory (MBPT)\cite{Ross1976}. The rapid development of numerical algorithms
made it possible to achieve a better accuracy and thus plausible
conclusions\cite{Stella2011, Hachmann2006, Al-Saidi2007, Sinitskiy2010,
NguyenLan2016, Motta2017, Motta2019, Liu2020}.
Motta et al. benchmarked the equation of state\cite{Motta2017} and
explored the ground state properties\cite{Motta2019} of the hydrogen chain
with various popular numerical methods including DMRG and AFQMC. Despite the
numerous ground state simulations, the finite temperature study of hydrogen
lattices is rare, while the finite temperature study is crucial for
understanding the temperature-related phase diagrams. 
Liu et. al. studied the finite temperature behaviors of
hydrogen chain with the minimal basis set (STO-6G), and identified the 
signature of Pomeranchuk effect\cite{Liu2020}. However, the minimal basis
set hindered the exploration of more interesting phenomena caused by
the multi-band effect. A more thorough study 
beyond the minimal basis set is needed to reach a quantitative observation
of the finite temperature behaviors of the hydrogen lattices.

In this work, we apply \textit{ab initio} finite temperature density matrix
embedding theory (FT-DMET)~\citep{Sun2020} 
algorithm to study metal-insulator and magnetic crossovers in periodic 
one-dimensional and two-dimensional hydrogen lattices as a function
of temperature $T$ and H-H bond distance $R$. We also explore how
basis set size influences the shape of the phases by comparing the results
with STO-6G, 6-31G, and CC-PVDZ basis sets. The rest of the article is 
organized as follows: in Section~\ref{sec:abinit_ftdmet}, we present the
formulation and implementation details of \textit{ab initio} FT-DMET, 
including orbital localization, tricks to reduce the cost due to the 
two electron repulsion terms, bath truncation, impurity solver, and thermal
observables. In Section~\ref{sec:res_hlatt}, we demonstrate the  \textit{ab initio} FT-DMET algorithm by studies of the dissociation curves and 
phase transitions in a one-dimensional periodic hydrogen lattice. We finalize
this article with conclusions in Section~\ref{sec:conc_hlatt}.

\section{\textit{Ab initio} FT-DMET}\label{sec:abinit_ftdmet}
In our previous work\cite{Sun2020}, we introduced the basic formulation
of FT-DMET for lattice models. Going from lattice models to chemical
systems, there are several practical difficulties\cite{Cui2020}:
(i) the definition of impurity relies on the localization of the orbitals;
(ii) the number of orbitals in the impurity can be easily very large depending on
the infrastructure of the supercell and the basis set; (iii) manipulating
two-electron repulsion integrals in a realistic Hamiltonian is usually very expensive; (iv) an impurity solver which can handle \textit{ab initio} 
Hamiltonians efficiently at finite temperature is required.
In the rest of this section, we discuss solutions to the above challenges and
provide implementation details of \textit{ab initio} FT-DMET.

\subsection{Orbital localization}
Since we are dealing with periodic lattices, the whole lattice problem is 
described with Bloch (crystal) orbitals in the momentum space.
 Thus crystal atomic
orbitals (AOs) $\{\phi^{\textbf{k}}_{\mu}(\textbf{r})\}$ are a natural choice.
The definition of impurity, 
however, is based on real space localized orbitals~\citep{Edmiston1963}. Therefore we 
define a two-step transformation from Bloch orbitals to localized orbitals (LOs) 
$\{w_i(\mathbf{r})\}$. 
\begin{equation}\label{eq:lo_trans}
\begin{split}
w_i^{\mathbf{R}}(\mathbf{r}) =& \frac{1}{\sqrt{N_\mathbf{k}}}\sum_{\mathbf{k}}
e^{-i\mathbf{k}\cdot \mathbf{R}}w_i^{\mathbf{k}}(\mathbf{r}),\\
w_i^{\mathbf{k}}(\mathbf{r}) =& \sum_{\mu} \phi^{\textbf{k}}_{\mu}(\textbf{r})
C_{\mu i}^{\textbf{k}},
\end{split}
\end{equation}
where $C_{\mu i}^{\textbf{k}}$ transforms AOs in momentum space 
 $\{\phi^{\textbf{k}}_{\mu}(\textbf{r})\}$
into LOs  in momentum space $w_i^{\mathbf{k}}(\mathbf{r})$, and
LOs in real space $w_i^{\mathbf{R}}(\mathbf{r})$ are derived by
a Wannier summation over the local crystal orbitals $w_i^{\mathbf{k}}(\mathbf{r})$. 

With the \textit{ab initio} periodic system expressed in LOs, one could
choose the impurity to be spanned by the LOs in a single unit cell or 
supercell. In the rest of this paper, we choose the impurity to be
the supercell at the lattice origin.

To define the localization coefficients $C_{\mu i}^{\textbf{k}}$ in 
Eq.~\eqref{eq:lo_trans}, we use a bottom-up strategy: transform from 
AO computational basis to LOs. This strategy uses linear algebra to 
produce LOs, and thus avoids dealing with complicated optimizations.
There are several choices of LOs from the bottom-up strategy: 
L\"owdin and meta-L\"owdin orbitals~\citep{Lowdin1950,}, natural AOs (NAO)~\citep{Lowdin1956,Reed1985},
and intrinsic AOs (IAO)~\citep{KniziaJCTC2013}. In this work,
we used the $\mathbf{k}$-space unrestricted Hartree-Fock (KUHF)
function with density fitting in the quantum chemistry package PySCF\cite{PYSCF2017,PYSCF2020} to generate a set of crystal MOs.
Then we applied an adapted
IAO routine to generate a set of crystal IAOs from the crystal MOs with
 $\mathbf{k}$-point sampling. The crystal IAOs generated from this routine
are valence orbitals that exactly span the occupied space of the mean-field
 calculation. Note that the number of crystal IAOs is equal to the size of minimal basis only. To carry out calculations beyond the minimal basis, 
we construct the rest nonvalence orbitals to be projected AOs (PAOs)~\citep{Saebo1993},
orthogonalized with L\"owdin orthogonalization~\citep{Aiken1980}. This IAO+PAO
strategy has been used in previous ground state DMET calculations~\citep{WoutersJCTC2016,Cui2020}. 

\subsection{Bath truncation and finite temperature bath}
In the standard DMET routine, the bath orbitals used to construct the 
embedding space are obtained from the SVD of the mean-field off-diagonal
density matrix between the impurity and the remaining lattice (called environment) $\gamma^{\mathbf{R}\neq\mathbf{0},\mathbf{0}}$
\begin{equation}\label{eq:svd_bath}
\gamma^{\mathbf{R}\neq\mathbf{0},\mathbf{0}} = B^{\mathbf{R}\neq\mathbf{0}}\Lambda V^{\mathbf{0}\dag},
\end{equation}
where $B^{\mathbf{R}}$ defines the coefficients of bath orbitals in the LO 
basis. Thus we can construct the projection matrix in real space
\begin{equation}\label{eq:proj_mat_R}
P^{\mathbf{R}} = \begin{pmatrix}
\mathbf{I} & \mathbf{0} \\
\mathbf{0} & \mathbf{B^{R\neq 0}}
\end{pmatrix}.
\end{equation}
The projection in momentum space can be derived from Eq.~\eqref{eq:proj_mat_R}
with Wannier transformation
\begin{equation}
P^{\mathbf{k}} = \sum_{\mbf{R}}e^{-i \mbf{k\cdot R}} P^{\mathbf{R}}.
\end{equation}
Note that the projection matrices $P^{\mathbf{R}}$ and $P^{\mathbf{k}}$ 
are in the basis of LOs, and to obtain the $\mbf{k}$-space transformation 
matrix in AOs, simply multiply $C^{\mbf{k}}$ from Eq.~\eqref{eq:lo_trans}
to the left of $P^{\mathbf{k}}$.

From Eq.~\ref{eq:svd_bath}, one generates a set of bath orbitals with the same size
as the impurity. This setting is valid and efficient for model systems, and
the embedding space is purely constructed with valence bands.
However, for \textit{ab inito} calculations, low-lying core and high-energy
virtual impurity orbitals will not entangle with the environment, and thus
with the bath orbitals. In practice, this results in singular values in
the SVD of Eq.~\eqref{eq:svd_bath}, leading to difficulties in the 
convergence of the DMET self-consistency procedure. To overcome this difficulty,
we use the following strategy~\citep{WoutersJCTC2016}: we identify the 
impurity orbitals as core, valence, and virtual orbitals, and then only take
valence columns of the off-diagonal density matrices of the off-diagonal 
density matrix to construct the bath orbitals, as illustrated in 
Fig.~\ref{fig:svd_valence}. With this strategy, the 
size of bath orbitals is equal to the size of valence impurity orbitals,
and thus the number of embedding orbitals is reduced from $2n_{\text{imp}}$ to 
$n_{\text{imp}} + n_{\text{val}}$. Note that if pseudopotential is included
in the calculation, there are no core orbitals.

\begin{figure}[h!]
\centering
\begin{subfigure}[t]{0.8\textwidth}
\includegraphics[width=\textwidth]{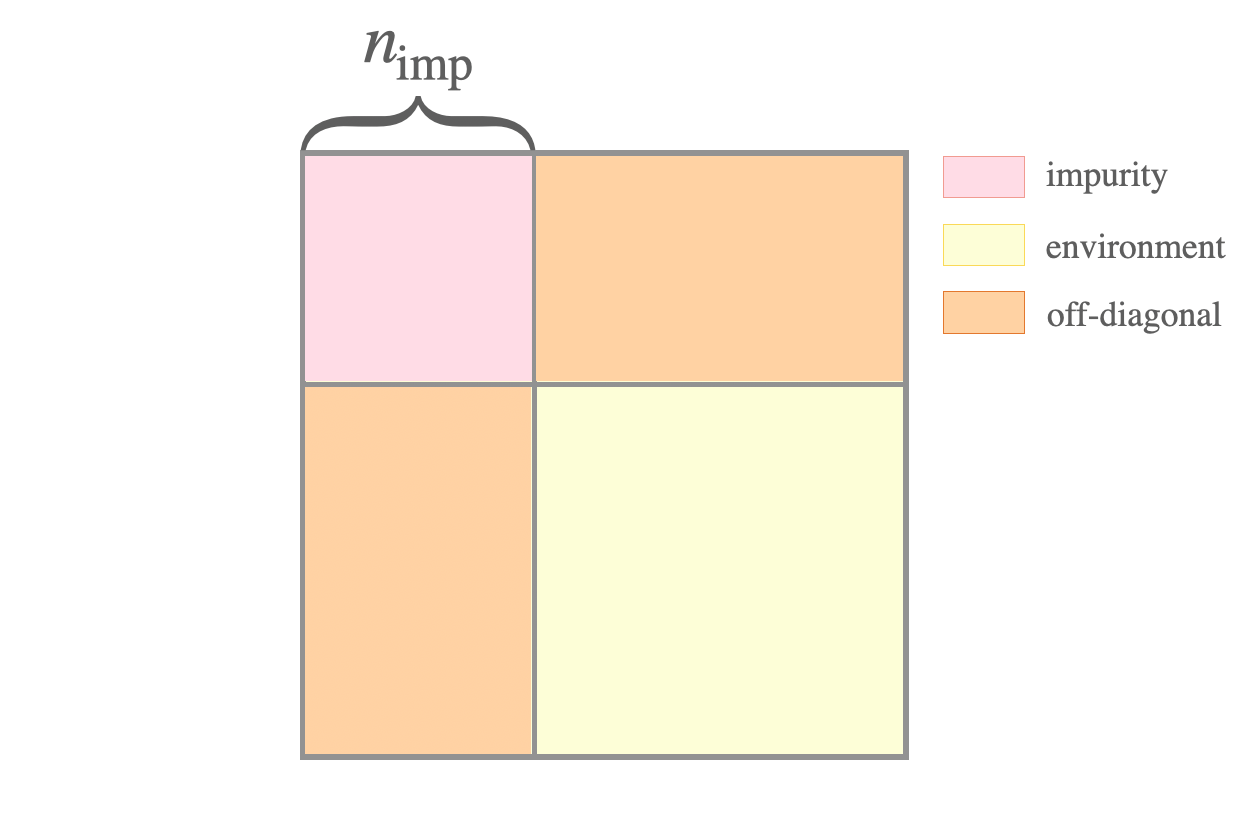}
\caption{ }
\end{subfigure}
\begin{subfigure}[t]{0.8\textwidth}
\includegraphics[width=\textwidth]{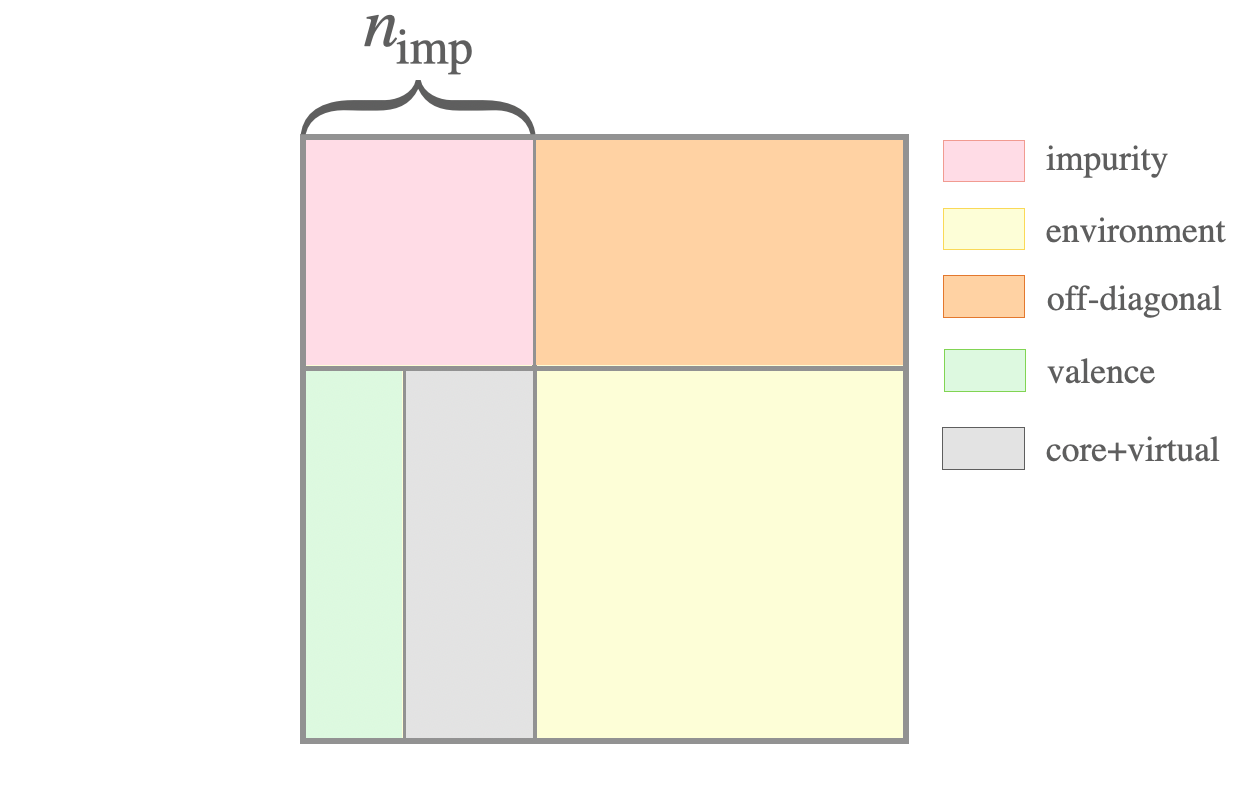}
\caption{ }
\end{subfigure}
\caption{Bath orbitals from singular value decomposition (SVD) of the off-diagonal block of the
mean-field density matrix. The whole square represents the mean-field
density matrix of size $N\times N$, with $N$ being the total number of orbitals, and the first $n_{\text{imp}}$ columns/rows of the matrix are orbitals
in the impurity. (a) Standard DMET routine computes the bath orbitals by the 
SVD of the off-diagonal block (orange blocks on the left and top, the block
on the left corresponds to Eq.~\eqref{eq:svd_bath}). (b) \textit{Ab initio}
DMET computes the bath orbitals by the SVD of only the valence columns in 
the off-diagonal blocks (green block).}\label{fig:svd_valence}
\end{figure}

At finite temperature, electronic occupation numbers of the energy levels 
are ruled by the Fermi-Dirac distribution
\begin{equation}\label{eq:fd_hlatt}
f(\varepsilon_i) = \frac{1}{1+e^{\beta(\varepsilon_i-\mu)}},
\end{equation}
where $\varepsilon_i$ is the energy of the $i$th molecular orbital, 
$\beta = \frac{1}{T}$ is the inverse temperature (we set the Boltzmann's constant $k_B = 1$) and $\mu$ is the chemical potential or the energy 
of the Fermi level at ground state. When $\beta = \infty$, 
Eq.~\eqref{eq:fd_hlatt} reproduces the ground state electron number
distribution: when $\varepsilon_i < \mu$, the occupation number is $1$ (occupied
orbitals), and when $\varepsilon_i > \mu$, the occupation number is $0$ 
(virtual orbitals). However, when $\beta$ is finite, the 
electronic occupation number on virtual orbitals is no longer $0$. The 
extreme case is when $\beta = 0$ where all energy levels are uniformly occupied
with occupation number $f = 0.5$. Therefore, the ground state bath
construction described previously is no longer suitable to provide an
accurate embedding Hamiltonian. There are generally two strategies:
(1) include part of the core and virtual orbitals into the off-block for SVD;
(2) obtain additional bath orbitals from higher powers of the mean-field
density matrix~\citep{Sun2020}: take the valence columns of the off-diagonal
blocks of $\gamma$, $\gamma^2$, ..., $\gamma^l$ and apply SVD to them,
respectively, to get $l$ sets of bath orbitals, then put the bath orbitals 
together and perform orthogonalization to produce the final bath orbitals. 
The disadvantage of the first strategy is obvious: as temperature gets 
higher, the Fermi-Dirac curve in Fig.~\ref{fig:fd_curve} gets flatter,
and thus more non-valence orbitals are needed. Compared to the first strategy,
the latter strategy generally requires less number of bath orbitals.
For most systems, truncating
$l$ to $2$ or $3$ is already enough for the whole temperature spectrum,
therefore the number of embedding orbitals is $n_{\text{imp}} + ln_{\text{val}}$.
Since the number of valence orbitals is much smaller than that of the
non-valence orbitals, DMET with bath orbitals derived from the second
strategy is more economic and stable. In this paper, we adopt the second
strategy for our FT-DMET calculations.

\begin{figure}[h!]
     \centering
     \includegraphics[width=1.\textwidth]{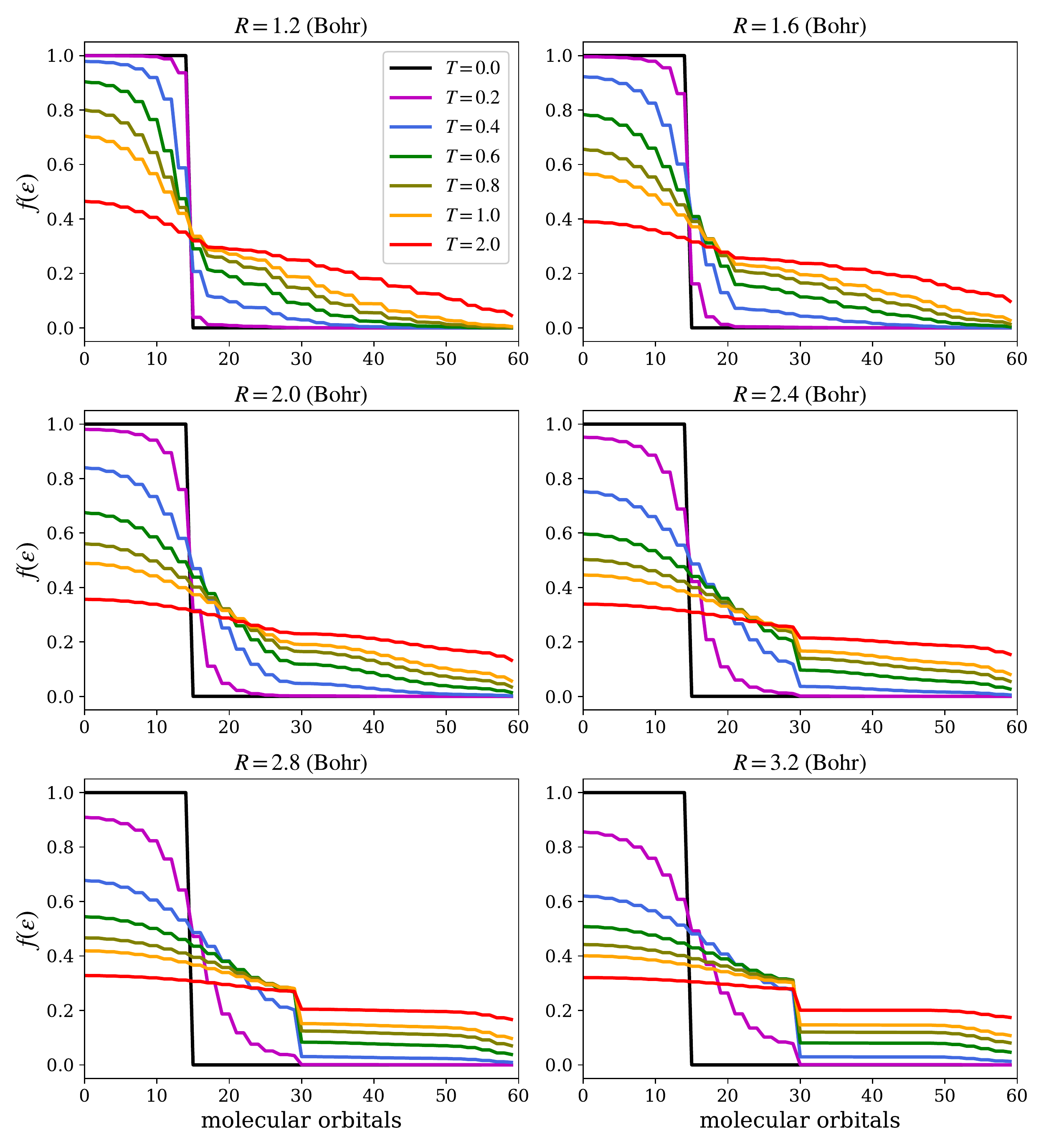}
     \caption{Fermi-Dirac distribution of electrons on Hartree-Fock molecular 
    orbitals for H$_{30}$ chain with STO-6G basis. $T$ is in unit Hartree.}\label{fig:fd_curve}
\end{figure}

\subsection{Embedding Hamiltonian}
There are two choices of constructing the embedding Hamiltonian: (i)
interacting bath formalism and (ii) non-interacting bath formalism
~\citep{WoutersJCTC2016}. We pick the interacting bath formalism to
restore most of the two-body interactions. The embedding Hamiltonian
constructed from interacting bath formalism has the form
\begin{equation}
H_{\text{emb}} = \sum{pq}F^{\text{emb}}_{pq}c^{\dag}_pc_q - \mu\sum_{p\in \text{imp}} c^{\dag}_pc_p + \frac{1}{2}\sum_{pqrs}\left(pq|rs\right)c^{\dag}_pc^{\dag}_r
c_sc_q.
\end{equation}
Note that we use $p,q,r,s$ to index embedding orbitals and $i,j,k,l$
to index lattice orbitals. A chemical potential $\mu$ is added to only 
apply on the impurity, making sure that the number of electrons on 
the impurity is correct during the DMET self-consistency.

The embedding Fock matrix $F^{\text{emb}}$ is obtained by projecting
the lattice Fock in AOs to the embedding orbitals. Using
$\tilde{P}^{\mbf{k}} = C^{\mbf{k}}P^{\mbf{k}}$ to denote the projection
operator, one computes the embedding Fock matrix by
\begin{equation}\label{eq:fock_rotate}
\tilde{F} = \frac{1}{N_{\mbf{k}}}\sum_{\mbf{k}}\tilde{P}^{\mbf{k}\dagger}
F^{\mbf{k}} \tilde{P}^{\mbf{k}}
\end{equation}
where $F^{\mbf{k}}$ is the lattice Fock matrix in $\mbf{k}$-space AO basis. 
To avoid double counting, we subtract the contribution of the embedding
 electron repulsion integrals (ERIs) from $\tilde{F}$
\begin{equation}
F^{\text{emb}}_{pq} = \tilde{F}_{pq} - 
\left[\sum_{rs} \left(pq|rs\right)\gamma^{\text{emb}}_{sr} - \frac{1}{2}  \left(pr|sq\right)\gamma^{\text{emb}}_{rs}\right]
\end{equation}
where $\gamma^{\text{emb}}$ is the lattice density matrix rotated to the
embedding space.

The time-consuming part is the construction and projection of the two-electron
ERIs to the embedding space. To reduce the cost, we use density 
fitting~\citep{Whitten1973,Sun2017} to convert the four-center ERIs to the three-center ERIs,
\begin{equation}
\left(\mu\mbf{k}_{\mu} \nu\mbf{k}_{\nu}| \kappa\mbf{k}_{\kappa}\lambda
\mbf{k}_{\lambda}\right) \approx \sum_{L\mbf{k}_L} \left(\mu\mbf{k}_{\mu} \nu\mbf{k}_{\nu}| L\mbf{k}_L\right) \left(L\mbf{k}_L|\kappa\mbf{k}_{\kappa}\lambda
\mbf{k}_{\lambda}\right)
\end{equation}
where $L\mbf{k}_L$ is the auxiliary basis and only three $\mbf{k}$ indices are
independent due to the conservation of momentum: $\mbf{k}_{L} = \mbf{k}_{\mu}
- \mbf{k}_{\nu} +n\mbf{q}$ ($n\mbf{q}$ is the integer multiple of reciprocal lattice vectors). The auxiliary basis
used in this work is a set of chargeless Gaussian crystal orbitals
with the divergent part of the Coulomb term treated in Fourier space~\citep{Sun2017}. Density fitting with the above auxiliary basis is called 
 Gaussian density fitting (GDF). In practice, we first transform 
three-center ERIs from the lattice orbitals to the embedding orbitals
with cost $\mathcal{O}\left(n_{\mbf{k}}^2n_L n_{\text{lat}}n_{\text{emb}^2}
+ n_{\mbf{k}}^2n_L n_{\text{lat}}^2n_{\text{emb}}\right)$; then we convert
the three-center ERIs back to the four-center ERIs in the embedding
space with cost $n_{\mbf{k}}n_Ln_{\text{emb}}^2$. The computational
cost is significantly reduced compared to direct transformation 
with cost $\mathcal{O}\left(n_{\mbf{k}^3n_{\text{lat}^5}}\right)$.

\begin{figure}[t!]
\centering
\begin{subfigure}[t]{0.7\textwidth}
\includegraphics[width=\textwidth]{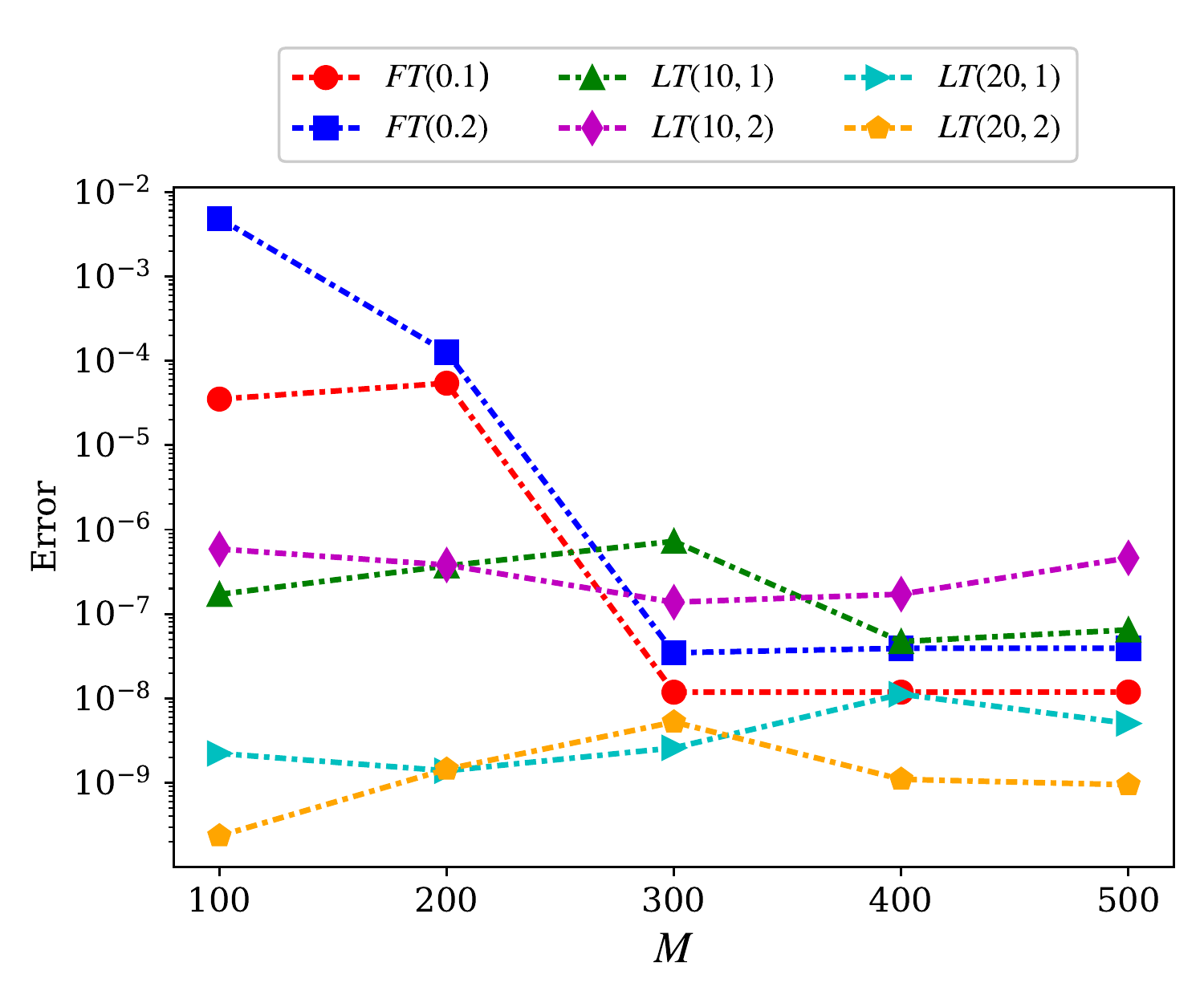}
\caption{$R = 1.5$ Bohr}
\end{subfigure}
\begin{subfigure}[t]{0.7\textwidth}
\includegraphics[width=\textwidth]{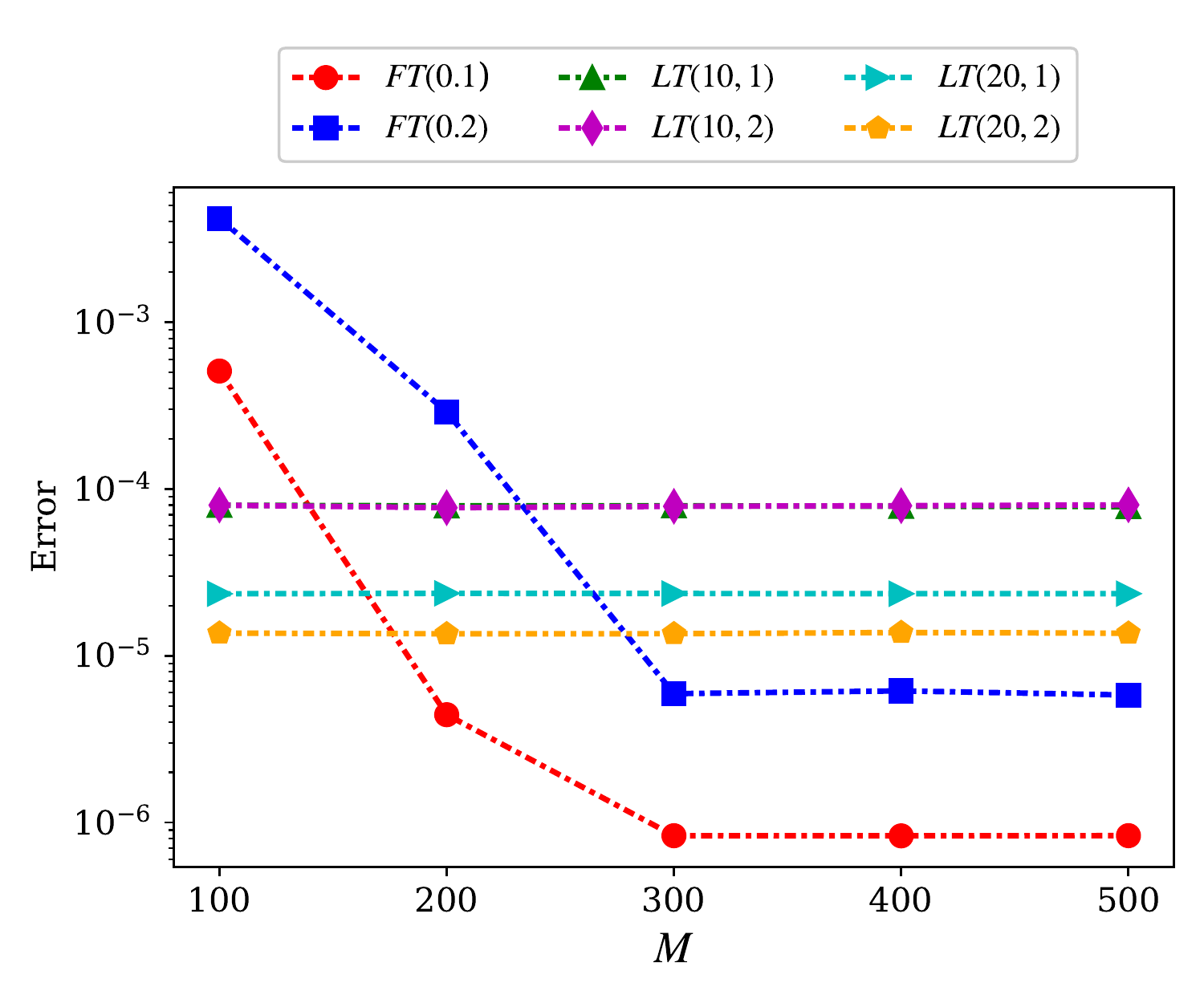}
\caption{$R = 3.0$ Bohr}
\end{subfigure}
\caption{Accuracy test on FT-DMRG and LT-DMRG solvers against exact diagonalization. The label "FT($x$)" stands for FT-DMRG solver with $\tau = x$, and 
the label "LT($x, y$)" stands for LT-DMRG solver with $x$ Davidson roots and 
$y$ electron deviations from half-filling for both spins.}\label{fig:dmrg_solvers}
\end{figure}

\subsection{Impurity solver}
An accurate finite temperature algorithm is required as the impurity
solver. In this work, we use homemade finite temperature
exact diagonalization (FT-ED) and finite temperature density matrix
renormalization group (FT-DMRG) for small and large impurity 
problems, respectively. In particular, there are two ways to implement 
FT-DMRG: (1) imaginary time evolution from an enlarged Hilbert space,
also known as the purification approach~\citep{Feiguin2005} (referred as
FT-DMRG) ; and (2)
using Davidson diagonalization to generate a set of low-energy levels
to be used in the grand canonical statistics (referred as low temperature
DMRG, LT-DMRG). While FT-DMRG can be used for the whole temperature spectrum,
LT-DMRG is especially for low temperature calculations. Because most
of the phase transitions happen at the low temperature regime, LT-DMRG can provide
accurate enough calculations with lower cost compared to FT-DMRG. 

Since FT-DMRG is based on imaginary time evolution from inverse
temperature $\beta = 0$, the entanglement grows rapidly as $\beta$ increases,
and at low temperature a bond dimension that is much larger than the 
ground state bond dimension is required. Another error source of FT-DMRG
is the imaginary time step $\tau = \beta / N$, where $N$ is the number
of time steps. For symmetrized Trotter-Suzuki approximation, the local
trucation error is on the order of $\mathcal{O}(\tau^3)$, while the total
accumulated error is on the order of $\mathcal{O}(\tau^2)$. If the 4th order
Runge-Kutta (RK4) method is used, the local truncation error is on the 
order of $\mathcal{O}(\tau^5)$ and the total
accumulated error is on the order of $\mathcal{O}(\tau^4)$. The FT-DMRG
used in the calculations of this work used the RK4 method.
Note that since
the matrix product state (MPS) truncation is applied at every time step,
having a too small $\tau$ will lead to a large accumulation of MPS truncation 
errors. Therefore, one needs to choose the $\tau$ value to be not too small
to introduce large MPS truncation errors and not too big to introduce large
time evolution truncation errors. The error source of LT-DMRG is from the 
truncation of the grand canonical summation and the number of roots in 
the Davidson diagonalization. Generally the ground state bond dimension is 
enough for the low temperature calculations with LT-DMRG.

An assessment of the accuracy of FT-DMRG and LT-DMRG solvers is shown in 
Fig.~\ref{fig:dmrg_solvers}. The embedding system is composed of two impurity
orbitals and two bath orbitals, generated from a $6$-site hydrogen chain 
with the STO-6G basis at $R = 1.5$ and $3.0$ Bohr at $\beta = 20$. Exact 
diagonalization (ED) is chosen as the exact reference. 
In Fig.~\ref{fig:dmrg_solvers}, we try to understand the role of imaginary
time step $\tau$ in FT-DMRG solver and the roles of the number of 
Davidson roots and the size of the truncated grand canonical space in the
LT-DMRG solver. At $\beta = 20$, the smaller $\tau$ (red lines) gave a
smaller error compared to $\tau = 0.2$ (blue lines), and the FT-DMRG
results converged at $M \sim 300$. The errors of LT-DMRG solver do not change
too much with the bond dimension $M$, so $M=100$ is already enough for 
a $4$-site system. The accuracy at $R = 1.5$ Bohr is generally better than 
$R = 3.0$ Bohr, since larger $R$ corresponds to stronger correlation. 
At larger $R$, one needs to include a larger number of Davidson roots to 
achieve high enough accuracy with LT-DMRG.  
Generally with large enough bond dimension $M$, FT-DMRG could provide
more accurate results. However, when the embedding system is too large
to use a large $M$, one could consider the LT-DMRG method. In DMET calculations,
we use $ED$ solver for $L_{\text{emb}} < 8$ embedding problems and use
FT-DMRG solver for larger problems.

\subsection{Thermal observables}
In order to identify the metal-insulator transition and the N\'eel transition 
and explore the mechanism behind the crossings, we compute the following
order parameters: staggered magnetic moment $m$, double occupancy $D$, 
complex polarization $Z$,
spin-spin correlation functions $\mathcal{C}_{ss}$, and 
charge-charge correlation functions $\mathcal{C}_{cc}$. 

\noindent\emph{Staggered magnetic moment.} The staggered magnetic momentum
is calculated as
\begin{equation}\label{eq:hlatt_mag}
m = \frac{1}{N^{\text{imp}}}\sum_{i\in \text{imp}}\frac{|n_{i, \uparrow} - n_{i, \downarrow}|}{2},
\end{equation}
where $N^{\text{imp}}$ is the total number of H atoms in
the impurity (supercell), and $n_{i, \uparrow}$
and $n_{i, \downarrow}$ are electron numbers on $i$th atom with up spin 
and down spin, respectively. Note that if one uses Bohr magneton 
$\mu_B = \frac{e\hbar}{2m_e} = \frac{1}{2}$ as the unit, then one would
drop $2$ in the denominator in Eq.~\eqref{eq:hlatt_mag}. 
To evaluate $n_{i,\uparrow}$ on the $i$th atom, we first 
compute the one-particle impurity density matrix for up-spin 
in the IAO basis, and then 
sum up the diagonal terms that belong to the $i$th atom. For example,
when STO-6G basis is used, the occupation numbers on $1s$ orbital and 
$2s$ orbital of atom-$i$ sum up to the electron density on atom-$i$.
$n_{i,\downarrow}$ is evaluated in the same way from the down-spin
one-particle impurity density matrix.

\noindent\emph{Double occupancy.} The double occupancy measures the probability
of two electrons with opposite spins occupying the same hydrogen atom,
calculated by
\begin{equation}\label{eq:docc_hlatt}
D = \frac{1}{N^{\text{imp}}}\sum_{i\in \text{imp}} \langle \hat{n}_{i\uparrow}\hat{n}_{i, \downarrow}\rangle.
\end{equation}
Note that the hat on $\hat{n}_{i\uparrow}$ denotes that it is an operator, not
a number, with $n_{i\uparrow} = \langle \hat{n}_{i\uparrow}\rangle$. 
Since there are multiple bands on each atom, we expand $\hat{n}_{i\uparrow}$
as 
\begin{equation}
\hat{n}_{i\uparrow} = \sum_w \hat{n}^w_{i\uparrow},
\end{equation}
where $w$ is the index of the bands on the $i$-th atom ( e.g., $1s$, $2s$,
$2p_x, 2p_y, 2p_z, ...$). Therefore, the precise expression of double occupancy
becomes
\begin{equation}
D = \frac{1}{N^{\text{imp}}}\sum_{i\in \text{imp}}\sum_{w, w'\in i} 
\langle \hat{n}^{w}_{i\uparrow}\hat{n}^{w'}_{i, \downarrow}\rangle.
\end{equation}

\noindent\emph{Complex polarization.}
Complex polarization measures the mobility of electrons, and thus can be
used as an indicator of metal-insulator transition. The definition of
complex polarization on $z$ direction is
\begin{equation}
Z = \langle e^{i\frac{2\pi}{L}\hat{z}} \rangle,
\end{equation}
where $L$ is the chain length and $\hat{z}$ is the location operator 
in the $z$-direction. When $Z = 0$, electrons are delocalized and the system is metallic; when 
$Z = 1$, electrons are localized and the system is insulating. 
At mean-field level, the ground state is a Slater determinant $|\phi\rangle$
 of occupied
orbitals, so the complex polarization is evaluated by
\begin{equation}
Z = \langle \phi| e^{i\frac{2\pi}{L}\hat{z}}|\phi \rangle,
\end{equation}
which is equivalent to 
\begin{equation}
Z = \text{Det}\left[C_{\text{occ}}^{\dag} e^{i\frac{2\pi}{L}z} C_{\text{occ}}\right],
\end{equation}
where $C_{\text{occ}}$ represents the coefficients of occupied orbitals.

At finite temperature, we use a thermofield approach~\citep{Harsha2019} 
from our recent work (see Chapter~\ref{chp:cp}). We construct the infinite temperature
determinant with an enlarged Hilbert space $\tilde{\phi}$, and thermofield
operators of the Hamiltonian $\tilde{H}$ and position operator $\tilde{z}$.
Then the finite temperature complex polarization is evaluated by
\begin{equation}
Z(\beta) = \frac{1}{\mathcal{Z}} \langle \tilde{\phi} | e^{-\beta(\tilde{H})}e^{i\frac{2\pi}{L}\tilde{z}} | \tilde{\phi}\rangle,
\end{equation}
where $\mathcal{Z} \langle \tilde{\phi} | e^{-\beta(\tilde{H})} | \tilde{\phi}\rangle$ is the partition function.

\noindent\emph{Spin-spin correlation and charge-charge correlation functions.}
We define the two correlation functions as follows:
\begin{equation}
\begin{split}
\mathcal{C}^{ss}_i =& \langle (\hat{n}_0^{\uparrow} - \hat{n}_0^{\downarrow})
( \hat{n}_i^{\uparrow} - \hat{n}_i^{\downarrow})\rangle
                    - \langle \hat{n}_0^{\uparrow} - \hat{n}_0^{\downarrow}\rangle
\langle \hat{n}_i^{\uparrow} - \hat{n}_i^{\downarrow}\rangle,\\
\mathcal{C}^{cc}_i =&  \langle (\hat{n}_0^{\uparrow} + \hat{n}_0^{\downarrow})
( n_i^{\uparrow} + n_i^{\downarrow})\rangle
                    - \langle \hat{n}_0^{\uparrow} + \hat{n}_0^{\downarrow}\rangle
\langle n_i^{\uparrow} + n_i^{\downarrow}\rangle,
\end{split}
\end{equation}
where $\hat{n}_i^{\sigma}$ is the electron density operator of spin $\sigma$
on site $i$.

\section{Results}\label{sec:res_hlatt}
In this section, we show some preliminary results of FT-DMET calculations
on the hydrogen chain system with periodic boundary condition. 
First, we examine the basis set effect on a $22$-atom chain with $2$ atoms
in the impurity. Fig.~\ref{fig:mag_22k} shows the magnetic moment at
both ground state and $T=0.02$ Hartree calculated by DMET with STO-6G, 
6-31G, and 
CC-PVDZ basis sets. Paramagnetic-antiferromagnetic
(PM-AFM) transition is observed at both ground state and $T=0.02$ Hartree. A very interesting behavior of the magnetic moment at
ground state is observed: with STO-6G, the magnetic moment drops at
$R > 3.0$ Bohr, while with larger basis sets, this drop did not happen.
The reason for the above behaviors could be due to the loss of 
entanglement between different sites at large $R$. Imagine at
$R = \infty$, the system should behave as $22$ individual atoms, and one 
should expect the ground state to be paramagnetic. With more diffused orbitals
(e.g., $2s$ and $2p$ orbitals), however, the entanglement between different 
sites decays slower as $R$ increases. Note that since the impurity size
is only $2$ atoms, one only needs to consider the entanglement 
between adjacent sites. Once the impurity gets larger, a drop of magnetic
moment with larger basis sets should also be expected. At $T = 0.02$ (left
panel), the magnetic moment computed with all three basis sets dropped
as $R$ increases, as a consequence of thermal dissipation.

\begin{figure}[t!]
\centering
\begin{subfigure}[t]{0.7\textwidth}
\includegraphics[width=\textwidth]{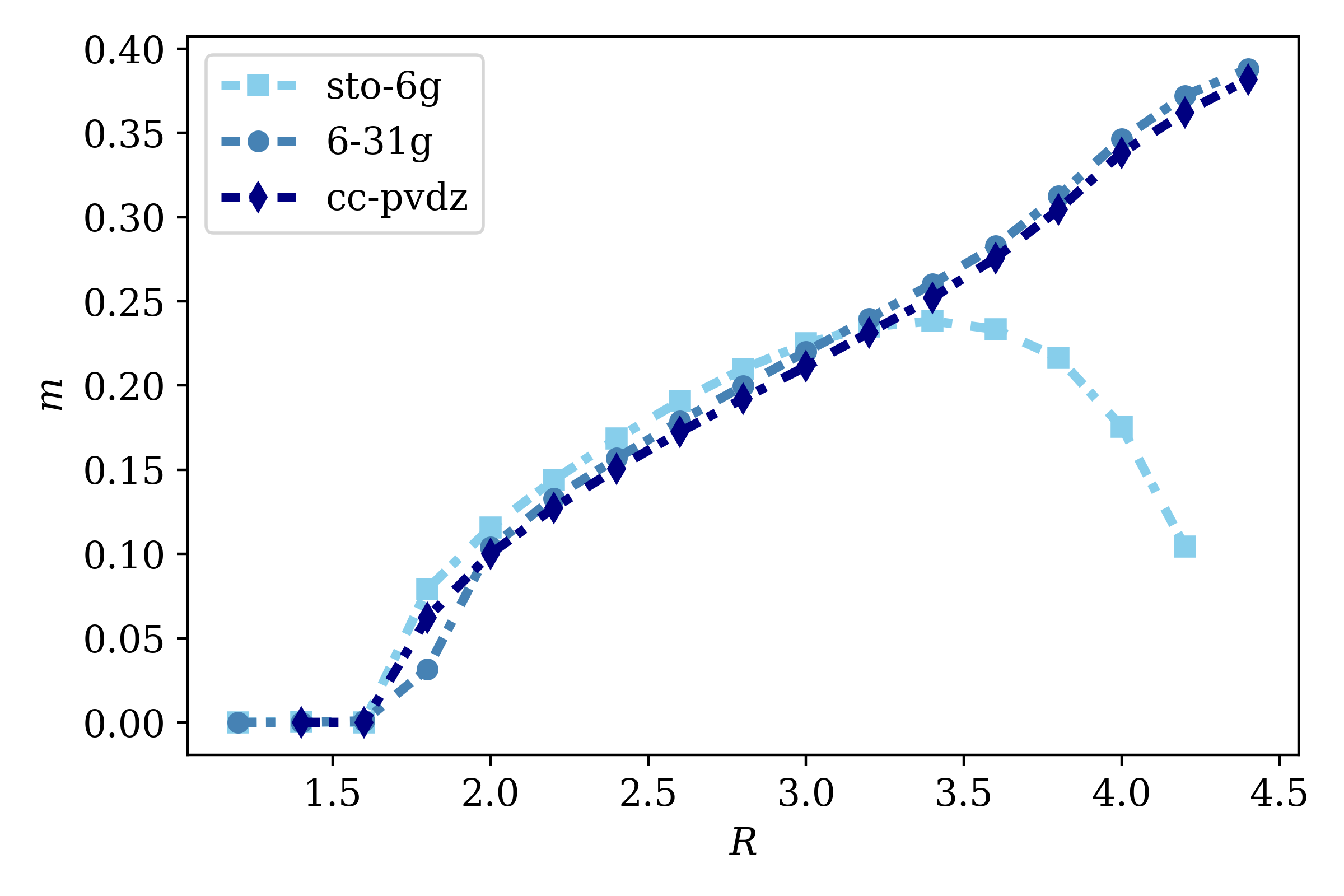}
\caption{ }
\end{subfigure}
\begin{subfigure}[t]{0.7\textwidth}
\includegraphics[width=\textwidth]{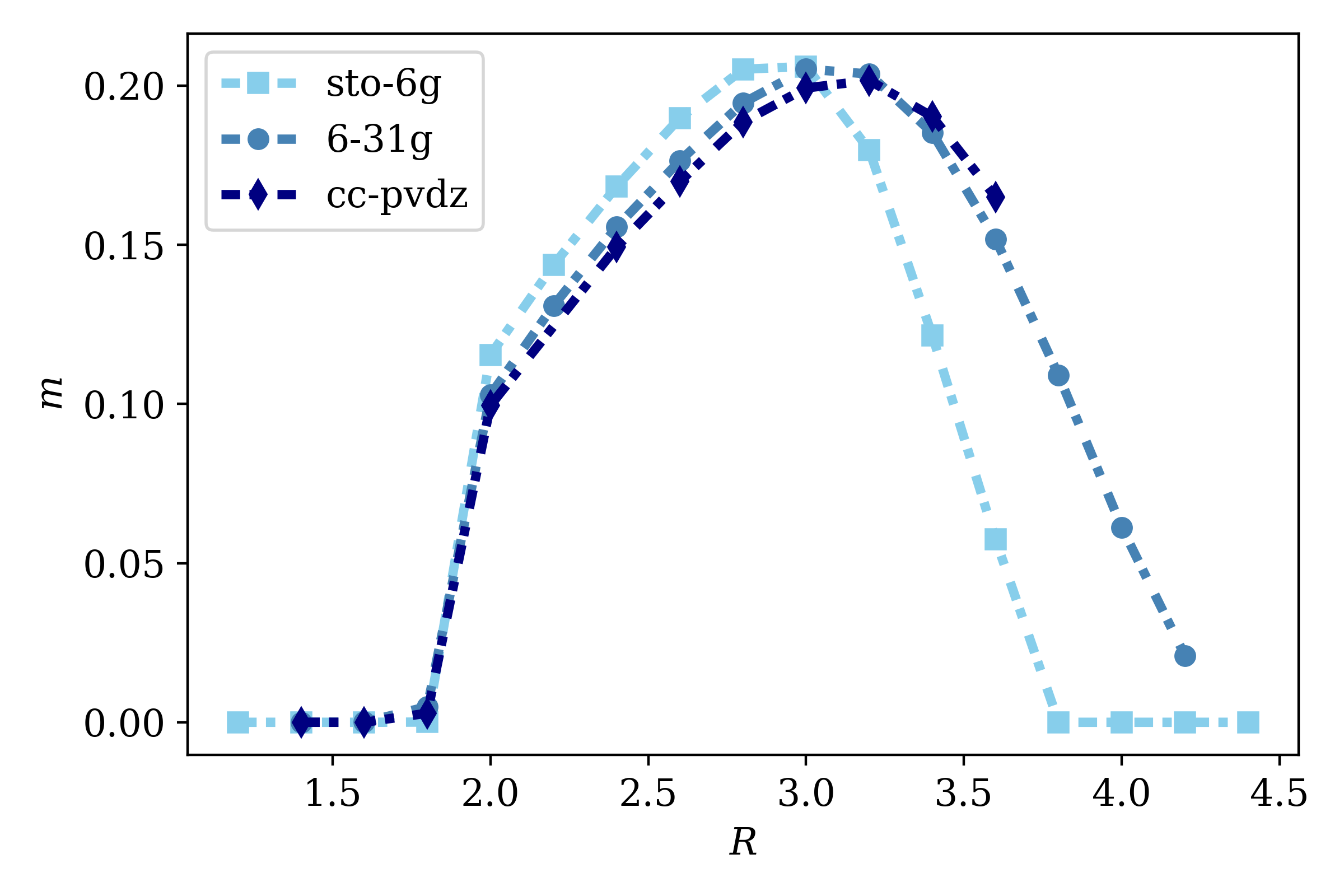}
\caption{ }
\end{subfigure}
\caption{Magnetic moment of a $22$-atom chain at ground state (left panel)
 and $T=0.02$ Hartree (right panel) with STO-6G, 6-31G, and CC-PVDZ basis sets. 
 }\label{fig:mag_22k}
\end{figure}

We further show the double occupancy from the above simulation settings in 
Fig.~\ref{fig:docc_11k}. A clear change of the gradient of $D$ as a function
of $R$ is observed for both ground state and $T=0.02$ Hartree, indicating 
a metal to insulator transition. The transition $R$ is around $1.6\sim 1.8$
Bohr, which agrees with the transition $R$ of PM-AFM transition, resulting
in a PM metal phase and AFM insulating phase.

\begin{figure}[t!]
\centering
\begin{subfigure}[t]{0.7\textwidth}
\includegraphics[width=\textwidth]{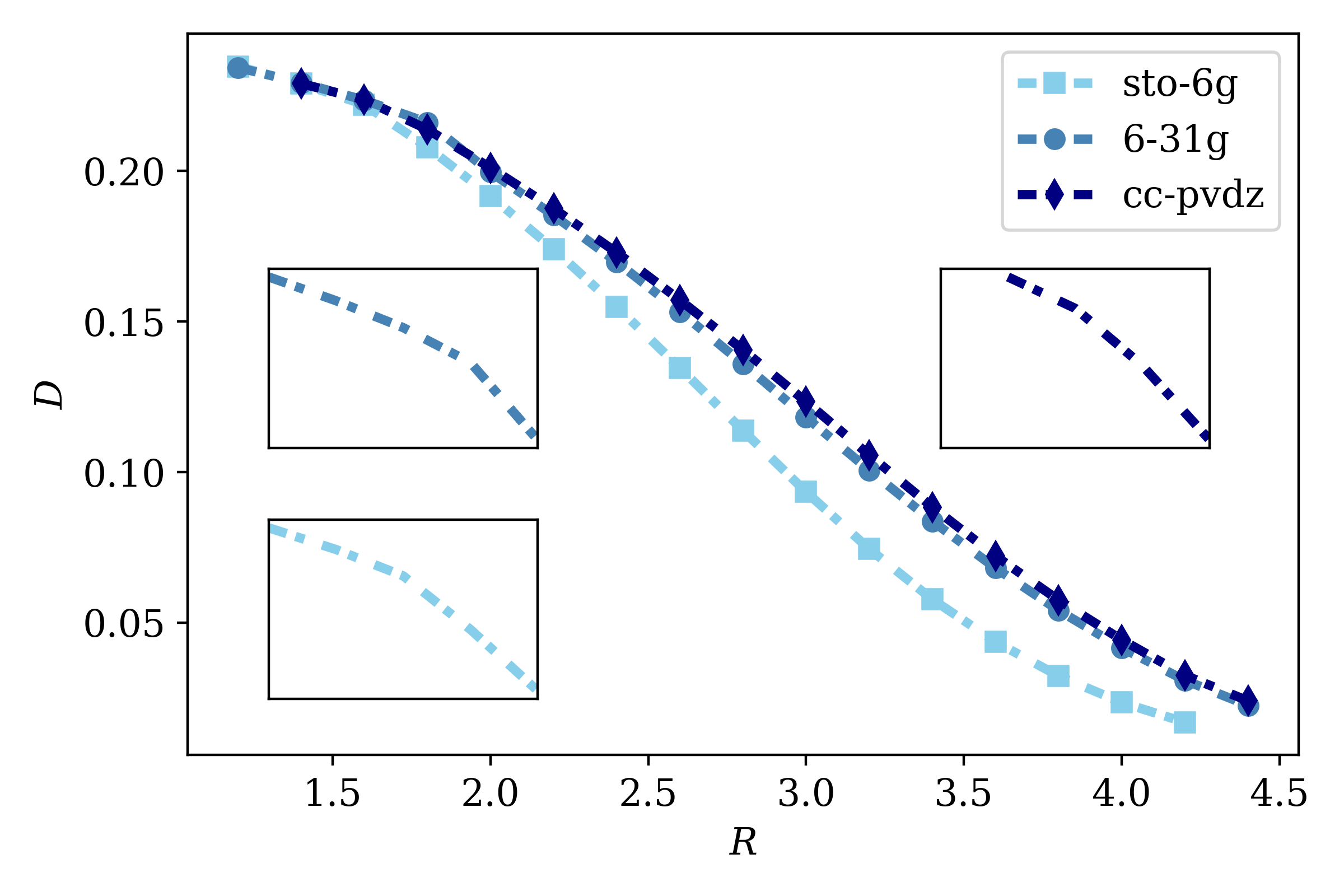}
\caption{ }
\end{subfigure}
\begin{subfigure}[t]{0.7\textwidth}
\includegraphics[width=\textwidth]{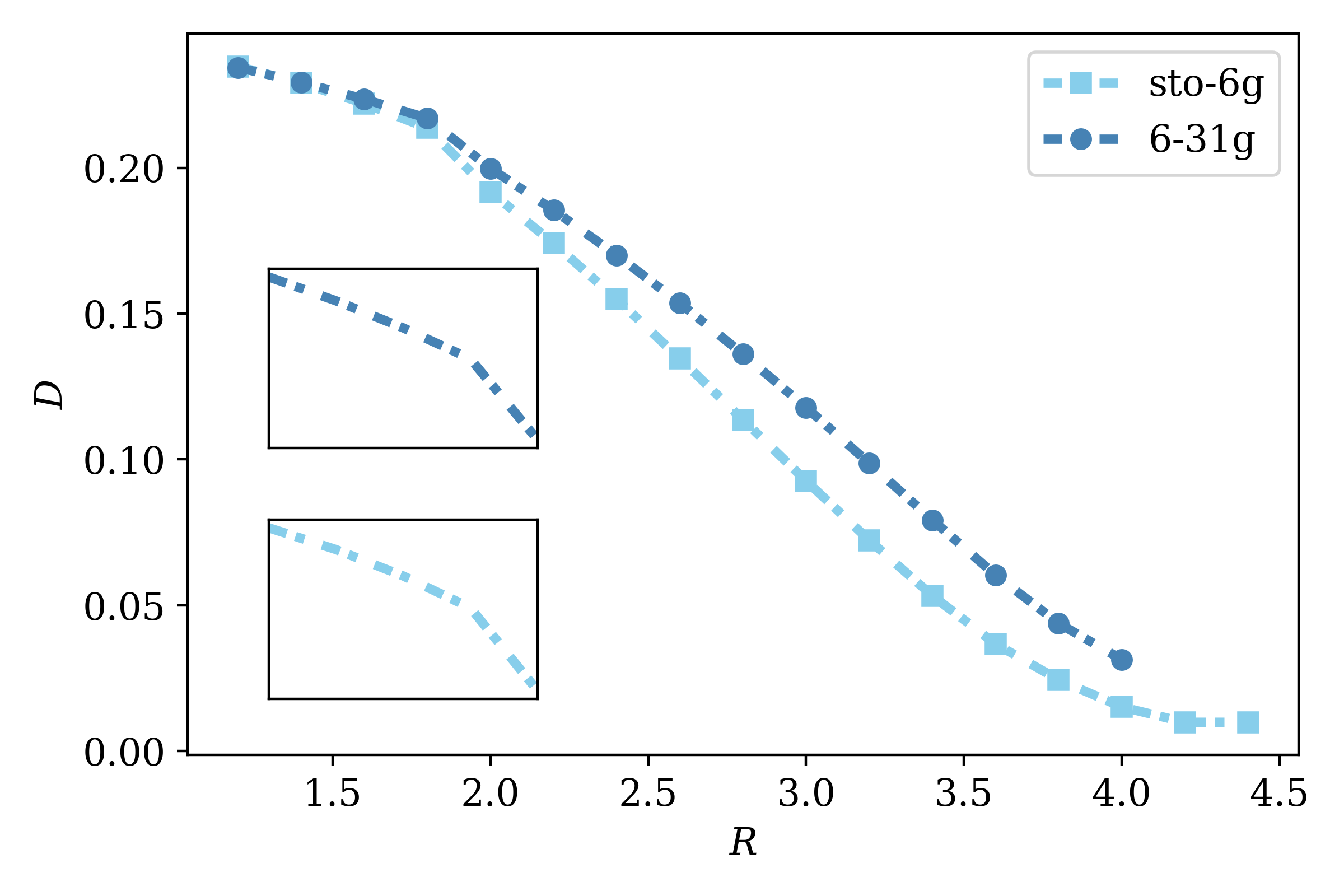}
\caption{ }
\end{subfigure}
\caption{Double occupancy of a $22$-atom chain at ground state (left panel)
 and $T=0.02$ Hartree (right panel) with STO-6G, 6-31G, and CC-PVDZ basis sets. 
The insets show a sudden change of the gradient of $D$ as a function of
$R$, indicating metal to insulator transition.
 }\label{fig:docc_11k}
\end{figure}

%

Next we increase the total number of atoms in the hydrogen chain to  $50$
atoms to eliminate the finite size effect of the total system size.
STO-6G is used as the basis set.
The impurity is
composed of two hydrogen atoms, and solved by ED. 
We first present the 
energy calculations and dissociation curve of the hydrogen chain,  shown in Fig.~\ref{fig:energy_hlatt}, where the 
energy per electron at $T=0.05$ is compared to FT-AFQMC~\citep{Liu2020}
results. The AFQMC calculation used the STO-6G basis set and  a supercell with $10$ atoms and $5$
$k$ points. The two energy curves predicted the same dissociating 
trend and equilibrium point ($\sim 1.8$ Bohr). However, the DMET energies
are all lower than the AFQMC energies, which could be due to the finite
impurity size effect.
\begin{figure}
    \centering
    \includegraphics[width=0.85\textwidth]{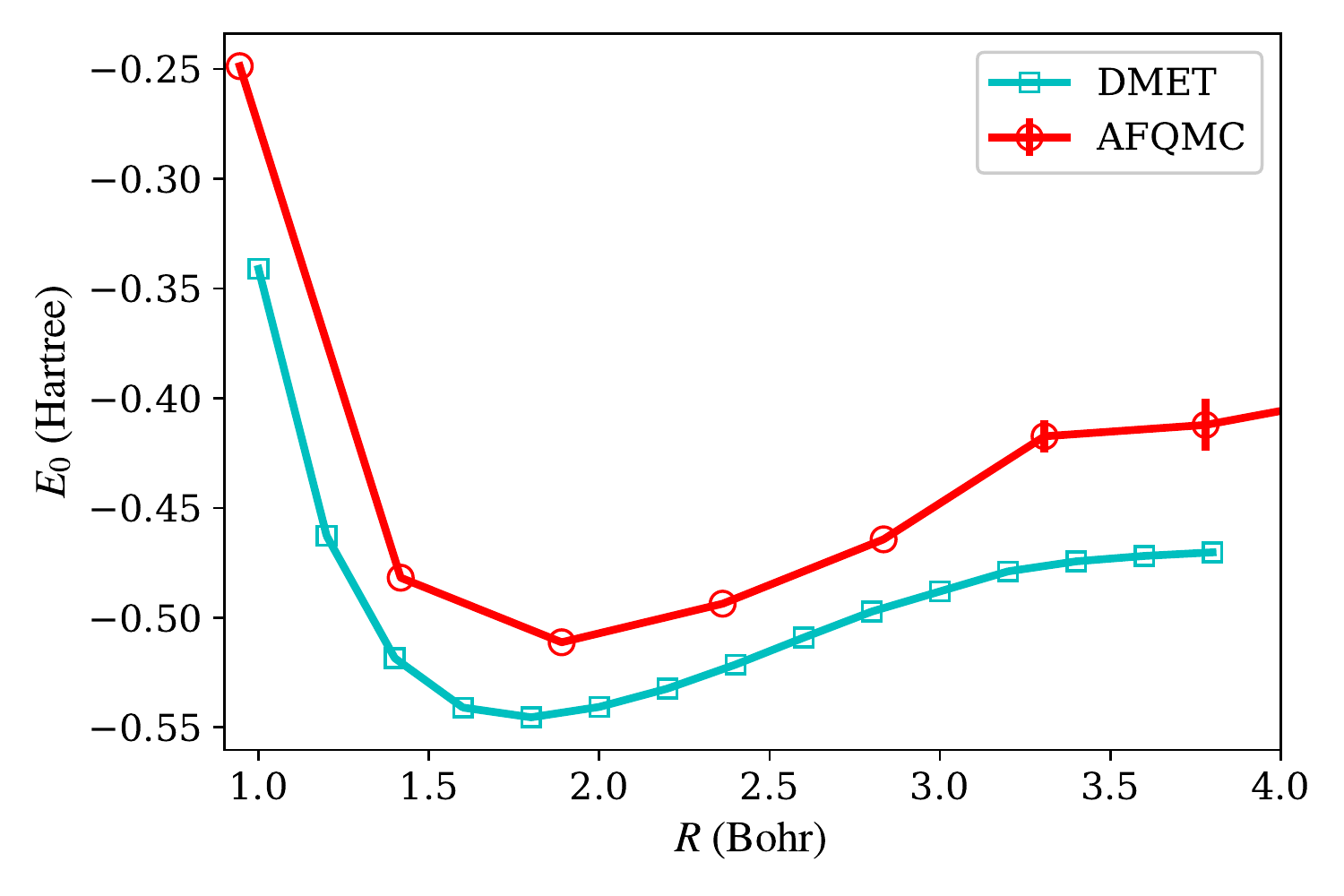}
    \caption{Dissociation curve of hydrogen chain at $T=0.05$ Hartree compared to AFQMC. The AFQMC data is extracted from Ref.~\citep{Liu2020}.
    } \label{fig:energy_hlatt}
\end{figure}

We then examine the staggered magnetic moment $m$ as a function of inter-atomic
distance $R$ at ground state, $T=0.02$, $T=0.05$, and $T=0.1$ in 
Fig.~\ref{fig:mag_hlatt_nk25}. Compared to  Fig.~\ref{fig:mag_22k} where
the paramagnetic-antiferromagnetic (PM-AFM) transition happened around
$R = 1.6$ Bohr, 
 we observed the
PM-AFM transition at $R = 1.0\sim 1.5$ region for $T < 0.1$, which could be
an effect of the finite total system size.
Although $T = 0.02$ is
considered as very close to the ground state, the magnetic moment at 
$T = 0.02$ drops earlier than the ground state curve as $R$ increases. This
behavior is due to the thermal dissipation of the magnetic order. As $R$
grows larger, the atoms are far apart from each other, and thus the 
electron-electron correlation between different sites is weaker, and eventually
not enough to preserve the long-range AFM order, which lead to the 
drop of magnetic order at large $R$ as shown in the figure. Even adding a
small temperature, the flip of the spin can happen to destroy the long-range
AFM order.

\begin{figure}
    \centering
    \includegraphics[width=0.85\textwidth]{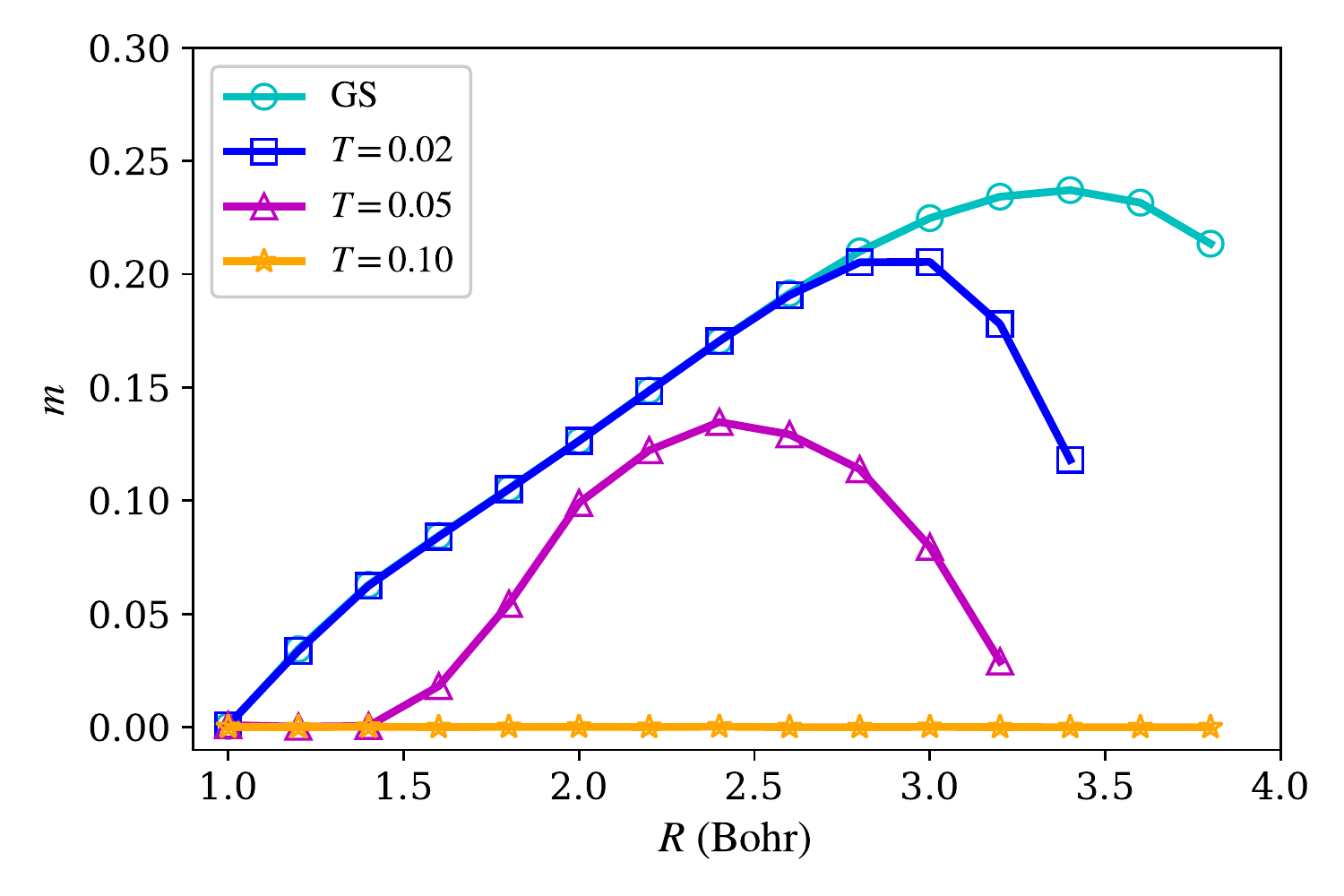}
    \caption{Staggered magnetic moment of hydrogen chain with periodic 
boundary condition at ground state, $T=0.02$, $T=0.05$, and $T=0.1$. The unit of $T$ is Hartree. 
    } \label{fig:mag_hlatt_nk25}
\end{figure}

\section{Conclusion}\label{sec:conc_hlatt}
In this work, we generalized the previously described finite temperature 
density matrix embedding theory to study \textit{ab initio} solid, and
employed the method to study the hydrogen chain problem. 
Despite the simplicity of the hydrogen chain lattice compared to other
periodic systems, it exhibits a variety of intriguing behaviors including
paramagnetic-antiferromagnetic transition and metal-insulator transition
at both ground state and finite temperature. At finite temperature,
we observed thermal dissipation for the magnetic order at large inter-atomic 
distance. We further confirmed the stabilizing effect from multi-band basis
sets. Since this work is not completely finished, in the future we will apply
this finite temperature algorithm to a larger set of \textit{ab initio}
solids including the two-dimensional and three-dimensional hydrogen
lattices, transition metal oxides, and challenging systems such as cuprate-based
 high
temperature superconductors.
%
%


\chapter{Finite temperature complex polarization and metal-insulator transition\label{chp:cp}}
\section{Abstract}
Metal-insulator transition is a fundamental yet complicated
topic in condensed matter physics and material science. 
Band structure theory has been widely used in explaining why insulators are insulating
and why metals are conducting, but can fail to describe insulation phases caused
by strong correlations or disorder. Electron locality/mobility serves as
a more general indicator of metallic and insulating phases: when the system
is metallic, the electrons are delocalized and can flow freely;
when the system is insulating, the electrons are localized. The standard 
deviation (or second cumulant moment) of the electron position 
is used as the order parameter of the electron localization, which 
is directly related to the complex polarization of the system. 
The complex polarization is widely accepted as a new indicator of the
metal-insulator transition at ground state: when the complex polarization 
equals zero,
the second cumulant moment of the position is diverged and the system
sits in the metallic phase; when the complex polarization is nonzero,
the electrons are localized and the insulating phase wins. 
In this work, we present the finite temperature formulation of the complex 
polarization. We also introduce a thermofield approach to compute the
complex polarization with thermal Slater determinant. We demonstrate how
finite temperature complex polarization works as an indicator of 
metal-insulator transition at low temperature with a modified tight binding
model and hydrogen chain system. In the hydrogen chain case, we also 
compare the metal-insulator transition with the paramagnetic-ferromagnetic 
transition, electron population, and energy gap to study the origin of 
the insulating and metallic phases, respectively.

\section{Introduction}
Phase transition happens when a system undergoes a macroscopic change
 from one phase to another phase due to the variation of control parameters
such as temperature, magnetic field, chemical substitution, and pressure. 
Near the critical 
point between the two phases, the physical properties of the bulk changes 
dramatically with respect to even a minor perturbation in control parameters.
Metal-insulator transition (MIT) is among the most common phase transitions,
yet the microscopic cause and the physics behind the phenomena is nontrivial.
From the elementary physics
textbooks, we learned that metals are conducting when an electrical field
is applied, while insulators do not allow electrons to flow freely.
However, this is a rather vague and bipartite definition, which is not 
able to answer questions such as (1) what are the microscopic driving forces
for conductivity? (2) what are the causes for MIT? and (3) how does one
characterize MIT?

In the past, the microscopic featurization of insulators and metals are
generally described by the band structure theory~\citep{Ashcroft76,Kittel2004}.
The band structure theory describes the movement of a single electron 
in a periodic solid, with the mean-field effect from the other electrons.
According to band structure theory, if the Fermi level sits in a band gap, then
the system is insulating; if the Fermi level crosses a band, the 
system is conducting. However, the band structure theory is based on
independent electron approximation and is only limited to crystalline 
systems. The insulating behavior caused by disorder or electron-electron
correlation cannot be captured accurately by the band structure theory~\citep{Kohn1964,Anderson1961,Alexandrov1994,Imada1998}. A more general
description is to use the electron localization to distinguish metal and insulator:
when the electrons are \emph{localized}, the system is insulating, and
when the electrons are \emph{delocalized}, the system is conducting. 
A widely accepted approach to evaluate electron localization is based 
on the theory of polarization~\citep{Resta1992,KingSmith1993,Vanderbilt1993,Resta1993,
Ortiz1994,Resta1994,Resta1998,Resta1999}, where the macroscopic polarization
is connected to Berry phase~\citep{Berry1984}. A more straightforward indicator
of electron localization is the second cumulant moment of the electron 
position operator describing 
the spread of electrons~\citep{Resta1999,Resta1999PRL,KingSmith1993,
Vanderbilt1993,Resta1993,}. A value that connects to both the macroscopic 
polarization and the second cumulant moment is complex polarization:
 the phase of the complex polarization is the 
Berry phase, while the second cumulant moment can be evaluated from the
modulo of the complex polarization~\citep{Resta2002,Souza2000,Aebischer2001}.
Moreover, the DC conductivity according to the ground state 
Kubo formula~\citep{Kubo1957} is also related to the modulo of the complex 
polarization. 

Ground state complex polarization and its connection to macroscopic
polarization, electron localization, and DC conductivity have been 
thoroughly studied and discussed in the past~\citep{Resta1999, Resta1999PRL, Souza2000}. However, discussion about finite temperature complex polarization 
is rare, regardless of the significance of this parameter. In this work, we 
introduce the formulation of finite temperature complex polarization
and discuss its relationship with electron localization. We also 
present a mean-field level implementation of finite temperature 
complex polarization under thermofield theory 
~\citep{Matsumoto1983,Semenoff1983,Evans1992,Harsha2019}.
In Section~\ref{sec:cplx}, we introduce the ground state formulation
of complex polarization and electron localization, where we first use 
the single particle case as a simplified example and then generalize
the single particle case to many-body mean-field formulation. 
In Section~\ref{sec:ftcplx}, we extend the ground state formulation
to the finite temperature version, and present a thermofield
approach to evaluate the complex polarization. In Section~\ref{sec:cptb},
we apply the finite temperature formulation to a modified tight binding model
both analytically and numerically, presenting a preliminary analysis of
how complex polarization provides information of metal-insulator transition.
In Section~\ref{sec:cphydrogen}, we choose hydrogen chain as an example 
of computing finite
temperature complex polarization for
\textit{ab initio} systems and explore the temperature-induced metal-insulator
transition in the hydrogen chain. We finalize this article
with a summary and outlook in Section~\ref{sec:cpconc}.

\section{Ground state complex polarization and electron localization \label{sec:cplx}}
The many-body complex polarization $Z_N^{(\alpha)}$ was first introduced as the 
ground state expectation value of certain unitary many-body operators
~\citep{Resta1998}. We start by defining a general form of the unitary
many-body operator
\begin{equation}
\hat{U}(\mbf{k}) = e^{i\mbf{k}\cdot \hat{\mbf{R}}},
\end{equation} 
where $\mbf{k}$ is an arbitrary three-dimension vector and $\hat{\mbf{R}}$
is the three-dimensional position operator, with 
$\hat{R}^{\alpha}\psi(r_1, r_2, r_3) = r_{\alpha} \psi(r_1, r_2, r_3), 
\alpha = 1,2,3$.

We introduce three $\mbf{k}$ vectors with notation $\mbf{\kappa}^{(\alpha)} 
(\alpha = 1,2,3)$, defined as
\begin{equation}
\kappa^{(\alpha)}_{\beta} = \frac{2\pi}{L}\delta_{\alpha\beta},
\end{equation}
which can be explicitly written as
\begin{equation}
\begin{split}
\kappa^{(1)} &= \left(\frac{2\pi}{L}, 0, 0 \right),\\
\kappa^{(2)} &= \left(0, \frac{2\pi}{L}, 0 \right),\\
\kappa^{(3)} &= \left(0, 0, \frac{2\pi}{L}\right).
\end{split}
\end{equation}
The ground state complex polarization is then defined as the expectation
values of the unitary many-body operators with the above three vectors:
\begin{equation}\label{eq:cplx_gs}
Z_N^{(\alpha)} = \langle \Psi_0 | \hat{U}(\mbf{\kappa}^{\alpha}) | \Psi_0\rangle,
\end{equation}
where $N$ is the number of electrons and
 $|\Psi_0\rangle$ represents the ground state of the system of interest.  
The complex polarization $Z_N^{(\alpha)}$ in Eq.~\eqref{eq:cplx_gs}
can be explicitly written as 
\begin{equation}\label{eq:phase_cplx}
Z_N^{(\alpha)} = |Z_N^{(\alpha)}|e^{i\gamma^{(\alpha)}_N},
\end{equation}
where $|Z_N^{(\alpha)}|\in [0,1]$ is the modulo of $Z_N^{(\alpha)}$
and $\gamma^{(\alpha)}_N$ is the phase of $Z_N^{(\alpha)}$,
referred as the Berry phase. In this article, we will not discuss
the macroscopic polarization, therefore the phase of Eq.~\eqref{eq:phase_cplx}
will not be mentioned. In fact, with a centrosymmetric choice of the 
origin, the complex polarization will always remain real.

\subsection{Electron localization}
We start by considering a problem with one particle in a one-dimensional 
potential wall. The locality of the particle can be measured by the
quadratic spread, or the second cumulant moment of the position $x$,
defined as
\begin{equation}\label{eq:scm_cp}
\langle \delta x^2 \rangle = \langle \psi | x^2 |\psi\rangle - 
\langle \psi | x| \psi\rangle^2,
\end{equation}
where $|\psi\rangle$ is the ground state of the particle in a box.
$\langle \delta x^2 \rangle$ is finite when the state $|\psi\rangle$ is
bounded and diverges for an unbounded state. Let $n(x) = |\psi(x)|^2$ 
be the electron density, then we can rewrite Eq.~\eqref{eq:scm_cp} as
\begin{equation}
\langle \delta x^2 \rangle = \int_{-\infty}^{\infty} \mathrm{d}x \ x^2 n(x) 
 - \left(\int_{-\infty}^{\infty} \mathrm{d}x \ x n(x)\right)^2.
\end{equation}
We now assume that $\psi(x)$ is periodic with wavelength $L$
\begin{equation}
\psi(x + mL) = \psi(x),
\end{equation}
where $m$ is an integer. The Fourier transformation of $n(x)$ gives
\begin{equation}
\tilde{n}(k) = \int_{-\infty}^{\infty} e^{-i kx} n(x).
\end{equation}

We chose the origin to be $x_0$ so that $\langle x\rangle = 0$, then 
\begin{equation}\label{eq:first_cp}
\int_{-\infty}^{\infty} \mathrm{d}x \ x n(x) = 
-i \frac{\mrm{d}\tilde{n}(k)}{\mrm{d} k} \Biggr\vert_{k=0} = 0,
\end{equation}
and $\langle \delta x^2 \rangle$ is only evaluated from the average value
of $x^2$
\begin{equation}\label{eq:second_cp}
\langle \delta x^2 \rangle = \int_{-\infty}^{\infty} \mathrm{d}x \ x^2 n(x) 
 = -\frac{\mrm{d}^2 \tilde{n}(k)}{\mrm{d} k^2 }\Biggr\vert_{k=0}.
\end{equation}
Combining Eq.~\eqref{eq:first_cp} and Eq.~\eqref{eq:second_cp}, we can
approximate $\tilde{n}(k)$ with the Taylor expansion up to the second 
order
\begin{equation}\label{eq:taylor_cp}
\tilde{n}(k) \approx 1 - \frac{1}{2} \langle \delta x^2 \rangle k^2.
\end{equation}
Now we can write down the single particle complex polarization modified
from Eq.~\eqref{eq:cplx_gs} as
\begin{equation}\label{eq:single_cplx}
z = e^{i\frac{2\pi}{L}x_0} \tilde{n}\left(-\frac{2\pi}{L}\right).
\end{equation}
Combining Eq.~\eqref{eq:taylor_cp} and Eq.~\eqref{eq:single_cplx}, we get 
the relationship between the complex polarization $z$ and 
second accumulant moment $\langle \delta x^2 \rangle$
\begin{equation}\label{eq:2nd_cp}
\langle \delta x^2 \rangle \approx 2 \left(\frac{2\pi}{L}\right)^2.
(1 - |z|)
\end{equation}

One could also rewrite Eq.~\eqref{eq:taylor_cp} as the exponential form
\begin{equation}\label{eq:exp_cp}
\tilde{n}(k) \approx e^{- \frac{1}{2} \langle \delta x^2 \rangle k^2},
\end{equation}
where we took $- \frac{1}{2} \langle \delta x^2 \rangle k^2$ in 
Eq.~\eqref{eq:taylor_cp} as the \textit{first order} of the Taylor
expansion of an exponential instead of the second order. From Eq.~\eqref{eq:exp_cp}, one gets
\begin{equation}\label{eq:log_cp}
\langle \delta x^2 \rangle \approx -2 \left(\frac{L}{2\pi}\right)^2 \log|z|.
\end{equation}
For a localized state, both Eq.~\eqref{eq:2nd_cp} and Eq.~\eqref{eq:log_cp}
go to the same finite limit for large $L$; for a delocalized state,
Eq.~\eqref{eq:log_cp} is preferred since it diverges at $|z| = 0$.

Eq.~\eqref{eq:log_cp} gives us a straightforward relationship between 
the complex polarization $z$ and second cumulant moment 
$\langle \delta x^2 \rangle$: when the system is insulating with
 $0 < |z| \leq 1$, $\langle \delta x^2 \rangle$ is finite and the ground
state is localized; when the 
system is metallic with $|z| = 0$, $\langle \delta x^2 \rangle$  diverges
and the ground state is delocalized. Therefore, one could use $z$ as
a direct indicator of the locality of the electrons.

Similarly, the many-body electron localization is defined as 
\begin{equation}
\langle \delta x^2 \rangle \approx -\frac{2}{N} \left(\frac{L}{2\pi}\right)^2 
\log(|Z_N|),
\end{equation}
where $Z_N$ is  the many-body complex polarization.

\subsection{Complex polarization for independent electrons}
When there is no interaction among electrons, the ground state
can be expressed as a Slater determinant $|\Psi_0\rangle$, and
the complex polarization can be written as
\begin{equation}
Z_N^{(\alpha)} = \langle \Psi_0 | \hat{U}(\mbf{\kappa}^{(\alpha)})| \Psi_0\rangle =  \langle \Psi_0 | \Phi_0\rangle,
\end{equation}
where $\hat{U}(\mbf{\kappa}^{(\alpha)}) = e^{i \mbf{\kappa}^{(\alpha)} \cdot \mbf{r}}$ and $| \Phi_0\rangle = \hat{U}(\mbf{\kappa}^{(\alpha)})| \Psi_0\rangle$.

According to the Thouless theorem~\citep{Thouless1960,Rosensteel1981}, 
$|\Phi_0\rangle$ is also a determinant composed of orbitals rotated from 
the orbitals in $\Psi_0$ as
\begin{equation}
\phi_\mu(\mbf{r}) =  e^{i \mbf{\kappa}^{(\alpha)} \cdot \mbf{r}} \psi_{\mu}(\mbf{r}).
\end{equation}
Therefore, $Z_N^{(\alpha)}$ is equal to the overlap between $|\Psi_0\rangle$
and $\Phi_0\rangle$. The overlap of two Slater determinants are evaluated 
by the determinant of the $N\times N$ overlap matrix $\mathcal{S}^{(\alpha)}$
evaluated by
\begin{equation}\label{eq:ovlpmat_cp}
\mathcal{S}_{\mu\nu}^{(\alpha)} = \int \mrm{d} \mbf{r} \psi_\mu^*(\mbf{r})
e^{i\mbf{\kappa}^{(\alpha)}\cdot \mbf{r}} \psi_\nu(\mbf{r}),
\end{equation}
where $\psi_\mu(\mbf{r})$ are occupied orbitals.

The many-body complex polarization is then evaluated as 
\begin{equation}\label{eq:cplx_slater}
Z_N^{(\alpha)} = \left(\text{det} \mathcal{S}^{(\alpha)}_{\uparrow}\right)
\left(\text{det} \mathcal{S}^{(\alpha)}_{\downarrow}\right),
\end{equation}
where the indices $\uparrow$ and $\downarrow$ correspond to up and down
spins. Eq.~\eqref{eq:cplx_slater} can be applied to numerical calculations
where Slater determinants can be obtained to represent the state of the 
system. The above numerical algorithms include the Hartree-Fock method,
the density functional theory (DFT), Slater determinant based quantum Monte
 Carlo (QMC) methods, etc.

\section{Finite temperature complex polarization}\label{sec:ftcplx}
At finite temperature, the expectation value (thermal average) of an 
operator $\hat{A}$ is evaluated under the grand canonical ensemble
\begin{equation}\label{eq:grand_can_cp}
\begin{split}
\langle \hat{A}\rangle(\beta) &= \frac{1}{\mathcal{Q}}\text{Tr}\langle \hat{A}
\hat{\rho}\rangle \\
&= \frac{1}{\mathcal{Q}}\sum_n \langle n | \hat{A}e^{-\beta\hat{H}} |n\rangle,
\end{split}
\end{equation}
where $\beta$ is the inverse temperature, $\hat{H}$ is the Hamiltonian with
the chemical potential,
$\{|n\rangle\}$ forms a set of orthonormal basis,
$\hat{\rho} = e^{-\beta \hat{H}}$ is the density matrix, and 
$\mathcal{Q}$ is the partition function defined as
\begin{equation}
\mathcal{Q} = \sum_n \langle n |e^{-\beta\hat{H}} |n\rangle.
\end{equation}
According to the thermofield theory, the ensemble average in 
Eq.~\eqref{eq:grand_can_cp} can be expressed as an expectation value 
over one state $|\Psi(\beta)\rangle$, known as the \emph{thermofield double
state} or simply \emph{thermal state}
\begin{equation}\label{eq:tfd_av}
\langle \hat{A}\rangle(\beta) = \frac{\langle \Psi(\beta)| \hat{A} |\Psi(\beta)\rangle }{\langle \Psi(\beta)|\Psi(\beta)\rangle}.
\end{equation}
In thermofield theory, a copy of the original Hilbert space $\mathcal{H}$
is introduced as $\tilde{\mathcal{H}}$, known as the auxiliary space. At
infinite temperature ($\beta = 0$), the thermal state is given by
a uniform summation over the orthonormal basis
\begin{equation}
|\Psi(0)\rangle = \sum_n |n\rangle \otimes |\tilde{n}\rangle,
\end{equation}
where $\{|\tilde{n}\rangle\}$ are copies of $\{|n\rangle\}$ in the auxiliary
space.

The thermal state at $\beta$ is then derived by imaginary time evolution 
from $|\Psi(0)\rangle$
\begin{equation}
|\Psi(\beta)\rangle = e^{-\beta \hat{H}/2} |\Psi(0)\rangle.
\end{equation}
Note that the Hamiltonian $\hat{H}$ only acts on the original Hilbert space
$\mathcal{H}$. Eq.~\eqref{eq:tfd_av} can be rewritten as
\begin{equation}
\begin{split}
\langle \hat{A}\rangle(\beta) &= \frac{\langle \Psi(0)| e^{-\beta \hat{H}/2}
\hat{A}e^{-\beta \hat{H}/2}|  \Psi(0)\rangle }{ \langle \Psi(0)|e^{-\beta \hat{H}}|\Psi(0)\rangle}\\
& = \frac{\langle \Psi(0)|e^{-\beta \hat{H}} \hat{A} |  \Psi(0)\rangle }{ \langle \Psi(0)|e^{-\beta \hat{H}}|\Psi(0)\rangle}.
\end{split}
\end{equation}
The complex polarization at temperature $T = 1/\beta$ is thus
\begin{equation}\label{eq:ftcplx_full}
Z_N(\beta) = \frac{\langle \Psi(0)|e^{-\beta \hat{H}} \hat{Z} |  \Psi(0)\rangle }{ \langle \Psi(0)|e^{-\beta \hat{H}}|\Psi(0)\rangle},
\end{equation}
where $\hat{Z} = e^{-i\frac{2\pi}{L}\hat{x}}$. Note that for simplicity,
we dropped the superscript $(\alpha)$ and chose only the $x$ component of 
the three-dimensional position operator $\hat{\mbf{r}}$. This simplification
is valid for a one-dimensional system, and for multi-dimensional systems,
$Z_N$ of other directions can be evaluated in the same manner.

At the mean-field level, thermal state $|\Psi_0\rangle$ can be written as a 
Slater determinant formed by the following $2L \times L$ coefficients
\begin{equation}
C_0 = 
\begin{bmatrix}
1 & 0 & 0 & \cdots & 0 \\
0 & 1 & 0 & \cdots & 0 \\
0 & 0 & 1 & \cdots & 0 \\
0 & 0 & 0 & \ddots & 0 \\
0 & 0 & 0 & \cdots & 1 \\
1 & 0 & 0 & \cdots & 0 \\
0 & 1 & 0 & \cdots & 0 \\
0 & 0 & 1 & \cdots & 0 \\
0 & 0 & 0 & \ddots & 0 \\
0 & 0 & 0 & \cdots & 1 \\
\end{bmatrix}
= \begin{bmatrix}
\mathbb{I} \\ \mathbb{I}
\end{bmatrix},
\end{equation}
where the first $L$ rows correspond to the physical sites, and the last
$L$ rows correspond to the auxiliary sites. A one-body operator
$\hat{w}$ in $\mathcal{H}$ is rewritten as 
\begin{equation}
\bar{\hat{w}} = \hat{w} \oplus 0.
\end{equation}
Under Hartree-Fock approximation, we use the Fock operator $\hat{f}$ 
as the one-body Hamiltonian, and the matrix form of the 
thermal Fock operator $\bar{\hat{f}}$ is
\begin{equation}
\left[\bar{\hat{f}}\right] = \begin{bmatrix}
[\hat{f}] & 0\\
0 & 0
\end{bmatrix}.
\end{equation}
The position operator $\hat{x}$ is also a one-body operator, with
the matrix form as
\begin{equation}
\left[\bar{\hat{x}}\right]  = \begin{bmatrix}
[\hat{x}] & 0\\
0 & 0
\end{bmatrix}.
\end{equation}
The thermal density operator $\bar{\hat{\rho}} = e^{-\beta \bar{\hat{f}}}$,
with the matrix form
\begin{equation}
\left[\bar{\hat{\rho}} \right] = 
\begin{bmatrix}
\left[e^{-\beta \hat{f}}\right] & 0\\
0 &\mathbb{I}\\
\end{bmatrix}.
\end{equation}
The thermofield expression of the complex polarization operator  
$e^{i\frac{2\pi}{L}\hat{x}}$  therefore has the matrix form
\begin{equation}
\left[ \bar{\hat{Z}}\right] = 
\begin{bmatrix}
\left[ e^{i\frac{2\pi}{L}\hat{x}}\right] & 0\\
0 & \mathbb{I}
\end{bmatrix}.
\end{equation}
The complex polarization at $\beta$ can be evaluated by a similar 
formulation as ground state
\begin{equation}\label{eq:ftcp_cp}
Z_N(\beta) = \frac{\text{det}\left( C_0^{\dag} \left[ \bar{\hat{Z}}\right] 
\left[\bar{\hat{\rho}} \right] C_0\right)}
{\text{det}\left( C_0^{\dag} \left[\bar{\hat{\rho}} \right] C_0\right)}.
\end{equation}

At infinite temperature, $\beta = 0$, and $\left[e^{-\beta \hat{f}}\right] 
= \mathbb{I}$, leading to $\left[\bar{\hat{\rho}} \right] = 
\mathbb{I}\otimes \mathbb{I}$. One could rotate the basis to the eigenstates of $\hat{x}$,
and thus $\hat{Z}$ is diagonal in this basis. Note that $C_0$ and 
$\left[e^{-\beta \hat{f}}\right]$ do not change under the rotation.
After rotation, $\left[\bar{\hat{Z}}\right]$ becomes a diagonal matrix having
the form
\begin{equation}
\left[ \bar{\hat{Z}}\right] = 
\begin{bmatrix}
z_1 & & & & \\
 & z_2 & &  &  \\
 &  & \ddots &  & \\
 & & & z_L&  \\
 & & & & [\mathbb{I}]
\end{bmatrix},
\end{equation}
where $z_{\mu} = e^{-i2\pi x_{\mu} /L}, \mu = 1,...,L$.
The denominator in Eq.~\eqref{eq:ftcp_cp} is then
\begin{equation}
\text{det}\left( C_0^{\dag}  C_0\right) = 2^L.
\end{equation}
The numerator in Eq.~\eqref{eq:ftcp_cp} is 
\begin{equation}\label{eq:infT_cp}
\text{det}\left( C_0^{\dag} \left[ \bar{\hat{Z}}\right] \left[\bar{\hat{\rho}} \right]  C_0\right) 
= \text{det}\left([\hat{Z}] + [\mathbb{I}]\right) = \prod_{\mu} \left(z_\mu + 1\right).
\end{equation}
Suppose the basis is chosen to be the site basis, i.e., $x_\mu = \mu$.
When $L$ is even, $z_{L/2} = -1$ is included in the product in 
Eq.~\eqref{eq:infT_cp} and the numerator is zero, leading to $Z_N = 0$.
When $L$ is odd, $z_{(L+1)/2}$ and $z_{(L-1)/2}$ differ from $-1$
with infinitesimal displacement when $L$ is large enough and the numerator
$\ll 2^L$, leading to $Z_N \rightarrow 0$. Therefore, at thermal dynamic limit,
$Z_N = 0$ at infinite temperature ($\beta = 0$). This observation is 
consistent with the common sense that the electron can move freely 
at infinite temperature and the second cumulant moment diverges. 

\section{Tight binding model}\label{sec:cptb}
The generalized form of a non-interacting Hamiltonian can be written
as 
\begin{equation}
\hat{h} = -\sum_{\mu \neq \nu}\left(t_{\mu\nu}\ha^{\dag}_\mu\ha_\nu + \text{h.c.}\right)
+\sum_\mu u_\mu \ha^\dag_\mu \ha_\mu,
\end{equation}
where $\ha^\dag_\mu\ha_\nu$ describes electron hopping from site $j$ to site $i$.
The one-band tight binding model Hamiltonian takes the form
\begin{equation}
\hat{h}_{\text{tb}} = -t\sum_{\langle \mu, \nu\rangle, \sigma} \ha^{\dag}_{\mu,\sigma}\ha_{\nu,\sigma} 
+ \text{h.c.},
\end{equation}
where $\langle i, j\rangle$ indicates nearest-neighbor hopping and $\sigma$
stands for spin freedom. In the following, we focus on the one-dimensional
tight binding model with periodic boundary condition (PBC) and 
SU(2) symmetry. The Hamiltonian becomes
\begin{equation}\label{eq:1dtb_cp}
\hat{h}_{\text{tb}} = -t\sum_{\mu} \ha^{\dag}_{\mu} \ha_{\mu+1} + \text{h.c.}
\end{equation}
The eigenstates of Eq.~\eqref{eq:1dtb_cp} can be analytically solved
with the help of Fourier transformation from real space to $k$ space 
(momentum space)
\begin{equation}
\begin{split}
\ha_\mu &= \frac{1}{\sqrt{2\pi}}\sum_{k\in\text{BZ}}  e^{ik\mu}\hc_k,\\
\hc_k   & = \frac{1}{\sqrt{2\pi}} \sum_{\mu} e^{-ik\mu}\ha_\mu.
\end{split}
\end{equation}
It is easy to prove that $\{\hc_k, \hc^\dag_{k'}\} = \delta_{kk'}$, so 
$\hc^\dag_{k}$ and $\hc_k$ are creation and annihilation operators in
$k$ space. Eq.~\eqref{eq:1dtb_cp} can be rewritten as
\begin{equation}
\begin{split}
\hat{h}_{\text{tb}} &= -\frac{t}{2\pi} \sum_{\mu} \sum_{k,k'}
e^{-ik\mu} e^{ik'(\mu+1)} \hc^{\dag}_k \hc_{k'} + \text{h.c.} \\
    &= -\frac{t}{2\pi} \sum_{k,k'} \sum_{\mu} \left(e^{-ik\mu} e^{ik'(\mu+1)}\right) \hc^{\dag}_k \hc_{k'} + \text{h.c.}\\
    &= -t  \sum_{k,k'} \delta_{kk'} e^{ik}\hc^{\dag}_k \hc_{k'} + \text{h.c.}\\
    &= -2t \sum_k \cos k \hc^{\dag}_k\hc_k. 
\end{split}
\end{equation}
Therefore, $\hat{h}_{\text{tb}}$ is diagonal in the basis created by $\hc^{\dag}_k$. For the crystalline case, $\hc^{\dag}_k$ creates an electron in
a Bloch wave
\begin{equation}
\psi_{k}(\mu) = e^{ik\mu} u_{k}(\mu),
\end{equation}
where $\mu = 0, 1, ..., L-1$ stands for the site basis and $k$ represents momentum numbers. 
$u_{k}(\mu)$ is identical on each site: $u_{k}(\mu + 1) = u_{k}(\mu)$,
and we will use a constant $1/\sqrt{L}$ to replace $u_{k}(\mu)$ to ensure
that $\psi_{k}(\mu)$ is normalized.

In a one-dimensional chain with $L$ sites, there are $L$ allowed 
$k$ values:
\begin{equation}
k_s = \frac{2\pi s}{L}, s = 0, 1, ..., L-1.
\end{equation}
We evaluate the overlap matrix in Eq.~\eqref{eq:ovlpmat_cp} under the 
basis $\{\psi_{k_s}(\mu)\}$,
\begin{equation}
\begin{split}
\mathcal{S}_{k_s, k_{s'}} =&  \sum_{\mu} \psi^*_{k_s}(\mu)
e^{\frac{i2\pi \mu}{L}} \psi_{k_{s'}}(\mu)\\
    &= \frac{1}{L} \sum_{\mu} e^{-i(s-s'-1)\mu}\\
    &= \delta_{s, s'+1}.
\end{split}
\end{equation}
Therefore, $\mathcal{S}_{k_s, k_{s'}}$ is nonzero only when $s = s'+1$.
When the lattice is fully occupied, both $s$ and $s'$ run over all the 
$L$ values. This means that for any $s$, there exists an occupied orbital
$\psi_{k_{s-1}}$. Therefore, any row or any column of the $\mathcal{S}$ 
matrix has one and only one nonzero element (equal to $1$). The determinant
of $\mathcal{S}$ is thus nonzero, and $Z_N = 1$, indicating an insulating 
state. 

When the lattice is not fully occupied, the overlap matrix $\mathcal{S}$ 
only consists of occupied orbitals, and if one can find an $s$ where 
$\psi_{k_{s-1}}$ is unoccupied, then the row corresponding to $\psi_{k_{s}}$
contains only zero elements, leading to $Z_N = 
\text{det}\left(\mathcal{S}\right) = 0$. For the half-filling case, whether
$Z_N = 0$ or not depends on the value of $L$. When $L$ is even, there 
are two cases: $L = 4m$ and $L = 4m+2$, where $m$  is an integer. The 
spectrum of the two cases are shown in Fig.~\ref{fig:dispersion_tb} with
$L=8$ and $L=10$. The cosine line plot reflects the dispersion relation 
between $\varepsilon_k$ and $k$: $\varepsilon_k = -t*\cos k$, and the 
circles on top of the line correspond to allowed $k$ values: $2\pi n/L,
n = 0, ..., L-1$. Fig.~\ref{fig:dispersion_tb} (a) shows the half-filling
case of $L=8$, with $4$ electrons in the lattice. The solid black
dots are occupied orbitals, while the two circles with stripes are two 
degenerate states with the total occupation number equal to $1$. 
If we consider the two striped circles as one occupied site, then 
for any $s$th occupied dot, the $(s+1)$th orbital is also 
occupied or partially occupied. Therefore, when $L = 4m$, $|Z_N| > 0$.
Fig.~\ref{fig:dispersion_tb} (b) tells a different story. With $L=10$,
there are five occupied orbitals shown as solid black dots, and there
are no partially occupied orbitals in this case. Therefore, when $L=4m+2$,
$|Z_N| = 0$.
\begin{figure}[t!]
\centering
\begin{subfigure}[t]{0.85\textwidth}
\includegraphics[width=\textwidth]{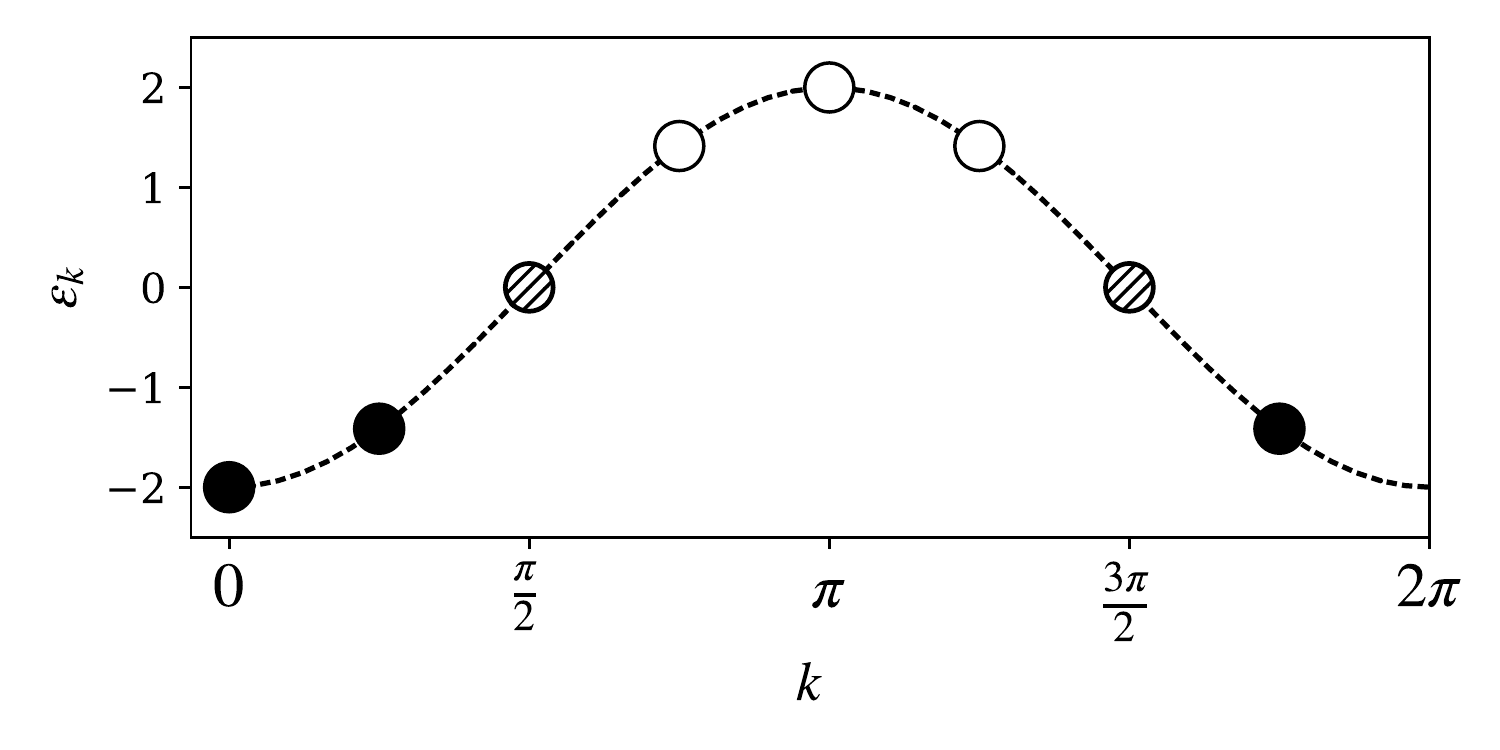}
\caption{$L = 8$}
\end{subfigure}
\begin{subfigure}[t]{0.85\textwidth}
\includegraphics[width=\textwidth]{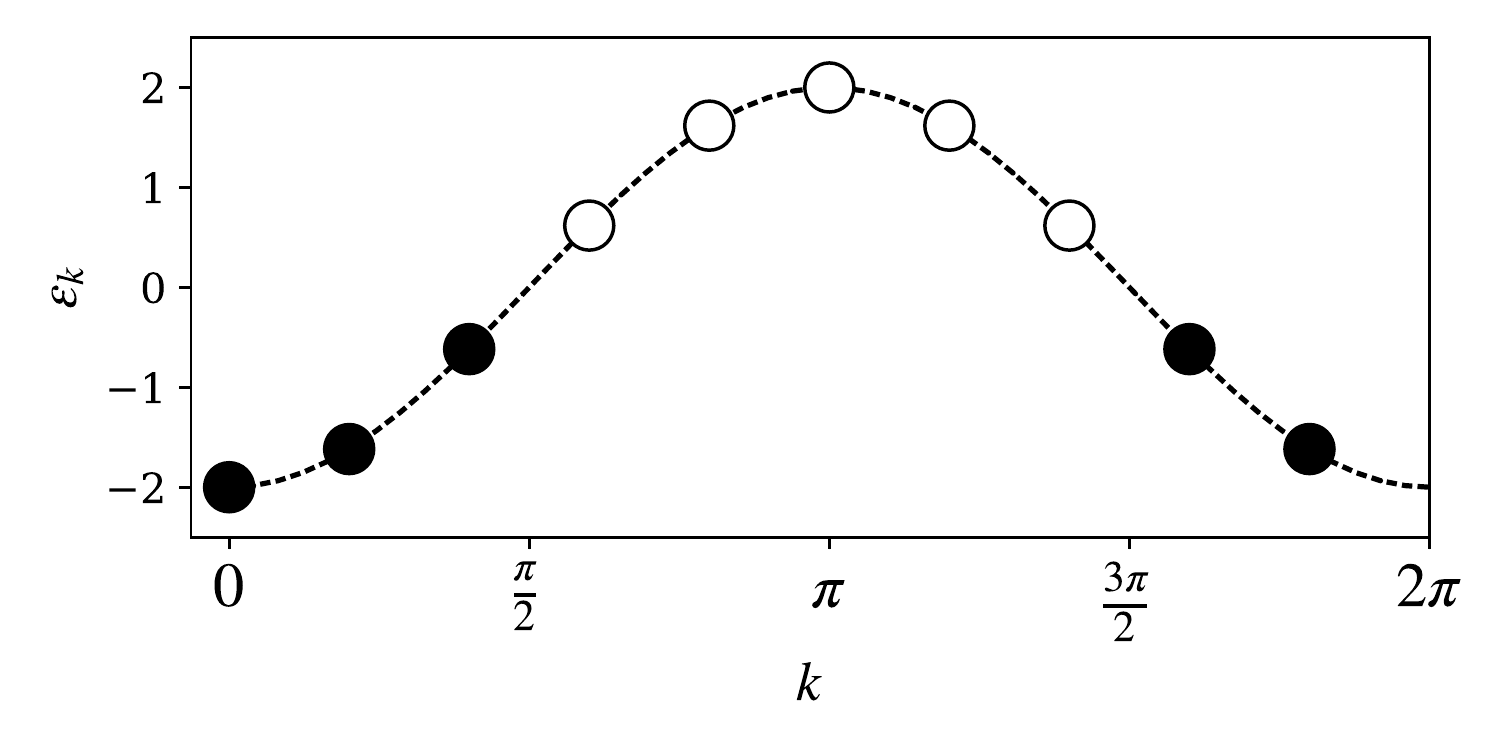}
\caption{$L = 10$}
\end{subfigure}
\caption{Dispersion relation and energy levels of the half-filled tight binding model for (a) $L = 8$ and (b) $L=10$. Solid black dots are occupied orbitals,
blank circles are unoccupied orbitals, and circles with stripes are partially 
occupied orbitals due to degeneracy.}\label{fig:dispersion_tb}
\end{figure}

At finite temperature, we again evaluate the thermal average with
thermal states.
The matrix form of the phase operator $[\hat{Z}]$ based on the above discussion is:
\begin{equation}
\left[\hat{Z}\right] = \begin{bmatrix}
0 & 0 & 0  & \cdots & 0 & 1 \\
1 & 0 & 0  & \cdots & 0 & 0 \\
0 & 1 & 0  & \cdots & 0 & 0 \\
 &  &   & \ddots &  &  \\
0 & 0 & 0  & \cdots & 1 & 0 \\
\end{bmatrix},
\end{equation}
and the thermal operator of complex polarization has the form
\begin{equation}
\left[\bar{\hat{Z}}\right] = \begin{bmatrix}
[\hat{Z}] & 0 \\
0 & \mathcal{I}
\end{bmatrix}.
\end{equation}
Since the Hamiltonian is diagonal with the basis $\{\psi_{k_{s}}\}$, 
the thermal density matrix has the form
\begin{equation}
\left[\bar{\hat{\rho}}\right] = \begin{bmatrix}
\xi_1 & & & & \\
 & \xi_2 & & & \\
& & \ddots & & \\
& & & \xi_L &  \\
& & & & [\mathcal{I}]
\end{bmatrix}.
\end{equation}
The complex polarization can be evaluated according to 
Eq.~\eqref{eq:ftcp_cp}, 
\begin{equation}\label{eq:tb_ftcp}
Z_N(\beta) =  \frac{1-(-1)^L\prod_{\mu}\xi_\mu}{\prod_{\mu}(1+\xi_\mu)}
\end{equation}.
Note that $\xi_\mu =e^{-\beta\varepsilon_\mu} > 0$, so the denominator
of Eq.~\eqref{eq:tb_ftcp} is always greater than zero.

Now let us examine two extreme cases: $\beta \rightarrow \infty$ (zero
temperature) and $\beta\rightarrow 0$ (infinite temperature). At 
$\beta \rightarrow \infty$, if the Fermi level is above all bands,
then $\xi_\mu \gg 1$ for all $\mu$, and Eq.~\eqref{eq:tb_ftcp}
is well approximated by
\begin{equation}
|Z_N| \approx \frac{\prod_{\mu}\xi_\mu}{\prod_{\mu}\xi_\mu} = 1.
\end{equation}
Therefore the lattice is an insulator. However, when the Fermi level
is below some bands (unoccupied orbitals), then the $\xi$ values of these
bands $\rightarrow 0$, and the numerator of Eq.~\eqref{eq:tb_ftcp}
$\rightarrow 1$, while the denominator $\prod_{\mu}(1+\xi_\mu)\rightarrow \infty$, so $Z_N\rightarrow 0$, giving a conducting solution. The above low 
temperature limit agrees with the previous analysis of ground state 
metal-insulator transition of the tight binding model.

At $\beta\rightarrow\infty$, all $\xi_\mu\rightarrow 1$, resulting in an
numerator $0$ ($L$ is even) or $2$ ($L$ is odd), while the denominator
is $2^L$. Therefore, $Z_N\rightarrow 0$ as $L\rightarrow \infty$, and
the electrons in the tight binding model are delocalized.

\begin{figure}
\centering
\justify
\includegraphics[width=1\textwidth]{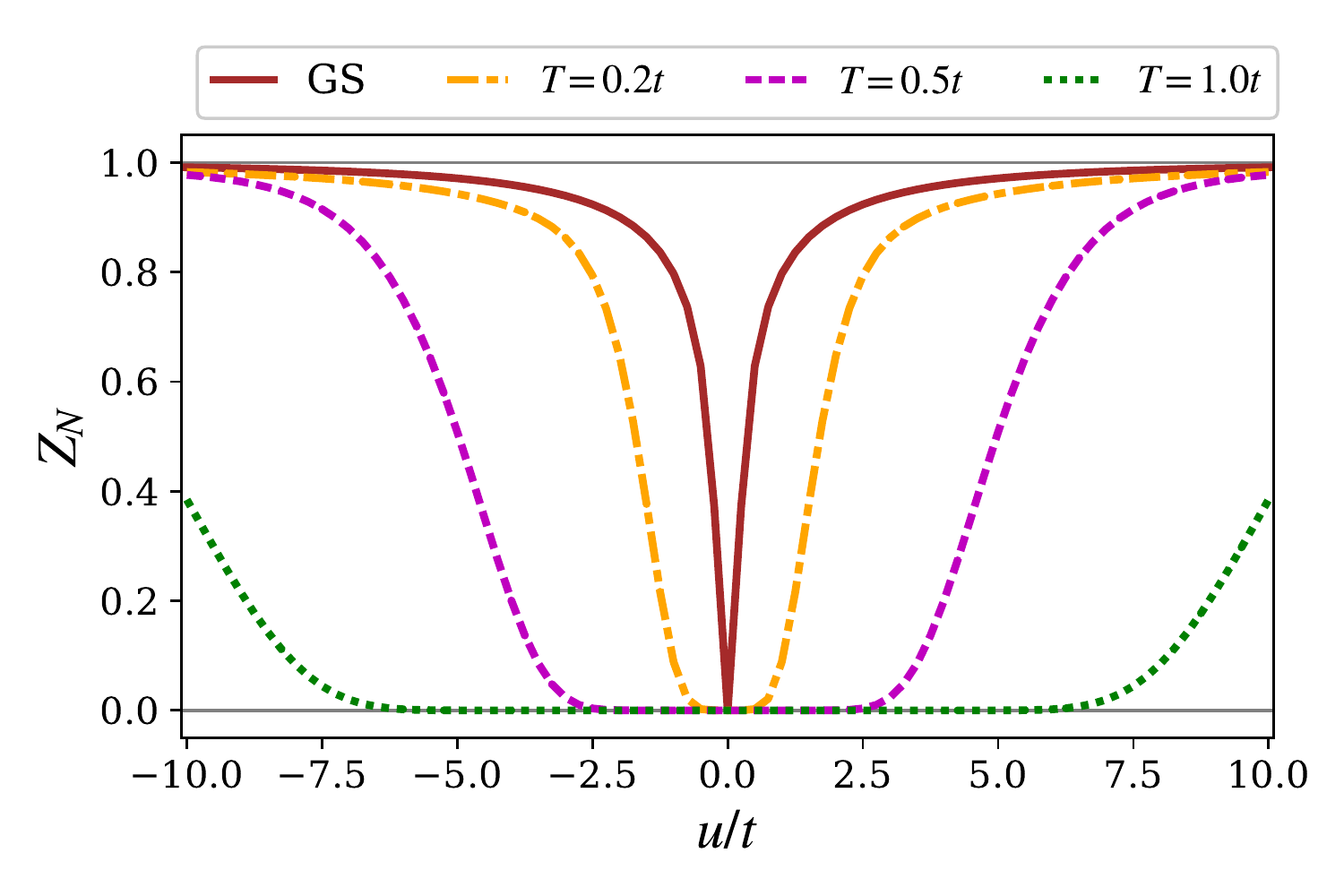}
\caption{Complex polarization of the tight binding model ($L=42$) with staggered potential $u$ at ground state (GS), $T = 0.2$, $T=0.5$, and $T=1.0$, respectively.
 }\label{fig:zn_tb}
\index{figures}
\end{figure}

Next we add the staggered potential $u$ onto the original tight binding
model:
\begin{equation}
\hat{h} = -t\sum_{\mu} \ha^{\dag}_{\mu} \ha_{\mu+1} + \text{h.c.}
+ u\sum_{\mu \in\text{odd}} \ha^{\dag}_{\mu}\ha_{\mu},
\end{equation}
where $u > 0$ is only applied to the odd sites. For simplicity, we assume 
that $L$ is even. The effect of $u$ is to provide a potential wall/well
for every other site, and this effect prohibits the free flow of electrons.
For the rest of the tight binding calculations, we choose the chain length 
$L = 42$ and Boltzmann constant $k_B = 1$, and use $t$ as the energy unit.
In Fig.~\ref{fig:zn_tb} we show the complex polarization $Z_N$ of the 
half-filled tight binding model against
the staggered potential $u$ at ground state, $T=0.2t$, $T=0.5t$ and $T=1.0t$.
As predicted above, the half-filled ground state of the original tight 
binding model ($u=0$) with $L=4m+2$ is metallic with $Z_N=0$. 
As the staggered potential 
turned on, ground state $Z_N$ grows rapidly and the system becomes more and
more insulating. With raising the temperature, the metallic regime expands
within the small $|u|$ region, and the growth curve of $Z_N$ with respect to $|u|$
becomes more flat. The temperature effect smears the sharp transition 
at ground state. Note that the curves are symmetric to $u = 0$ since only
the potential differences between adjacent sites affect the state of the
system.

\begin{figure}
\centering
\justify
\includegraphics[width=1\textwidth]{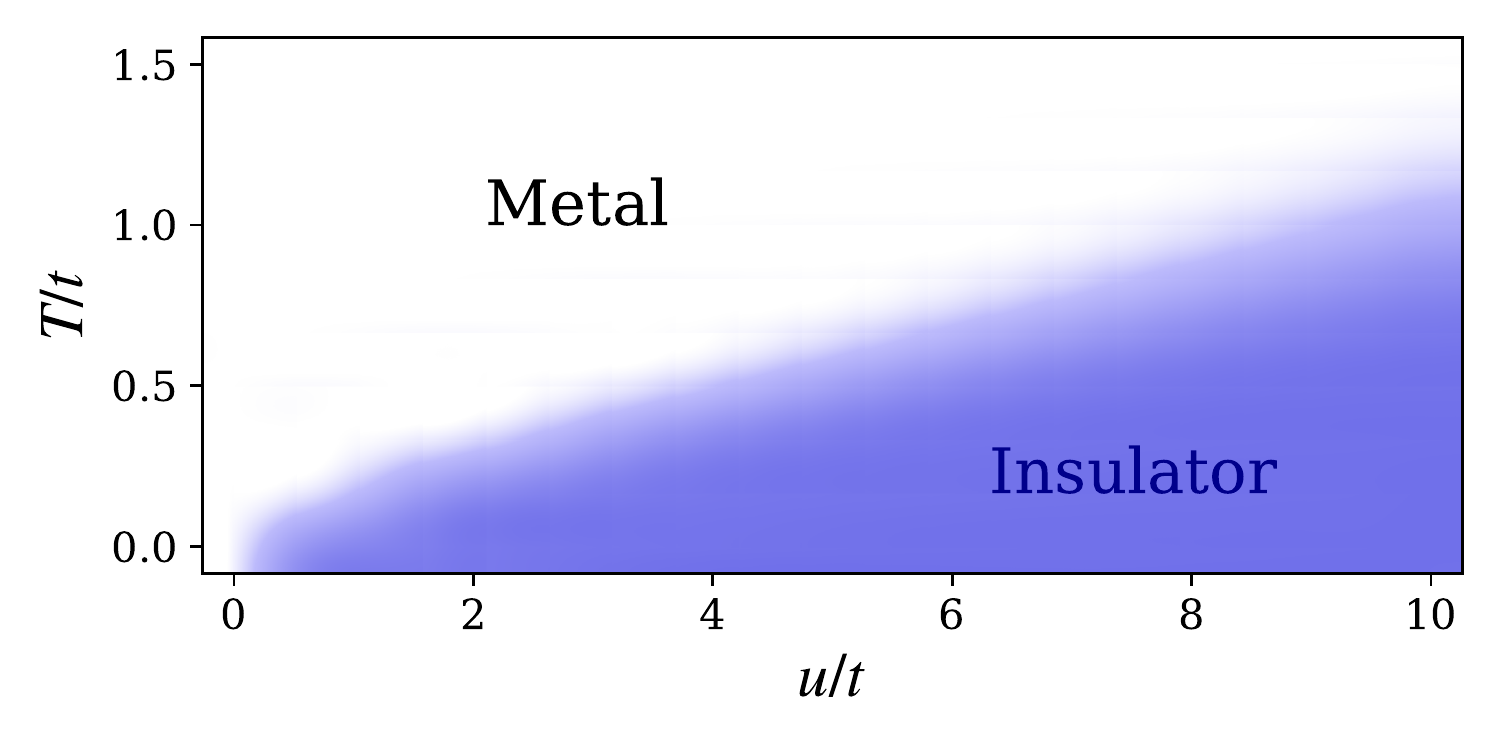}
\caption{Phase diagram of the tight binding model ($L=42$) with the staggered potential $u$. The blue area corresponds to $Z_N > 0$ (insulator) and the white area
corresponds to $Z_N = 0$ (metal). The 2D plot is smoothed by Bessel
interpolation. Grid: $20$ points in the $x$-axis and $10$ points in the 
$y$-axis.
 }\label{fig:one_band_tb_phase}
\index{figures}
\end{figure}

We show the phase diagram of $Z_N$ for the tight binding model with respect
to the staggered potential $u$ and temperature $T$ in Fig.~\ref{fig:one_band_tb_phase}. We observed a sharp barrier
between the metallic phase and insulating phase at $u\rightarrow 0_+$ and low 
temperature, and then
the barrier becomes rather vague at larger $u$ with a higher transition 
temperature. This observation is consistent with the flatter curves at a 
higher temperature in Fig.~\ref{fig:zn_tb}. We further observe a linear
growth of transition temperature $T_c$ with respect to $u$ at larger $u$.
Since the transition temperature $T_c$ is directly
related to the gap of the system, we also plotted the gap against $u$ in 
Fig.~\ref{fig:tb_gap}. The linear dependence of $\Delta_{\text{gap}}$
to the staggered potential $u$ at large $u$ region is consistant to the
 linear $T_c - u$ relationship in Fig.~\ref{fig:one_band_tb_phase}. 
At $u$ smaller than $0.1t$, we observe a rather slow growth of 
$\Delta_{\text{gap}}$ with $u$, which agrees with the metallic phase at 
$u\approx 0$ and then a sudden appearance of the insulator phase with a
nearly verticle wall.

\begin{figure}
\centering
\justify
\includegraphics[width=1\textwidth]{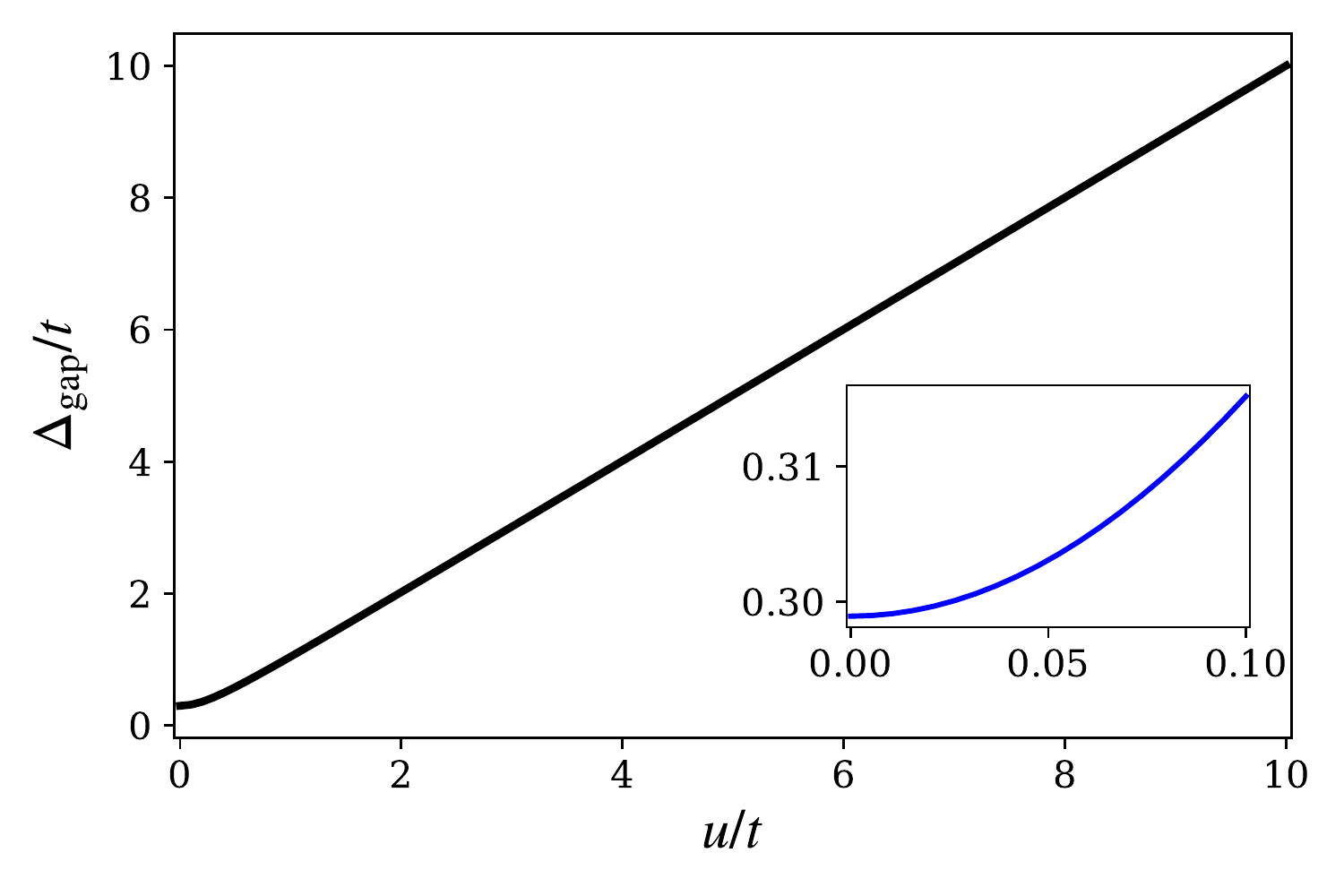}
\caption{Energy gap of the half-filled staggered tight binding model
against the staggered potential $u$. Inset: energy gap for $u \in [0t,0.1t]$.}\label{fig:tb_gap}
\index{figures}
\end{figure}

\section{Hydrogen chain}\label{sec:cphydrogen}
A linear chain of hydrogen atoms equispaced~\citep{Hachmann2006,AlSaidi2007,
Sinitskiy2010,Stella2011,Motta2017,Motta2020} is the simplest \textit{ab initio} 
periodic system that one can find. Unlike the simplicity of the structure,
the phase diagram of the hydrogen chain involves complex components: 
metal-insulator transition (MIT), paramagnetic-antiferromagnetic (PM-AFM) transition 
and dimerization~\citep{Hubbard1963}. Hydrogen chain has a similar 
structure as the one-dimensional
Hubbard model which has been studied for decades. Compared to the Hubbard 
model where electron-electron
interactions are of short range, the Coulomb interaction in the hydrogen
chain is long-ranged. Moreover, calculations beyond the minimal basis set
(STO-6G) will introduce a multi-band effect into the hydrogen chain, which 
is absent in the one-band model systems.

In the following, we compute the complex polarization at both ground state
and finite temperature for the hydrogen chain system with atoms equally
spaced along the $z$-direction. The H-H bond length $R$ is introduced as 
the parameter and adjusted to show different phases. The Hamiltonian of
this problem is
\begin{equation}
\hat{H} = -\frac{1}{2}\sum_{\mu = 1}^N\nabla^2_{\mu} + 
\sum_{\mu < \nu}^N \frac{1}{|\mbf{r}_{\mu} - \mbf{r}_{\nu}|}
- \sum_{\mu, a}^{N} \frac{1}{|\mbf{r}_{\mu} - \mbf{R}_{a}|}
+ \sum_{a < b}^N  \frac{1}{|\mbf{R}_{a} - \mbf{R}_{b}|}
\end{equation}
where $(\mbf{r}_1, ..., \mbf{r}_N)$ are the electron positions in the 
Cartesian coordinates, $\mbf{R}_{a} = aR\hat{\mbf{e}}_z$ is the 
position of the $a$th atom on $z$-axis. In this work, energies 
and the temperature ($k_B=1$) are measured 
in Hartree ($me^4/\hbar^2$) and lengths in Bohr radius $a_B = \hbar^2/(me^2)$.
In one supercell, $30$ hydrogen atoms are included and only the $\Gamma$ point
in the reciprocal space is taken into account. The basis set is 6-31G,
where the 1$s$ orbital and the 2$s$ orbital are included. We evaluate the 
complex polarization $Z_N$, staggered magnetic moment $m$, electron
population on 2s orbital, and the HOMO-LUMO gap of 
above hydrogen chain system
at ground state, $T = 0.01, 0.02, 0.03$ and $0.04$ Hartree.
We present the results from unrestricted Hartree-Fock (UHF)
and DFT (GGA/PBE and B3LYP) calculations in Fig.~\ref{fig:hchain_uhf_cp},
Fig.~\ref{fig:hchain_pbe_cp}, and Fig.~\ref{fig:hchain_b3lyp_cp}.
All calculations are performed within the framework of the quantum chemistry
package \texttt{PySCF}~\citep{PYSCF2017,PYSCF2020}.

\begin{figure}[th!]
\centering
\justify
\includegraphics[width=1\textwidth]{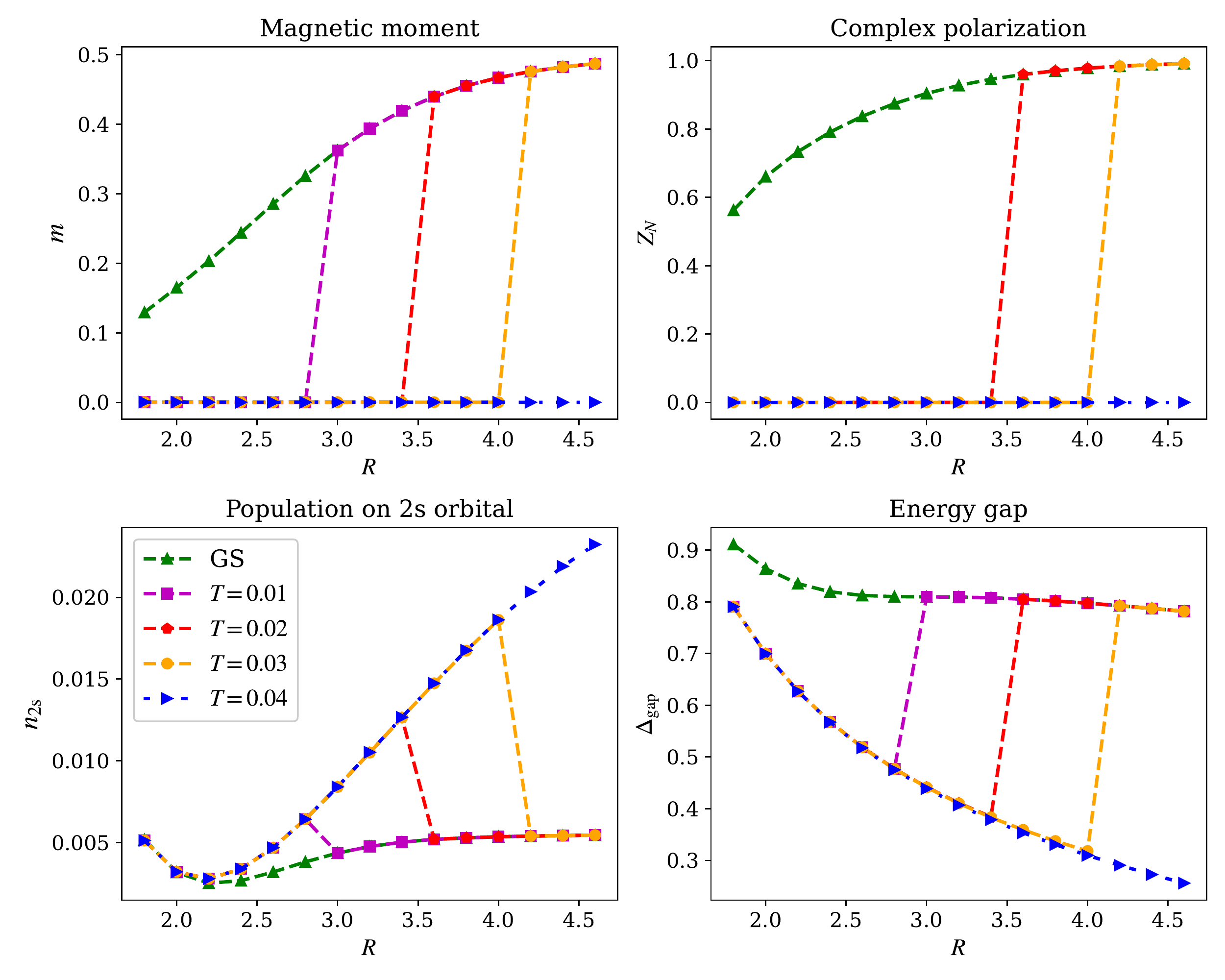}
\caption{Complex polarization, magnetic moment, population on 2s orbital
and energy gap of hydrogen chain with unrestricted Hartree-Fock method.
Note that the complex polarization at $T=0.01$ is not presented here due to 
overflow.
}
\label{fig:hchain_uhf_cp}
\index{figures}
\end{figure}

\begin{figure}[th!]
\centering
\justify
\includegraphics[width=1\textwidth]{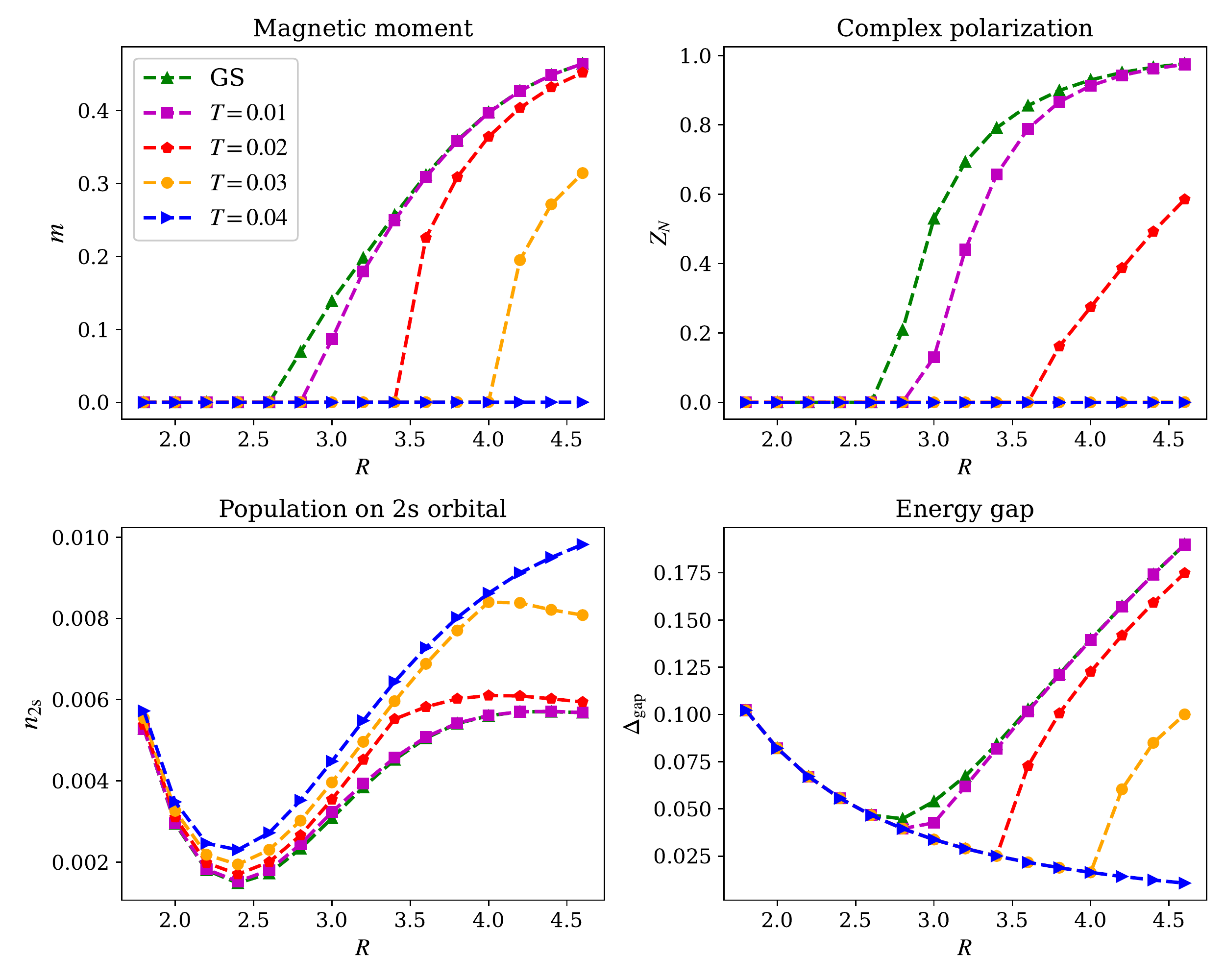}
\caption{Complex polarization, magnetic moment, population on 2s orbital,
and energy gap of hydrogen chain from DFT with PBE functional.}
\label{fig:hchain_pbe_cp}
\end{figure}

\begin{figure}[th!]
\centering
\justify
\includegraphics[width=1\textwidth]{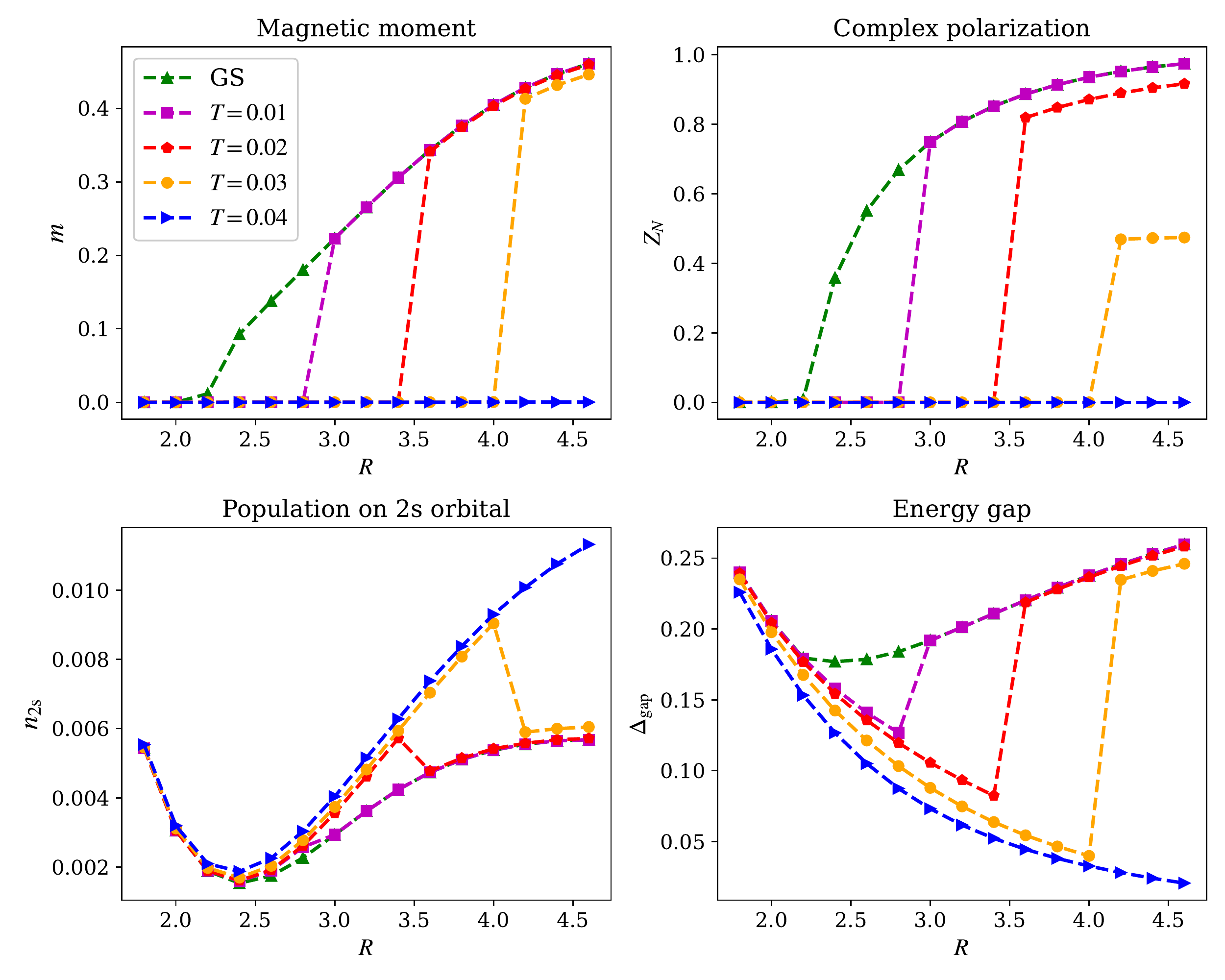}
\caption{Complex polarization, magnetic moment, population on 2s orbital,
and energy gap of hydrogen chain from DFT with B3LYP functional.}
\label{fig:hchain_b3lyp_cp}
\index{figures}
\end{figure}

\begin{table}[hbt!]
\centering
\begin{tabular}{l|ccc}
\hline
$T$/Hartree & Hartree-Fock & PBE  & B3LYP\\
\hline
Ground state & $\sim$1.0 & 2.6 & 2.2\\
 0.01& 2.8 & 2.8 &2.8 \\
0.02 & 3.4 & 3.4 & 3.4\\
0.03 & 4.0 & 4.0 & 4.0\\
\hline
\end{tabular}
\caption{PM-AFM transition bond length $R$ (in Bohr) at ground state and low temperature.}\label{tab:mag_trans}
\index{tables}
\end{table}

\begin{table}[hbt!]
\centering
\begin{tabular}{l|ccc}
\hline
$T$/Hartree & Hartree-Fock & PBE  & B3LYP\\
\hline
Ground state & $\sim$1.0 & 2.6 & 2.2\\
 0.01& -  & 2.8 &2.8 \\
0.02 & 3.4 & 3.6 & 3.4\\
0.03 & 4.0 & - & 4.0\\
\hline
\end{tabular}
\caption{Metal-insulator transition bond length $R$ (in Bohr) at ground state and low temperature.}\label{tab:cp_trans}
\index{tables}
\end{table}

All of the three methods predicted metal-insulator transition and 
PM-AFM transition at ground state and low temperature. The transition
$R$ predicted by the above methods are summarized in Table~\ref{tab:mag_trans}
and Table~\ref{tab:cp_trans}. The two
transitions happened nearly simultaneously, which provided evidence
for the hypothesis that the insulator at large $R$ regime is an 
antiferromagnetic (AFM) insulator. With a metal to insulator transition 
happening
with raising $R$, the population of the 2s orbital experienced a sudden 
drop, which indicates that the origin of the metal phase at small $R$ regime
is caused by the crossover between 1$s$ and 2$s$ bands. Although the three
methods all predicted the transitions, the behaviors of the order parameters
against $R$ are quite different between UHF and PBE calculations. UHF predicted
a much smaller transition $R$ at ground state ($\sim 1 Bohr$), while PBE
predicted $R_c$ to be $\sim 2.6 Bohr$. The finite temperature predictions
of $R_c$ for PM-AFM transitions from the two methods are similar, while the
transition behaviors are quite different: UHF described that the finite 
temperature curves experienced  a sudden jump from 
zero to the ground state curve; PBE predicted that the finite temperature
curves grew from zero at $R_c$ and reached to a peak which decreases with
temperature. Moreover, at $T=0.03$ and $R > 4.0$, PBE predicted an 
AFM metal ($m > 0$ and $Z_N = 0$). This observation confirmed that the 
metal is mainly caused by the crossover of 1$s$ and 2$s$ bands, and the
existence of the AFM order does not necessarily guarantee an insulating 
phase. However, the AFM metal phase is not observed with the other two methods.
The B3LYP results in Fig.~\ref{fig:hchain_b3lyp_cp} are closer to those
from UHF, except that the peaks of $m$ and $Z_N$ drop as the temperature
increases.

\section{Conclusion}\label{sec:cpconc}
In this chapter, we presented the finite temperature formulation of complex polarization
under the scheme of thermal field theory. The complex polarization
has a direct relationship with the electron localization at ground state: 
when the complex polarization is zero, the electrons are delocalized,
and thus the system is metallic; when the complex polarization is 
nonzero, the electrons are localized and thus the system is insulating.
At finite temperature, the complex polarization can also be used as
the indicator of metal-insulator transition. We applied the thermofield
implementation of the complex polarization to the tight binding model 
 with staggered potential $u$, where as $u$ increases, the electrons
tend to sit on the site with lower potential, and thus are localized. 
We observed the increase of the complex polarization with $u$ for both
ground state and finite temperature. Moreover, we found that the
transition temperature predicted by the complex polarization is linearly
 dependent on the staggered 
potential $u$ at intermediate to large $u$ regime, 
which is consistent with the linear dependence of the energy gap on $u$
at this intermediate to large $u$ regime. Therefore, the energy gap,
electron localization, and complex polarization provide the same
predictions of the metal-insulator transition behaviors. We further 
studied the metal-insulator and paramagnetic-antiferromagnetic (PM-AFM) transition
for the hydrogen chain system by computing the complex polarization
and magnetic moment against temperature and H-H bond length. 
Along with the results of the population of 2s orbitals and the energy gap,
we confirmed that the origin of the metallic phase is the crossover
of $1s$ and $2s$ (or higher, e.g, $2p$) bands. The antiferromagnetic (AFM)
phase is usually accompanied by the insulating phase, but at finite 
temperature, we saw that PBE predicted an AFM metallic phase, which 
indicates that the disappearance of the insulating phase is not necessarily
due to the loss of AFM phase. With complex polarization proving to be
a good indicator of metal-insulator transition at both ground 
state and finite temperature, further applications are anticipated to
bring more insights into this intriguing phenomena. 


\chapter{Quantum imaginary time evolution and quantum thermal simulation\label{chp:qite}}

\newcommand{\oper}[1]{\hat{#1}}
\newcommand{\ts}{{\Delta \tau}}
\newcommand{\idx}{m}
\newcommand{\choi}[2]{ \ket{#1} \bra{#2} }
\newcommand{\adj}[1]{ {#1}^\dagger }
\newcommand{\vett}[1]{ {\bf{#1} }}
\newtheorem{theorem}{Theorem}

\section{Abstract}
  An efficient way to compute Hamiltonian ground-states on a quantum computer stands to impact
  many problems in the physical and computer sciences, ranging from quantum simulation to machine learning. Unfortunately, existing techniques, such
  as phase estimation and variational algorithms, display potential disadvantages, such as
  requirements for deep circuits with ancillae and high-dimensional optimization. We describe the quantum imaginary time evolution and quantum Lanczos algorithms,
  analogs of classical algorithms for ground (and excited) states, but with exponentially reduced space and time requirements per iteration,
  and avoiding deep circuits with ancillae and high-dimensional optimization. We further discuss
  quantum imaginary time evolution as a natural subroutine to generate Gibbs averages through an analog of minimally entangled typical thermal
  states. We implement these algorithms with exact classical emulation as well as in prototype circuits on
  the Rigetti quantum virtual machine and  Aspen-1 quantum processing unit,
  demonstrating the power of quantum elevations of classical algorithms.

\section{Introduction}
An important application for a quantum computer is to compute the ground-state $\Psi$ of a Hamiltonian $\oper{H}$
\cite{Feynman_IJTP_1982,Abrams_PRL_1997,Abrams_PRL_1999}. 
This arises in simulations, for example, of the electronic structure of molecules and materials
\cite{Lloyd_Science_1996,Aspuru_Science_2005},
as well as in 
optimization when the cost function is encoded in a Hamiltonian. 
While efficient ground-state determination
cannot be guaranteed for all Hamiltonians, as this is a QMA complete
problem 
\cite{Kempe_SIAM_2004}, several heuristic quantum algorithms 
have been proposed, such as adiabatic state preparation with quantum phase estimation (QPE)
\cite{Farhi_MIT_2000,Kitaev_arxiv_1995} and quantum-classical variational algorithms,
including the quantum approximate optimization algorithm (QAOA)
\cite{Farhi_MIT_2014,Otterbach_arxiv_2017,Moll_QST_2018} and variational quantum eigensolver (VQE)
\cite{Peruzzo_Nature_2013,McClean_NJP_2016,grimsley2018adapt}.
While there have been many advances with these algorithms, they also have potential disadvantages, 
especially in the context of near-term quantum computing architectures and limited quantum resources. 
For example, phase estimation produces a nearly exact eigenstate, but appears impractical without error correction, while
variational algorithms, although somewhat robust to coherent errors, are limited in accuracy for a fixed variational form,
and involve a high-dimensional noisy classical optimization~\cite{mcclean2018barren}.

In classical simulations, different strategies are employed to numerically determine exact ground-states of Hamiltonians.
One popular approach is imaginary-time evolution, 
which expresses the ground-state as the long-time limit of the imaginary-time Schr\"odinger equation 
$- \partial_\beta |\Phi(\beta)\rangle 
= \oper{H} |\Phi(\beta)\rangle$, $|\Psi\rangle 
= \lim_{\beta \to \infty} \frac{|\Phi(\beta)\rangle}{ \| \Phi(\beta) \|}$ (for $\langle \Phi(0) | \Psi \rangle \neq 0$). Unlike variational algorithms with a fixed ansatz, imaginary-time evolution always converges to the ground-state 
(as distinguished from imaginary-time ansatz optimization, which can be trapped in local minima 
\cite{McArdle_arxiv_2018}).
Another common exact algorithm is the iterative Lanczos algorithm 
\cite{Lanczos_somewhere_1950,Arnoldi_somewhere_1951}
and its variations.
The Lanczos iteration constructs the Hamiltonian matrix $\mathbf{H}$ in a successively enlarged Krylov subspace
$\{ |\Phi\rangle, \oper{H} |\Phi\rangle, \oper{H}^2|\Phi\rangle \ldots \}$; diagonalizing $\mathbf{H}$
yields a variational estimate of the ground-state which tends to $|\Psi\rangle$ for a large number
of iterations. For a Hamiltonian on $N$ qubits, the
classical complexity of imaginary time evolution and the Lanczos algorithm
scales as $\sim 2^{\mathcal{O}(N)}$ in space as well as time.
The exponential space comes from storing $\Phi(\beta)$ or the Lanczos vector, while exponential time
comes from the cost of Hamiltonian multiplication $\oper{H} |\Phi\rangle$, as well as, in principle,
though not in practice, the $N$-dependence of the number of propagation steps and propagation time, or number of Lanczos iterations.
Thus it is natural to consider quantum versions of these algorithms that can overcome the exponential bottlenecks.

In this work, we will describe the quantum imaginary time evolution (QITE) and the quantum Lanczos (QLanczos) algorithms
to determine ground-states (as well as excited states in the case of QLanczos) on a quantum computer.
Compared to their classical counterparts, these achieve an exponential reduction in space for a fixed number of propagation
steps or number of iterations, and for a given iteration or time-step offer an exponential reduction in time.
They also offer advantages over existing ground-state quantum algorithms; compared to quantum phase estimation, they do not require
deep circuits, 
and compared to variational ground-state algorithms with a fixed ansatz, they are guaranteed to converge to the ground-state, avoiding non-linear
optimization. A crucial component of our algorithms is the efficient implementation of the non-Hermitian operation of an imaginary time step propagation
$e^{-\ts \oper{H} }$ (for small $\ts$), assuming a finite correlation length in the state.
Non-Hermitian operations are not natural on a quantum computer and are usually achieved using ancillae and postselection.
We will describe how to implement imaginary time evolution on a given state, without ancillae or postselection.
The lack of ancillae and complex circuits make QITE and QLanczos potentially
suitable for near-term quantum architectures. 
Using the QITE algorithm, we further show how we can sample from thermal (Gibbs) states, also without deep circuits or ancillae 
as is usually the case, via a quantum analog of the minimally entangled typical thermal states (QMETTS) algorithm~\cite{White_PRL_2009,Miles_NJP_2010}. We demonstrate the algorithms on spin and fermionic Hamiltonians (short- and long-range spin and Hubbard models, MAXCUT optimization, and dihydrogen
minimal molecular model) using exact classical emulation, and demonstrate proof-of-concept implementations on the
Rigetti quantum virtual machine (QVM) and Aspen-1 quantum processing units (QPUs).

\section{Quantum imaginary-time evolution}
Define a geometric $k$-local Hamiltonian $\oper{H} = \sum_\idx \oper{h}_\idx$ (where each term $\oper{h}_\idx$ acts on at most $k$ neighbouring qubits on an underlying graph)
and a Trotter decomposition of the corresponding imaginary-time evolution,
\begin{align}
  e^{-\beta \oper{H}} = (e^{-\ts \oper{h}_1} e^{-\ts \oper{h}_2} \ldots)^n + \mathcal{O}\left( {\ts} \right); \ n= \frac{\beta}{\ts},
\end{align}
applied to a state $|\Psi\rangle$. After a single Trotter step, we have
\begin{align}
  |\Psi^\prime \rangle = e^{-\ts \oper{h}_\idx} |\Psi\rangle \quad.
\end{align}
The basic idea is that the normalized state $|\bar{\Psi}^\prime \rangle = |\Psi^\prime \rangle / \| \Psi^\prime \|$ can be 
generated from $|\Psi\rangle$ by a unitary operator $e^{-i \ts \oper{A}[\idx]}$ (which also depends on imaginary-time step) acting in the neighbourhood
  of the qubits acted on by $\oper{h}_\idx$, 
where the Hermitian operator $\oper{A}[\idx]$ can be determined from tomography of $|\Psi\rangle$ in this neighbourhood up to controllable errors. 
This is illustrated by the simple example where $|\Psi\rangle$ is a product state. Then, the squared norm
$c = \| \Psi^\prime \|^2$ can be calculated from the expectation value of $\oper{h}_\idx $, which requires measurements 
over $k$ qubits,
\begin{align}
c = \langle \Psi | e^{-2\ts \oper{h}[\idx]} | \Psi\rangle = 1 - 2\ts \langle \Psi|  \oper{h}_ \idx |\Psi\rangle + \mathcal{O}(\ts^2).
\end{align}
Because $|\Psi \rangle$ is a product state, $|\Psi^\prime \rangle$ is obtained by acting the unitary operator 
$e^{-i\ts \oper{A}[\idx]}$ also on $k$ qubits. 
$\oper{A}[\idx]$ can be expanded in terms of an operator basis, such as the Pauli basis 
$\{ \sigma_i\}$ on $k$ qubits,
\begin{align}
\oper{A}[\idx] = \sum_{i_1i_2 \ldots i_k} a[\idx]_{i_1i_2 \ldots i_k} \sigma_{i_1}\sigma_{i_2} \ldots \sigma_{i_k},
\label{eq:aoperator}
  \end{align}
where $I$ denotes the index $i_1i_2 \ldots i_D$. Then, up to $\mathcal{O}(\ts)$, the vector of coefficients $a[\idx]_{i_1i_2 \ldots i_k}$ can be determined from the linear 
system 
\begin{equation}\label{eq:lineareq1}
\mathbf{S} \mathbf{a}[\idx] = \mathbf{b},
\end{equation}
where the elements of $\mathbf{S}$ and $\mathbf{b}$ are expectation values over $k$ qubits of $\Psi$, namely
\begin{align}
  S_{i_1i_2 \ldots i_k, i_1'i_2' \ldots i_k'} &= \langle \Psi|\sigma_{i_1}^\dag\sigma_{i_2}^\dag \ldots \sigma_{i_k}^\dag \sigma_{i_1'}\sigma_{i_2'} {\color{blue} \dots} \sigma_{i_k'}|\Psi\rangle \notag \\
  b_{i_1i_2 \ldots i_k} &= -i \, c^{-\frac{1}{2}} \, \langle \Psi|\sigma_{i_1}^\dag\sigma_{i_2}^\dag \ldots \sigma_{i_k}^\dag \oper{h}[\idx] |\Psi\rangle.
\end{align}
In general, $\mathbf{S}$ will have a null space; to ensure $\mathbf{a}[\idx]$ is real, we minimize
$\|  c^{-1/2}\Psi^\prime -(1-i\ts \oper{A}[\idx]) \Psi\|$ w.r.t. real variations in $\mathbf{a}[\idx]$. Note that the solution
is determined from a linear problem, thus there are no local minima. 

In this simple case, the normalized result of the imaginary time evolution step could be represented by a
unitary over $k$ qubits, because $|\Psi\rangle$ had a zero correlation length. After the initial step, this is no longer the case.
However, for a more general $|\Psi\rangle$ with finite correlation length extending over $C$ qubits (meaning
that the correlations between two observables separated by distance $l$ are bounded by $\exp(- l /C )$),
$|\Psi^\prime \rangle$
can be generated by a unitary acting on a domain of width  $D:= \log(1/\delta) C$ qubits surrounding
the qubits acted on by $h_i$ (this follows from Uhlmann's theorem~\cite{uhlmann}; see Appendix for a proof), with $\delta$ the
approximation error for that time step.
The unitary $e^{-i \ts A[i]}$ can then be determined by measurements and solving the least squares problem over $D$ qubits. For example, if we consider a nearest-neighbor local Hamiltonian on a $d$-dimension square lattice, the number of qubits $D$ where the unitary acts is bounded by $(2\log(1/\delta) C)^d$. 
Because correlations are induced only by the gates applied at previous time steps, the correlation length increases at most with a velocity bounded by
a constant $\alpha_v$ which depends on the geometry of the lattice and the locality of interactions. Consequently, each successive imaginary time step can be simulated by a unitary over an increasingly large neighborhood whose size propagates with velocity bounded by $\alpha_v$ (Fig. 1). 

 The number of measurements and classical storage at an imaginary time $\beta$ (starting the propagation from a product state) is bounded by $\exp(O((\alpha_v \beta)^d))$ for each unitary update, since each unitary at that level acts on at most $(2 \alpha_{v} \beta)^d$ sites; classical solution of the least squares equation has the same scaling $\exp(O((\alpha_v \beta)^d))$, as does the synthesis and application  of the unitary $e^{-i \ts A[i]}$. Thus, space and time requirements are bounded by exponentials of $\beta^d$, but are polynomial in $N$ (the polynomial in $N$ comes from the number of terms in $H$ and from the control of the Trotter error).

\begin{figure}[t!]
\centering
\includegraphics[width=1\textwidth]{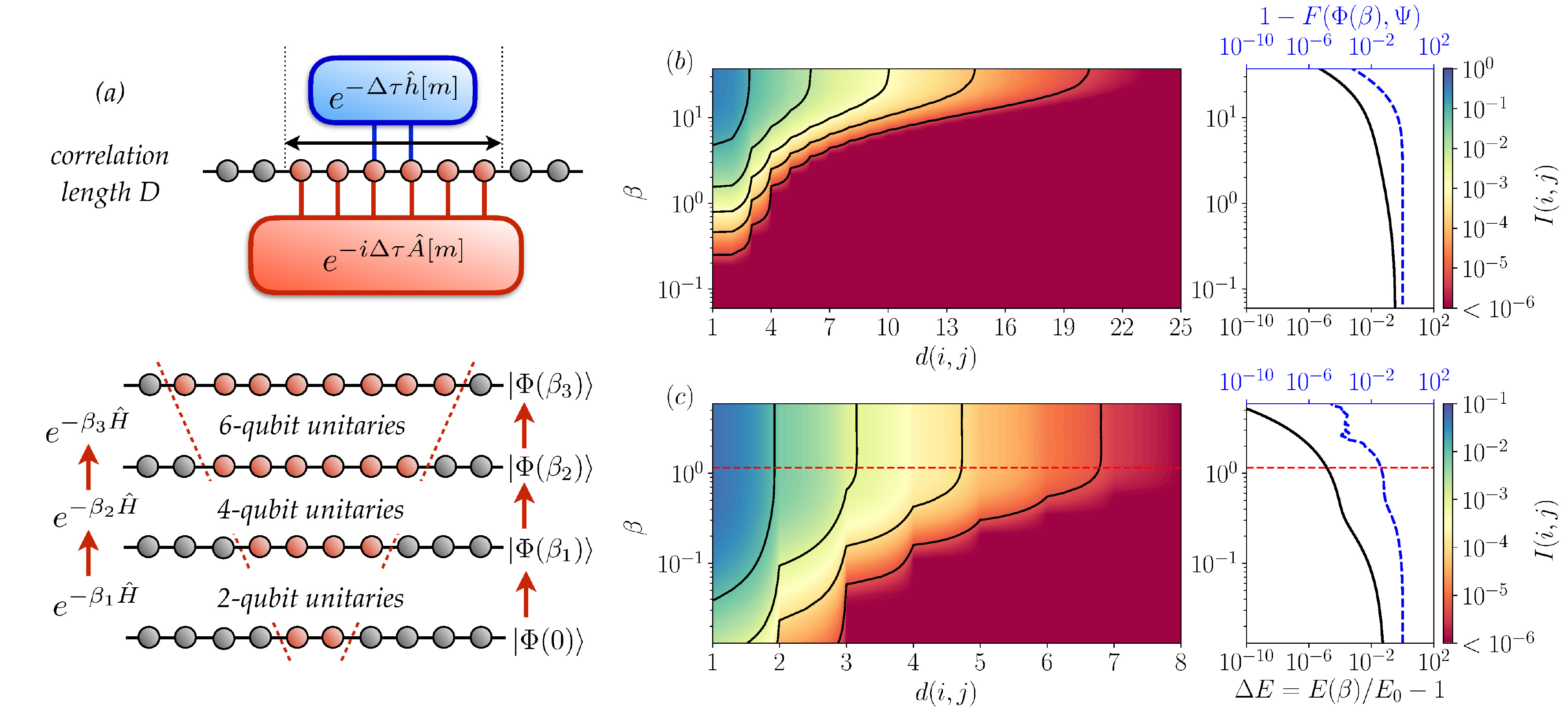} 
\caption{Quantum imaginary time evolution algorithm and correlation length.
(a) Schematic of the QITE algorithm.
Top: imaginary-time evolution under a geometric $k$-local operator $\hat{h}[m]$
can be reproduced by a unitary operation acting on a group of $D>k$ qubits.
Bottom: 
exact imaginary-time evolution starting from a product state requires 
unitaries acting on a domain $D$ that grows with $\beta$.
(b,c) Left: mutual information $I(i,j)$ between qubits $i$, $j$ as a function of 
distance $d(i,j)$ and imaginary time $\beta$, for a 1D (b) and a 2D (c) 
FM transverse-field Ising model, with $h=1.25$ (1D) and $h=3.5$ (2D). The mutual information
is seen to saturate at longer times.
Right: relative error in the energy $\Delta E$ and fidelity
$F= |\langle \Phi(\beta) | \Psi \rangle |^2$ between
the finite-time state $\Phi(\beta)$ and infinite-time state $\Psi$ as a function
of imaginary time. The noise in the 2D fidelity error at large $\beta$ arises from
the approximate nature of the algorithm used. }
\label{fig:1}
\end{figure}

\noindent \emph{Saturation of correlations}.
Note that the correlation volume cannot be larger than $N$. In many physical systems, we expect the correlation volume to
increase with $\beta$ and saturate for $C^d\ll N$ \cite{Hastings_CMP_2005}. As an example, in Fig.~\ref{fig:1} we plot the mutual information between qubits $i$ and $j$ for
the 1D and 2D FM transverse field Ising models computed by tensor network simulation 
which shows a monotonic increase and clear saturation. If saturation occurs before the ground-state is attained, the
cost of the algorithm for subsequent time-steps becomes linear in $\beta$,
and exponential in $C^d$. \\
\noindent \emph{Comparison to classical algorithm}. Unlike classical imaginary time evolution, QITE 
is bounded by an exponential in $\beta$, rather than an exponential in $N$. Thus for fixed $\beta$ (and
the same number of Trotter steps),
we achieve an exponential reduction in cost in space and time in $N$ compared to the classical algorithm. \\
\noindent \emph{Comparison to tensor networks}. If $|\Psi\rangle$ is represented by a tensor network in a classical simulation,
then $e^{-\ts \oper{h}[\idx]}|\Psi\rangle$ can be obtained directly as a classical tensor network with an increased bond dimension
\cite{VidalTEBD,Schollwock_Annals_2011}.
This bond dimension
increases exponentially with imaginary time $\beta$, thus the storage of the tensors, as well as the
cost of applying the imaginary time step $e^{-\ts \oper{h}[\idx]}$ to the tensors  grows  exponentially with $\beta$, similar
to the quantum algorithm. The key distinction is that, other than in one dimension, we cannot guarantee that contracting the resulting
classical tensor network to evaluate observables is efficient; it is a \#P-hard problem
in the worst case in two dimensions (and even in the average case for Gaussian distributed tensors)~\cite{Schuch_PRL_2007,PEPShard2018};
no such problem exists in the quantum algorithm.\\
\noindent \emph{Fermionic Hamiltonians}. For fermions, a non-local mapping to spins (e.g. through the Jordan-Wigner transformation) would
violate the $k$-locality of the Hamiltonian. In principle, this can be bypassed by using a local mapping to spins {\cite{Verstraete_JSM_2005}.
  Alternatively, we conjecture that by using
a fermionic unitary, where the Pauli basis in Eq.~\eqref{eq:aoperator} is replaced by the fermionic operator basis 
$\{ 1, \oper{a}, \oper{a}^\dag, \oper{a}^\dag \oper{a} \}$}, 
 the area of support for the fermionic unitary grows in the same fashion as the standard unitary for geometric $k$-local 
Hamiltonians described above. This can be tested in numerical simulations.\\
\noindent \emph{Long-range Hamiltonians}. Consider a $k$-local Hamiltonian with long-range terms on a lattice, such as
a general pairwise Hamiltonian. Then the action of $e^{-\ts \oper{h}[\idx]}$, if $\oper{h}[\idx]$ acts
on qubits $i$ and $j$, can be emulated by a unitary constructed in the neighborhood of $i$ and $j$, over $(2C \log(1/\delta))^k$ sites.\\
\noindent \emph{Inexact time evolution}. Given limited resources, we can choose to measure and construct
the unitary over a reduced number of sites $D' < D(\beta)$. For example, if $D' = 1$, this gives a mean-field
approximation of the imaginary time evolution. While the unitary is no longer an exact
representation of the imaginary time evolution, there is no issue of a local minimum in its construction, although
the energy is no longer guaranteed to decrease in every time step. In this case, one might apply inexact imaginary time evolution
simply until the energy stops decreasing. Alternatively, with limited resources, one may apply the quantum Lanczos algorithm described below.\\
\noindent\emph{Stabilization}.
Sampling noise in the expectation values of the Pauli operators can affect the solution to Eq.~\ref{eq:lineareq1} that sometimes leads to numerical instabilities. We regularize $\mathbf{S}+\mathbf{S}^T$ against such statistical errors by adding a small $\delta$ to its diagonal. To generate the data presented in Fig.~\ref{fig:4} and Fig.~\ref{fig:5} of the main text, we used $\delta=0.01$ for 1-qubit calculations and $\delta=0.1$ for 2-qubits calculations.

\section{Quantum Lanczos algorithm}
Given the QITE subroutine, we now consider
how to formulate a quantum version of the Lanczos algorithm. A significant practical motivation is that the Lanczos algorithm
typically converges much more quickly than imaginary time evolution, and often in physical simulations only tens of iterations
are needed to converge to good precision. In addition, Lanczos provides a natural way to compute excited states.

In quantum Lanczos, we generate a set of wavefunctions for different imaginary-time projections of
an initial state $| \Psi \rangle$, using QITE as a subroutine. The normalized states are
\begin{equation}
|\Phi_l \rangle = \frac{ e^{- l \Delta \tau \hat{H} } | \Psi_T \rangle }{\| e^{- l \Delta \tau \hat{H} } \Psi_T \|} 
\equiv n_l \, e^{- l \Delta \tau \hat{H} } | \Psi_T \rangle \quad 0 \leq l < L_\text{max} \quad .
\end{equation}
where $n_l$ is the normalization constant.
For the exact imaginary-time evolution and $l$, $l^\prime$ both even (or odd) the matrix elements
\begin{equation}
S_{l,l^\prime} = \langle \Phi_l | \Phi_{l^\prime} \rangle 
\quad,\quad
H_{l,l^\prime} = \langle \Phi_l | \hat{H} | \Phi_{l^\prime} \rangle 
\end{equation}
can be computed in terms of expectation values (i.e. experimentally accessible quantities) only. Indeed, defining
$2r = l+l^\prime$, we have
\begin{equation}
S_{l,l^\prime} = n_l n_{l^\prime} \, \langle \Psi_T | e^{- l \Delta \tau \hat{H} } e^{- l^\prime \Delta \tau \hat{H} } | \Psi_T \rangle
= \frac{n_l n_{l^\prime}}{n_{r}^2} \quad ,
\end{equation}
and similarly
\begin{equation}
H_{l,l^\prime} = n_l n_{l^\prime} \, \langle \Psi_T | e^{- l \Delta \tau \hat{H} } \hat{H} e^{- l^\prime \Delta \tau \hat{H} } | \Psi_T \rangle
= \frac{n_l n_{l^\prime}}{n_{r}^2} \, \langle \Phi_r | \hat{H} | \Phi_r \rangle = S_{l,l^\prime} \, \langle \Phi_r | \hat{H} | \Phi_r \rangle \quad .
\end{equation}
The quantities $n_r$ can be evaluated recursively, since
\begin{equation}
\frac{1}{n^2_{r+1}} = \langle \Psi_T | e^{- (r+1) \Delta \tau \hat{H} } e^{- (r+1) \Delta \tau \hat{H} } | \Psi_T \rangle = 
\frac{ \langle \Phi_r | e^{-2 \Delta \tau \hat{H} } | \Phi_r \rangle }{n_r^2} \quad.
\end{equation}

For inexact time evolution, the quantities $n_r$ and $\langle \Phi_r | \hat{H} | \Phi_r \rangle$ can still be used to
approximate $S_{l,l^\prime}$, $H_{l,l^\prime}$.

Given these matrices, we then solve the generalized  eigenvalue equation $\mathbf{H}\mathbf{x} = E \mathbf{S}\mathbf{x}$ to find an approximation
to the ground-state $| \Phi' \rangle = \sum_l x_l | \Phi_l \rangle$ for the ground state of $\oper{H}$. This eigenvalue equation
can be numerically ill-conditioned, as $S$ can contain small and negative eigenvalues for several reasons: (i)
as $m$ increases the vectors $|\Phi_l \rangle$ become linearly dependent; (ii) simulations have finite
precision and noise; (iii) $S$ and $H$ are computed approximately when inexact time evolution is performed.

To regularize the problem, out of the set of time-evolved states we extract a well-behaved sequence as follows:
(i) start from $|\Phi_\text{last}\rangle = |\Phi_0\rangle$, (ii) add the next $|\Phi_l\rangle$ in the set
of time-evolved states s.t. $|\langle \Phi_l | \Phi_\text{last}\rangle| < s$, where $s$
is a regularization parameter $0<s<1$, (iii) repeat, setting the $|\Phi_\text{last}\rangle=\Phi_l$ (obtained from (ii)), until
the desired number of vectors is reached.

We then solve the generalized eigenvalue equation $\tilde{\mathbf{H}}\mathbf{x} = E \tilde{\mathbf{S}}\mathbf{x}$ spanned by this regularized sequence,
removing any eigenvalues of $\tilde{\mathbf{S}}$ less than a threshold $\epsilon$.
The QLanczos calculations reported in Fig.~\ref{fig:2}  (lower panel) of the main text were stabilized with this algorithm,
in both cases using stabilization parameter $s=0.95$ and $\epsilon = 10^{-14}$. The stabilization parameters used in the QLanczos calculations reported in Fig.~\ref{fig:4} are $s=0.75$ and $\epsilon = 10^{-2}$.

We demonstrate the QLanczos algorithm using classical emulation on the 1D Heisenberg Hamiltonian,
as used for the QITE algorithm above in Fig. \ref{fig:2}.
Using exact QITE (large domains) to generate the matrix elements,
quantum Lanczos converges much more rapidly than imaginary time evolution. Using inexact QITE (small domains), the convergence
is usually faster and also reaches a lower energy.
We also assess the feasibility of QLanczos in the presence of noise, using emulated noise on the Rigetti QVM as well as
on the Rigetti Aspen-1 QPUs.
In Fig. \ref{fig:4}, we see
that QLanczos also provides more rapid convergence than QITE with both noisy classical emulation as well as on the physical device
for 1- and 2-qubits.

\section{Quantum thermal averages}
The QITE subroutine can be used in a range of other algorithms. As one 
example, we now discuss how to compute thermal averages of operators i.e. $\mathrm{Tr}\big[ \oper{O}  e^{-\beta \oper{H}} \big]
/ \mathrm{Tr} \big[ e^{-\beta \oper{H}} \big]$ using imaginary time evolution.
Several procedures have been proposed for quantum thermal averaging \cite{Terhal_PRA_2000}, ranging from generating the 
finite-temperature state explicitly with the help of ancillae, to a quantum analog of Metropolis sampling \cite{Temme_Nature_2011} 
that relies heavily on phase estimation. However, given a method for imaginary time evolution, one can generate thermal averages 
of observables without any ancillae or deep circuits. This can be done by adapting to the quantum setting the classical minimally entangled typical thermal 
state (METTS) algorithm \cite{White_PRL_2009,Miles_NJP_2010}, which generates a Markov chain from which the thermal average can be sampled.

 Consider the thermal average of
an observable $\hat{O}$
\begin{equation}
\langle \hat{O}\rangle = \frac{1}{Z}\mathrm{Tr}[e^{-\beta \hat{H}}\hat{O}] 
=  \frac{1}{Z}\sum_{i}\langle i|e^{-\beta \hat{H}/2} \hat{O} e^{-\beta \hat{H}/2}|i\rangle
\end{equation}
where $\{|i\rangle\}$ is an orthonormal basis set, and $Z$ is the partition
function. Defining $|\phi_i\rangle = P_i^{-1/2}e^{-\beta \hat{H}/2}|i\rangle$,
we obtain
\begin{equation}\label{eq:thermal_sum}
\langle \hat{O}\rangle = \frac{1}{Z} \sum_i P_i \langle \phi_i|\hat{O}|\phi_i\rangle 
\end{equation}
where $P_i = \langle i|e^{-\beta H}|i\rangle$. The summation in Eq.(\ref{eq:thermal_sum}) can be estimated by sampling
 $|\phi_i\rangle$ with probability $P_i/Z$, and summing
the sampled $\langle \phi_i|\hat{O}|\phi_i\rangle$.

In standard Metropolis sampling for thermal states, one starts from $|\phi_i\rangle$ and obtains the next state
$|\phi_j\rangle$ from randomly proposing and accepting based an
acceptance probability. However, rejecting and resetting  in the quantum analog of Metropolis~\cite{Temme_Nature_2011} is complicated to
implement on a quantum computer, requiring deep circuits.
The METTS algorithm provides an alternative way to sample
$|\phi_i\rangle$ distributed with probability $P_i/Z$ without this complicated procedure.
The algorithm is as follows
\begin{enumerate}
\item Choose a classical product state (PS) $|i\rangle$.
\item Compute $|\phi_i\rangle = P_i^{-1/2}e^{-\beta H/2}|i\rangle$ and
calculate observables of interest.
\item Collapse $|\phi_i\rangle$ to a new PS $|i'\rangle$ with probability
$p(i\rightarrow i') = |\langle i'|\phi_i\rangle|^2$ and repeat Step 2.
\end{enumerate}

In the above algorithm, $|\phi_i\rangle$ is named a minimally entangled typical
thermal state (METTS).
One can easily show that the set of METTS sampled following the above
procedure has the correct Gibbs distribution~\cite{Stoudenmire2010}.
Generally, $\{|i\rangle\}$ can be any orthonormal basis.
For convenience when implementing METTS on a quantum computer,
$\{|i\rangle\}$ are chosen to be product states.

On a quantum emulator or a quantum computer, the METTS algorithm is carried out as following
\begin{enumerate}
\item Prepare a product state $|i\rangle$.
\item Imaginary time evolve $|i\rangle$ with the QITE algorithm to
$|\phi_i\rangle = P_i^{-1/2}e^{-\beta H/2}|i\rangle$, and measure
the desired observables.
\item Collapse $|\phi_i\rangle$ to another product state by measurement.
\end{enumerate}

In practice, to avoid  long statistical correlations between samples, we used
the strategy of collapsing METTS onto alternating basis sets~\cite{Stoudenmire2010}.
For instance, for the odd METTS steps, $|\phi_i\rangle$ is collapsed
onto the $X$-basis (assuming a $Z$ computational basis, tensor products of $|+\rangle$ and $|-\rangle$), and for
the even METTS steps, $|\phi_i\rangle$ is collapsed onto the $Z$-basis
(tensor products of $|0\rangle$ and $|1\rangle$). The statistical error is then estimated by block analysis~\cite{Flyvbjerg1989}.
%
In Fig. \ref{fig:5}a we show the results of  quantum METTS (using exact classical emulation) for the thermal average $\langle \oper{H}\rangle$
as a function of temperature $\beta$,
for the 6-site 1D AFM transverse-field Ising model for several temperatures and domain sizes; sufficiently
large $D$ converges to the exact thermal average at each $\beta$; error bars reflect only the finite samples in QMETTS.
We also show an implementation of quantum METTS
on the Aspen-1 QPU and QVM with a 1-qubit field model (Fig.~\ref{fig:5}b),
and using the QVM for a 2-qubit AFM transverse field Ising model (Fig.~\ref{fig:5}d);
while the noise introduces additional error
including a systematic shift (Fig.~\ref{fig:5}c), the correct behaviour of the thermal average with temperature is reproduced on the
emulated and actual quantum device.

\section{Results}
To illustrate the QITE algorithm, we have carried out exact classical emulations (assuming perfect
expectation values and perfect gates) for several Hamiltonians: short-range 1D Heisenberg;
1D AFM transverse-field Ising; long-range 1D Heisenberg with spin-spin coupling
$J_{ij} ={|i-j|+1}^{-1}; i\neq j$; 1D Hubbard at half-filling (mapped by Jordan-Wigner transformation to a spin model); a 6-qubit MAXCUT 
\cite{Farhi_MIT_2014,Otterbach_arxiv_2017,Moll_QST_2018} instance, and a minimal basis 2-qubit dihydrogen molecular Hamiltonian~\cite{OMalley2015}. We 
describe the models below.

\noindent\textbf{1D Heisenberg and transverse field Ising model}.
The 1D short-range Heisenberg Hamiltonian is defined as
\begin{align}
  \hat{H} =\sum_{\langle ij\rangle} \hat{\mathbf{S}}_i \cdot \hat{\mathbf{S}}_j \quad,
\end{align}
the 1D long-range Heisenberg Hamiltonian as
\begin{align}
\hat{H} =\sum_{i \neq j} \frac{1}{|i-j|+1} \, \hat{\mathbf{S}}_i \cdot \hat{\mathbf{S}}_j \quad,
\end{align}
and the AFM transverse-field Ising Hamiltonian as
\begin{align}
  \hat{H} = \sum_{\langle ij\rangle} \hat{{S}}^z_i  \hat{{S}}^z_j + \sum_i h \hat{S}^x_i \quad .
\end{align}

\noindent\textbf{1D Hubbard model}.
The 1D Hubbard Hamiltonian is defined as
\begin{align}
  \hat{H} = - \sum_{\langle ij \rangle \sigma} a^\dag_{i\sigma} a_{j\sigma} + U \sum_i \hat{n}_{i\uparrow} \hat{n}_{i\downarrow}
\end{align}
where $\hat{n}_{i \sigma} = a^\dag_{i\sigma} a_{i\sigma}$, $\sigma \in \{ \uparrow,\downarrow\}$, and $\langle \cdot \rangle$
denotes summation over nearest-neighbors, here with open-boundary conditions. We label the $n$
lattice sites with an index $i=0 \dots n-1$, and the $2n-1$ basis functions as $|\varphi_0 \rangle = |0 \uparrow \rangle$,
$|\varphi_1 \rangle = |0 \downarrow \rangle$, $|\varphi_2 \rangle = |1 \uparrow \rangle$,
$|\varphi_3 \rangle = |1 \downarrow \rangle$ $\dots$.
Under Jordan-Wigner transformation, recalling that
\begin{align}
\hat{n}_{p} = \frac{1-Z_p}{2} \quad,\quad
\hat{a}^\dag_p \hat{a}_q + \hat{a}^\dag_q \hat{a}_p =  \frac{X_p X_q \prod_{k=q+1}^{p-1} Z_k \left( 1- Z_p Z_q \right)}{2} \quad,
\end{align}
with $p=0 \dots 2n-2$ and $q<p$, the Hamiltonian takes the form
\begin{align}
\hat{H} = - \sum_p \frac{X_{p} X_{p+2} Z_{p+1} \left( 1- Z_{p} Z_{p+2} \right)}{2}
+ U \sum_{p \, \mathrm{even}} \frac{(1-Z_{2i}) (1-Z_{2i+1})}{4} + \mu \sum_p \frac{(1-Z_p)}{2}
\end{align}

\noindent\textbf{H$_2$ molecule minimal basis model}.
We use the hydrogen molecule minimal basis model at the STO-6G level of theory. This is a common minimal model of hydrogen chains
\cite{hachmann2006multireference,Motta_PRX_2017} and has previously been studied in quantum simulations, for example
in~\cite{OMalley2015}. Given a molecular
geometry (H-H distance $R$) we perform a restricted Hartree-Fock calculation and express the second-quantized Hamiltonian
in the orthonormal basis of RHF molecular orbitals as~\cite{szaboostlund}
\begin{equation}
\label{eq:H2}
\hat{H} = H_0 + \sum_{pq} h_{pq} \hat{a}^\dag_p \hat{a}_q + \frac{1}{2} \sum_{prqs} v_{prqs}
\hat{a}^\dag_p \hat{a}^\dag_q \hat{a}_s \hat{a}_r
\end{equation}
where $a^\dag$, $a$ are fermionic creation and annihilation operators for the molecular orbitals.
The Hamiltonian \eqref{eq:H2} is then encoded by a Bravyi-Kitaev transformation into the 2-qubit operator
\begin{equation}
\hat{H} = g_0 I \otimes I + g_1 Z \otimes I + g_2 I \otimes Z + g_3 Z \otimes Z + g_4 X \otimes X + g_5 Y \otimes Y \quad,
\end{equation}
with coefficients $g_i$ given in Table I of \cite{OMalley2015}.

\noindent\textbf{MAXCUT Hamiltonian}.
The MAXCUT Hamiltonian encodes the solution of the MAXCUT problem.
Given a graph $\Gamma = (V,E)$, where $V$
is a set of vertices and $E \subseteq V \times V$ is a set of links between vertices in $V$, a cut of $\Gamma$ is a subset
$S \subseteq V$ of $V$. The MAXCUT problem consists in finding a cut $S$ that maximizes the number of edges between $S$ and $S^c$ (the complement of $S$).
We denote the number of links in a given cut $S$ as $C(S)$.
The MAXCUT problem can be formulated as a Hamiltonian ground-state problem, by (i) associating a qubit to every vertex in $V$, (ii) associating to every
partition $S =$ an element of the computational basis (here assumed to be in the $z$ direction) of the form $| z_0 \dots z_{n-1} \rangle$, where $z_i = 1$ if
$i \in S$ and $z_i = 0$ if $i \in S^c$, and finding the minimal (most negative) eigenvalue of the $2$-local Hamiltonian
\begin{equation}
\hat{C} = -\sum_{(ij) \in E} \frac{1 - \hat{S}^z_i \hat{S}^z_j}{2} \quad .
\end{equation}
The spectrum of $\hat{C}$ is a subset of numbers $C \in \{ 0,1 \dots |E| \}$.

To assess the feasibility of implementation on near-term quantum devices, we have also carried out noisy
classical emulation (sampling expectation values and with an error model) using the Rigetti quantum virtual machine (QVM) and a
physical simulation using the Rigetti Aspen-1 QPUs, for a single qubit field model ($2^{-1/2}(X+Z)$)\cite{lamm2018simulation} and a
1D AFM transverse-field Ising model. 
We carry out QITE using different fixed domain sizes $D$ for the unitary or fermionic unitary.

For quantum simulations, we used pyQuil, an open source Python library, to write quantum circuits that interface with both Rigetti's quantum virtual machine (QVM) and
the Aspen-1 quantum processing units (QPUs). 
pyQuil provides a way to include noise models in the QVM simulations. Readout error can be included in a
high-level API provided in the package and is characterized by $p_{00}$ (the probability of reading $|0\rangle$ given that the qubit
is in state $|0\rangle$) and $p_{11}$ (the probability of reading $|1\rangle$ given that the qubit is in state $|1\rangle$). Readout errors
can be mitigated by estimating the relevant probabilities and correcting the estimated expectation values. We do so by using a high level
API present in pyQuil.

A general noise model can be applied to a gate in the circuit by applying the appropriate Kraus maps. Included in the package is
a high level API that applies the same decoherence error attributed to energy relaxation and dephasing to every gate in the circuit.
This error channel is characterized by the relaxation time $T_{1}$ and coherence time $T_{2}$. We also include in our emulation
our own high-level API that applies the same depolarizing noise channel to every single gate by using the appropriate Kraus maps.
The depolarizing noise is characterized by $p_{1}$, the depolarizing probability for single-qubit gates and $p_{2}$, the
depolarizing probability for two-qubit gates.

\subsection{Benchmarks}

\begin{figure}[t!]
\centering
\includegraphics[width=1\textwidth]{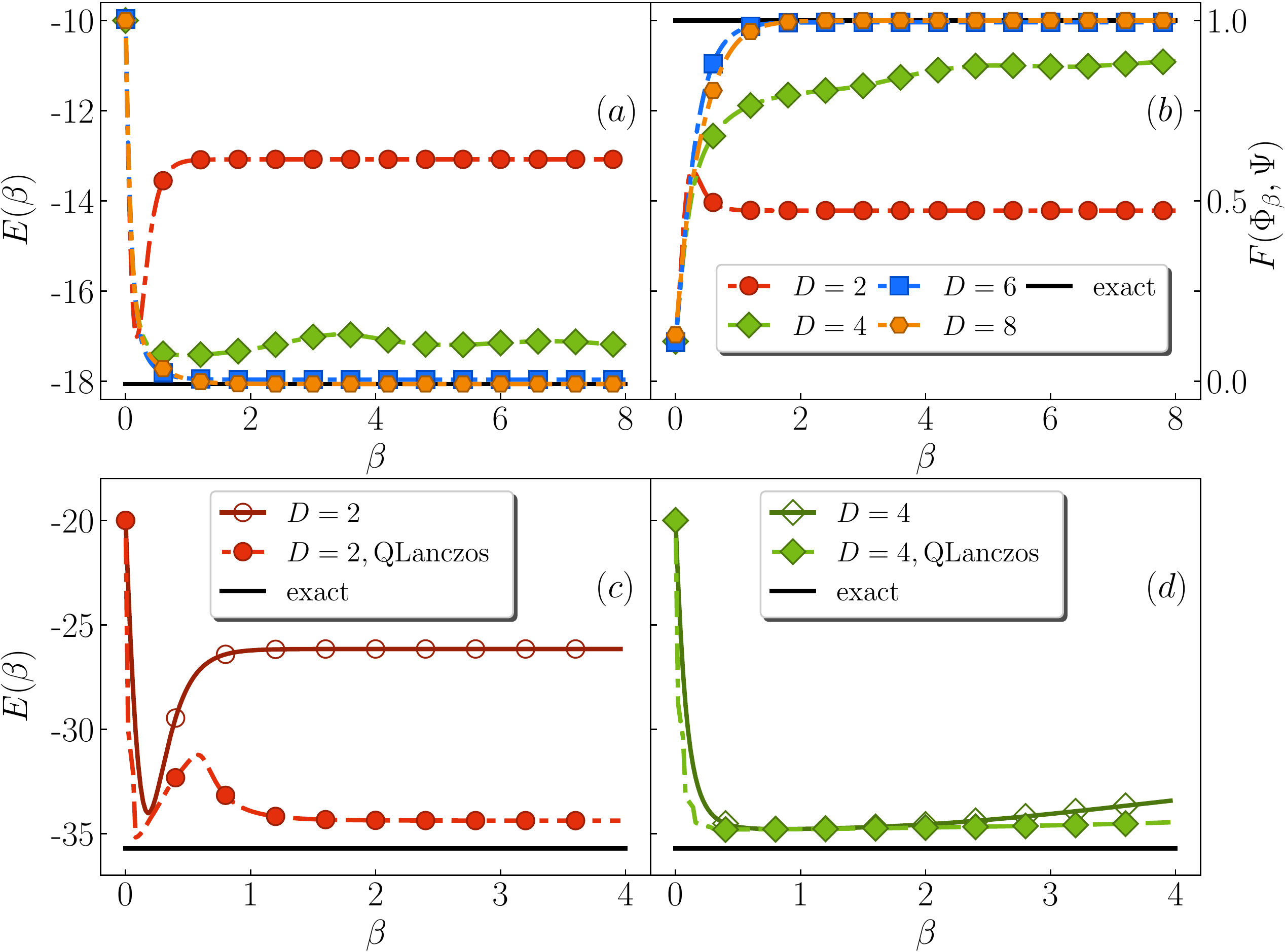} 
\caption{Energy calculations with QITE and QLanczos algorithms. Top: QITE energy $E(\beta)$ (a) and fidelity (b) between finite-time state
$\Phi(\beta)$ and exact ground state $\Psi$ as function of imaginary time $\beta$, 
  for a 1D 10-site Heisenberg model, showing the convergence with increasing unitary domains of $D=2-8$ qubits.  
Bottom: QITE (dashed red, dot-dashed green lines) and QLanczos (solid red, solid green lines) energies 
as function of imaginary time $\beta$, for a 1D Heisenberg model with $N=20$ qubits, using domains 
of $D=2$ (c) and $4$ qubits (d), showing improved convergence of QLanczos over QITE. Black line
  is the exact ground-state energy/fidelity.}
\label{fig:2}
\end{figure}

Figs.~\ref{fig:2} and \ref{fig:3} show the energy obtained by QITE as a function of $\beta$ and $D$ for the various models. 
As we increase $D$, the asymptotic ($\beta \to \infty$) energies rapidly converge to the exact ground-state. For
small $D$, the inexact QITE tracks the exact QITE for a time until the 
correlation length exceeds $D$. Afterwards, it may go down or up. The non-monotonic behavior is strongest
for small domains; in the MAXCUT example, the smallest domain $D=2$ gives an oscillating energy. 
In such cases, we consider a reasonable estimate of the ground-state energy to be the point at which the energy stops decreasing. In all 
models, increasing $D$ past a maximum value (less than $N$) no longer 
affects the asymptotic energy, showing that the correlations have saturated (this is true even in the MAXCUT instance).

\begin{figure}[t!]
\centering
\includegraphics[width=1\textwidth]{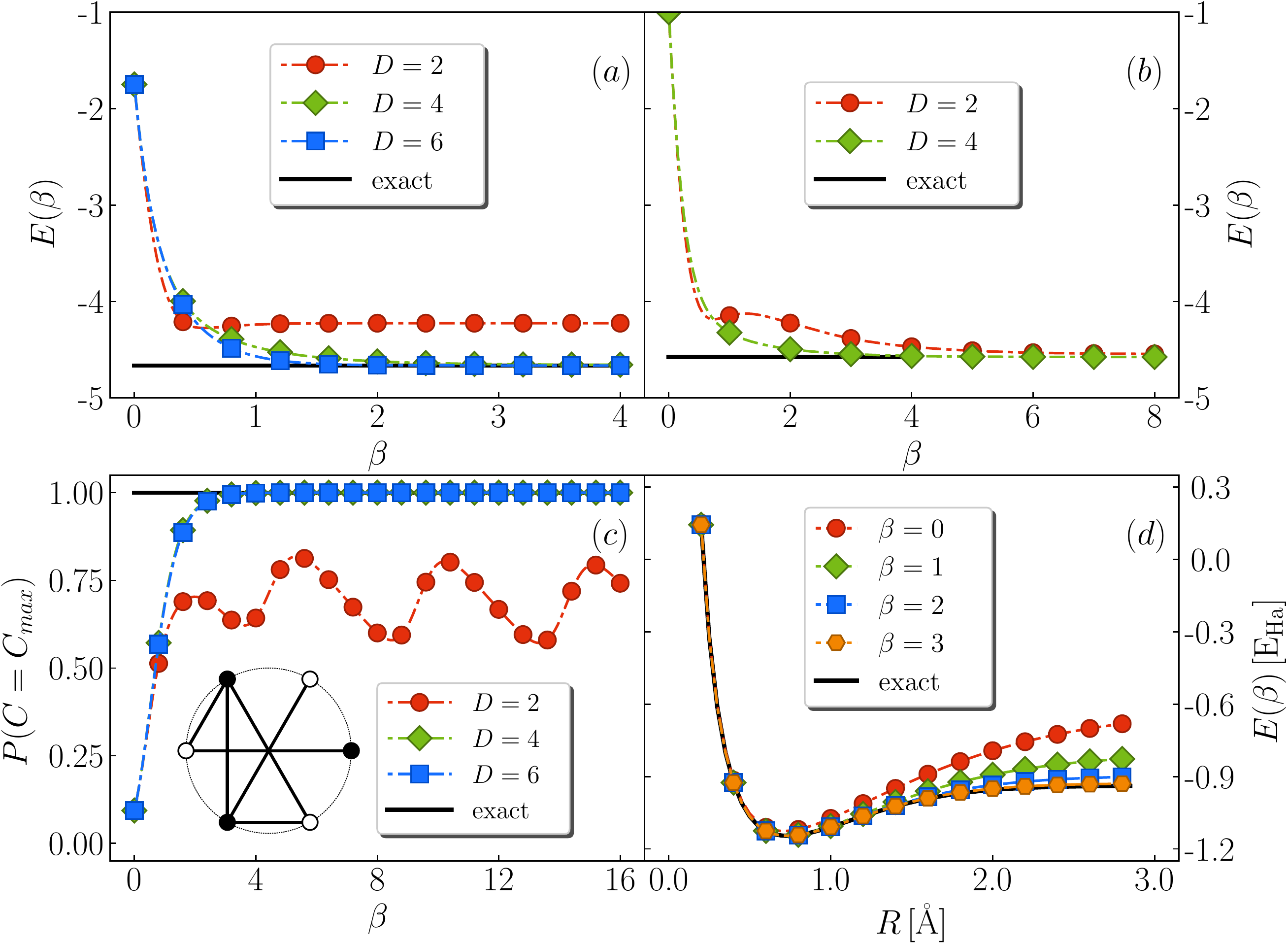} 
\caption{QITE energy evaluations. (a) QITE energy $E(\beta)$ as a function of imaginary time $\beta$ for 
a 6-site 1D long-range Heisenberg model, for unitary domains $D=2-6$;
(b) a 4-site 1D Hubbard model with $U/t = 1$, for unitary domains $D=2,4$; (d)
the H$_2$ molecule in the STO-6G basis. (c) Probability of MAXCUT detection, $P(C=C_{max})$ as a function of imaginary time $\beta$, for the
$6$-site graph in the panel. Black line is the exact ground-state energy/probability of detection.}
\label{fig:3}
\end{figure}

\begin{figure}[t!]
\centering
\includegraphics[width=1\textwidth]{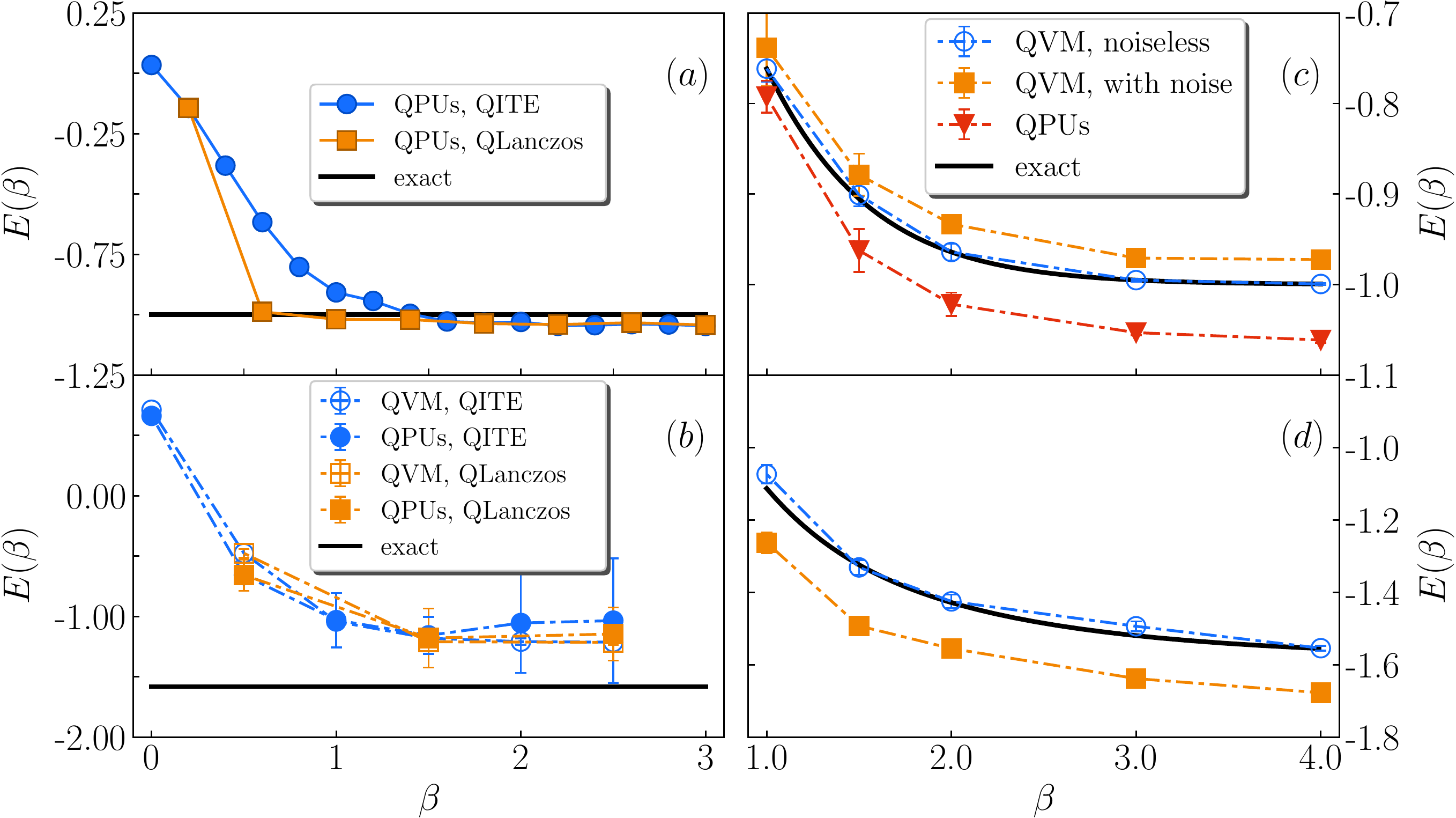}
\caption{QITE, QLanczos, and QMETTS energies $E(\beta)$ as a function of imaginary time $\beta$ for
  1-qubit field model using the QVM and QPU (qubit 14 on Aspen-1) and 2-qubit AFM transverse field Ising model
  using the QVM and QPU (qubit 14, 15 on Aspen-1). (a) Ground state energies
for 1-qubit field model using the QVM and QPU (qubit 14 on Aspen-1); (b) 
ground state energies for 2-qubit AFM transverse field Ising model
  using the QVM and QPU (qubit 14, 15 on Aspen-1); (c) finite temperature 
energies for  1-qubit field model using the QVM and QPU (qubit 14 on Aspen-1)
; and (d) finite temperature energies for 2-qubit AFM transverse field Ising model
  using the QVM.
Black lines are the exact solutions.}
\label{fig:4}
\end{figure}

Figs.~\ref{fig:4} shows the results of running the QITE algorithm on Rigetti's QVM and Aspen-1 QPUs for 1- and 2- qubits, respectively.
Encouragingly for near-term simulations, despite sampling errors and other errors such as gate, readout and incoherent errors present in the device, it is possible to converge to a ground-state energy close to the exact energy for the 1-qubit case. 
This result reflects a robustness that is sometimes informally observed in imaginary time evolution algorithms in which the ground state energy is approached even if the imaginary time step is not perfectly implemented. In the 2-qubit case, although the QITE energy converges,
there is a systematic shift which is reproduced on the QVM using available noise parameters for readout, decoherence and depolarizing noise~\cite{Rigetti}. (Remaining discrepancies between the emulator and hardware are likely attributable to cross-talk between parallel gates not included in the noise model) . However, reducing decoherence and depolarizing errors in the QVM or using different sets of
qubits with improved noise characteristics all lead to improved convergence to the exact ground-state energy.

\begin{figure}[t!]
\centering
  \includegraphics[width=0.8\textwidth]{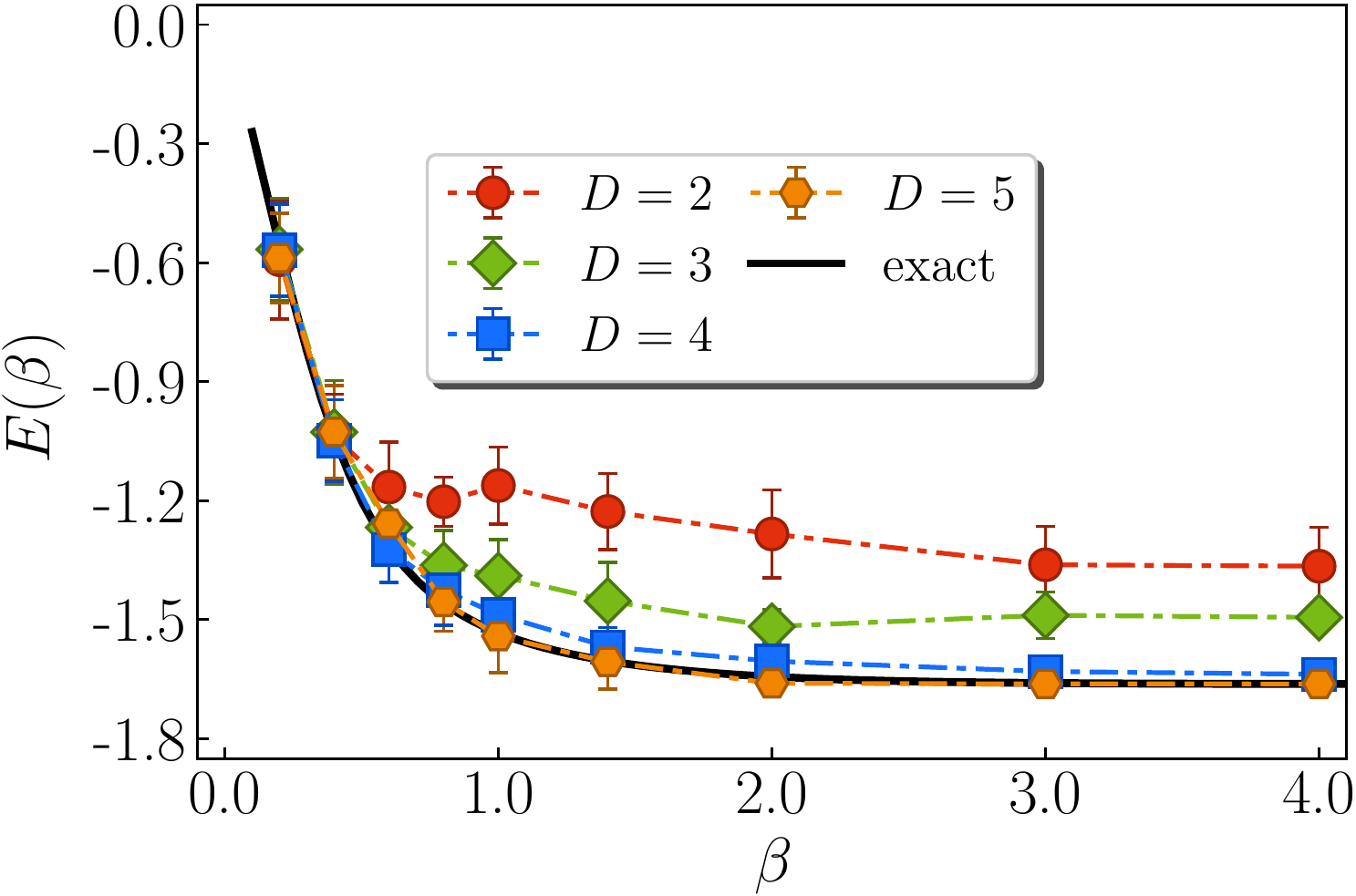}
  
\caption{Thermal (Gibbs) average $\langle E \rangle$ at temperature $\beta$ from QMETTS for a 1D 6-site 
  Heisenberg model (exact emulation).}
\label{fig:5}
\end{figure}

\section{Conclusions}.
 We have introduced quantum analogs
of imaginary time evolution (QITE) and the Lanczos algorithm (QLanczos), that can be carried out without ancillae or deep circuits,
and which achieve exponential reductions in space and time per iteration relative to their classical counterparts.
They provide new quantum routes to approximate ground-states of Hamiltonians in both physical simulations and in optimization
that avoid some of the disadvantages of phase estimation based approaches and variational algorithms. The QLanczos iteration
appears especially powerful if sufficient sampling can be done, as in practice it obtains accurate estimates of ground-states 
from only a few iterations, and also provides an estimate of excited states. Additionally, further algorithms that use QITE and QLanczos as subroutines can 
be formulated, such as a quantum version of the METTS algorithm to compute thermal averages. Encouragingly, these algorithms 
appear useful in conjunction with near-term quantum architectures, and serve to demonstrate the power of quantum elevations of 
classical simulation techniques, in the continuing search for quantum supremacy.

%


\newpage

\begin{appendices}
\chapter{Appendix for Chapter~\ref{chp:dmet} and Chapter~\ref{chp:hlatt}}
\section{Proof of the finite temperature bath formula}

Let $M$ be an arbitrary $N\times N$ full rank square matrix, and $Q_k$ be the $Q$ derived from the QR decomposition of the first $n$ columns of $M^k$, i.e., $M^k[:,:n] = Q_k R_k$, with $k = 0, 1, ..., K$. Let $S$ ($|S| < N$) be a space spanned by $\{Q_0, Q_1, ..., Q_K\}$, and $P$ be the projector onto $S$. The following equality holds
\begin{equation}\label{eq:2prove}
    P^{\dagger}M^lP[:,:n] = (P^{\dagger}MP)^l[:,:n], \hspace{0.2cm} l \leq K+1
.
\end{equation}

We prove the statement by mathematical induction. First write $M$ in the following form
\begin{equation}
    M = \begin{bmatrix}
    A & B \\
    C & D 
    \end{bmatrix},
\end{equation}
where $A$ and $B$ are the first $n$ rows of $M$, $A$ and $C$ are the first $n$
 columns of $M$. 
The projector has the form
\begin{equation}
    P = \begin{bmatrix}
        I & 0\\
        0 & V
        \end{bmatrix},
\end{equation}
where $I$ is an $n\times n$ matrix, and $V$ is an $(N-n)\times (K-1)n$ matrix with $(K-1)n < (N-n)$. The columns of $V$ are derived from the QR decomposition of $M^k[n:, :n]$, $k = 1, ..., K$ and then orthogonalized.
We can write $V$ in the form
\begin{equation}
    V = \begin{bmatrix} V_1 & V_2 & \cdots & V_K \end{bmatrix}
\end{equation}
where $V_k$ is from the QR decomposition of $M^k[n:, :n]$. $\PP M P$ has the 
form
\begin{equation}
\PP M P = \begin{bmatrix}
        A & BV\\
        \V C & \V DV
        \end{bmatrix}.
\end{equation}

The mathematical induction consists of two parts:

(i) We start with $l=2$. The first $n$ columns of $P^{\dagger}M^2P$ and $(P^{\dagger}MP)^2$ are
\begin{equation}
\begin{split}
   P^{\dagger}M^2P[:,:n] &=  \begin{bmatrix}
    A^2 + BC \\ \V CA + \V DC
    \end{bmatrix}\\
    (P^{\dagger}MP)^2[:,:n] &= \begin{bmatrix}
    A^2 + BV\V C \\ \V CA + \V DV\V C
    \end{bmatrix}.
\end{split}
\end{equation}
The two are equal when 
\begin{equation}\label{eq:VVC}
V\V C = V\V (VR) = V I R = VR = C
\end{equation}
which is true since $V$ is the $Q_1$ from the QR decomposition of $C$. (Note that $\V V = I$, but $V\V \neq I$).
 Therefore, Eq.~(\ref{eq:2prove}) holds for $l=2$ when $K \geq 1$. 

(ii) Now let us inspect Eq.~(\ref{eq:2prove}) for the $l$th order, assuming that Eq.~(\ref{eq:2prove}) holds for the $(l-1)$th order, i.e. $\PP M^{l-1}P = (\PP M P)^{l-1}$. Let 
\begin{equation}
    M^{l-1} = \begin{bmatrix}
    W & X \\
    Y & Z \\
    \end{bmatrix}
\end{equation}
and $M^l = MM^{l-1}$ has the form
\begin{equation}
    M^l = \begin{bmatrix}
    AW+BY  &  AX+BZ\\
    CW+DY  &  CX+DZ
    \end{bmatrix}
\end{equation}
and 
\begin{equation}
    \PP M^{l-1}P = (\PP MP)^{l-1} = \begin{bmatrix}
    W  &  XV\\
    \V Y  &  \V Z V
    \end{bmatrix}
\end{equation}
One can prove that $CW$ and $C$ share the same $Q$ space from the QR decomposition: let $C = QR$, then $CW = QRW$, where $R$ and $W$ are square matrices; we then perform another QR decomposition of $RW$, $RW = U\tilde{R}$, where $U$ is a unitary matrix, then $CW = \tilde{Q}\tilde{R}$ with $\tilde{Q} = QU$. Therefore, $Q$ and $\tilde{Q}$ span the same space. 

The first $n$ columns of $\PP M^lP$ and $(\PP M P)^l$ are
\begin{equation}
    \PP M^l P[:,:n]  = \begin{bmatrix}
    AW + BY \\
    \V CW + \V DY 
    \end{bmatrix},
\end{equation}

\begin{equation}
\begin{split}
    (\PP M P)^l[:,:n] =& \left((\PP M P)(\PP M P)^{l-1}\right)[:,:n]  \\
=& \begin{bmatrix}
     AW + BV\V Y\\
    \V CW + \V DV\V Y 
    \end{bmatrix}.
\end{split}
\end{equation}
Since $V$ contains $V_{l-1}$, which is derived from the QR decomposition of $Y$, we have $V\V Y  = Y$ as in Eq.~(\ref{eq:VVC}). 

Combining (i) and (ii) we then see that Eq.~(\ref{eq:2prove}) holds for the $l$th order with $K\geq l-1$ for $\forall l$. \QEDB

\section{Analytic gradient of the cost function for correlation potential fitting in DMET at finite temperature}
We rewrite the gradient of the cost function Eq.~\eqref{eq:cost_func_dmet} here
\begin{equation}\label{eq:gradient_cost_apdx}
\frac{\mrm{d}f}{\mrm{d}u_{kl}} = \sum_{i,j\in \text{imp}}2(D_{ij}^{\text{low}} - D_{ij}^{\text{high}}) 
\frac{\mrm{d}D_{ij}^{\text{low}} }{\mrm{d}u_{kl}},
\end{equation}
where $D^{\text{low}}$ is the single-particle density matrix from the
 mean-field (low-level) calculations, $D^{\text{high}}$ is the high-level
single partile density matrix, and $u$ is the correlation potential
matrix. The key to evaluate Eq.~\eqref{eq:gradient_cost_apdx} is to 
calculate $\frac{\mrm{d}D_{ij}^{\text{low}} }{\mrm{d}u_{kl}}$. For simplicity,
we will drop the superscript on  $D^{\text{low}}$. 

At finite temperature, $D$ is given by 
\begin{equation}
D = \frac{1}{1+e^{\beta (h - \mu + \delta u}},
\end{equation}
where $h$ is the one-body Hamiltonian, $\mu$ is the chemical potential 
(Fermi level),
and $\delta u$ is a small perturbation added to the Hamiltonian. 
Then $\frac{\mrm{d}D_{ij}^{\text{low}} }{\mrm{d}u_{kl}}$ has two parts:
\begin{equation}\label{eq:grad_D_apdx}
\frac{\mrm{d}D_{ij}(u, \mu(u)) }{\mrm{d}u_{kl}} = 
\frac{\partial D_{ij}}{\partial u_{kl}}\biggr\vert_{\mu}
+ \frac{\partial D_{ij}}{\partial \mu}\frac{\partial\mu}{\partial u_{kl}},
\end{equation}
where the second part comes from the change of Fermi level due to the 
change of correlation potential.

The first part of Eq.~\eqref{eq:grad_D_apdx} is evaluated by
\begin{equation}
\frac{\partial D_{ij}}{\partial u_{kl}}
= \sum_{pq} C_{ip}C^*_{kp}K_{pq}C_{lq}C_{jq}^*,
\end{equation}
where $C$ is the molecular orbital (MO) coefficient matrix with $ijkl$ the
site indices and $pq$ the MO indices, and 
\begin{equation}
K_{pq} = n_p (1-n_q)\frac{1-e^{\beta(\varepsilon_p-\varepsilon_q)}}{\varepsilon_p-\varepsilon_q},
\end{equation}
where $n_p$ is the occupation number on the $p$th orbital and $ \varepsilon_p$
is the energy of $p$th orbital. Note that when $\varepsilon_p = \varepsilon_q$,
both the denominator and numerator goes to zero and the value of $K_{pq}$ 
depends on $\beta$. When $\beta = \inf$, $\varepsilon_p = \varepsilon_q$ means
$n_p = n_q = 0$ or $1$, so $K_{pq} = 0$ is bounded. 

The second part is evaluated by
\begin{equation}
\begin{split}
\frac{\partial D_{ij}}{\partial \mu} &= \sum_p\beta C_{ip}n_p(1 - n_p)
C^*_{jp}\\
\frac{\partial\mu}{\partial u_{kl}} &= \frac{\sum_p n_p(1 - n_p)
C_{kp}^* C_{lp}}
{\sum_{p}n_p(1-n_p)}.
\end{split}
\end{equation}
The contribution of this part is usually small at low temperature and
becomes non-neglegible at higher temperature.

\section{Davidson diagonalization}\label{sec:apdx_davidson}
The Davidson diagonalization~\citep{Davidson1975} algorithm is an efficient
way to find the lowest/highest eigenvalues of a Hermitian matrix. 
In quantum chemistry, this method is widely used to get the ground 
state or low-lying excited states. This method constructs a subspace of the
Hilbert space from an initial vector as the guess of the ground state,
and diagonalize the Hamiltonian in this subspace. A preconditioner 
is used to make the algorithm more stable and converge fast.
The steps to evaluate $m$ lowest eigenvectors are listed below:
\begin{enumerate}
\item Select initial guess vectors $\mbf{v}^i, i = 1,...,n\geq m$ to form a 
 subspace $\mathcal{S}$. 
\item Construct the matrix representation of the Hamiltonian in the 
subspace $\mathcal{S}$: $\tilde{H}_{ij} = \mbf{v}_i^\dag\tilde{H}\mbf{v}_j$.
\item Diagonalize $\tilde{H}$ to obtain the lowest $m$ eigenvalues and
corresponding eigenvectors, $\tilde{H}\mbf{x}^p = \lambda_p \mbf{x}^p$.
The current approximated eigenvectors are $\mbf{c}_p = \sum_i x^p_i
\mbf{v}_i$.
\item Starting from the ground state ($p=1$), compute the residual vector 
$\mbf{r}_r = \sum_{i=1}^p\left(H - \lambda_i
 \right)\mbf{c}_i$. If $||\mbf{r_p}|| < \epsilon$, then move on to the next
excited state ($p\rightarrow p+1$). Otherwise,
compute the rescaled correction vector $\mbf{\sigma}^k_i = \left(\lambda_k 
- A_{ii}\right)r^k_i$.
\item Orthogonalize $\mbf{\sigma}^k$ with respect to $\mathcal{S}$ and normalize it. Add $\mbf{\sigma}^k$  to $\mathcal{S}$. If the size of $\mathcal{S}$
exceeds the preset maximum size, discard the earlest vectors.
\item Go back to Step 2 until the algorithm converges.
\end{enumerate}
The above algorithm iteratively finds the lowest $m$ eigenvectors of the 
Hamiltonian. Compared to other subspace methods such as Lanczos algorithm
mentioned in Chapter~\ref{chp:intro}, the Davidson algorithm is more 
accurate for both ground state and low-lying excited states. Note that
when updating the excited state, the already converged ground state might
be perturbed, therefore in Step 4, we recommend that one should always start
from calculating the residual of the ground state. To make the algorithm
faster, one could not worry about the ground state for a moment until 
all $m$ eigenvectors are derived, and then reexamine the residual of the 
ground state to make sure it is not perturbed.

\chapter{Appendix for Chapter~\ref{chp:qite}}
\section{Representing imaginary-time evolution by unitary maps}

As discussed in the main text, we  map the scaled non-unitary action of $e^{-\Delta\tau \hat{h}_m}$
  on a state $\Psi$ 
to that of a unitary $e^{-i\Delta\tau \hat{A}[m] }$, i.e.
\begin{align}
| \Psi^\prime \rangle \equiv c^{-1/2} \, e^{-\Delta \tau \hat{h}_m } |\Psi\rangle = e^{-i\Delta\tau \hat{A}[m]} |\Psi\rangle \quad .
\end{align}
where $c = \langle \Psi | e^{-2 \Delta \tau \hat{h}_m } |\Psi\rangle$.
$\hat{h}_m$ acts on $k$ geometrically local qubits; 
$\hat{A}$ is Hermitian and acts on a domain of $D$ qubits around the support of $\hat{h}_m$,
and is expanded as a sum of Pauli strings acting on the $D$ qubits,
\begin{align}
\hat{A}[m] &= \sum_{i_1i_2 \ldots i_D} a[m]_{i_1i_2 \ldots i_D} \sigma_{i_1}\sigma_{i_2} \ldots \sigma_{i_D} \notag \\
   &= \sum_I a[m]_I \sigma_I \label{eq:pauli}
\end{align}
where $I$ denotes the index $i_1i_2 \ldots i_D$. Define $  |\Delta_0\rangle = \frac{| \Psi^\prime \rangle - | \Psi\rangle}{\Delta \tau}$
and $|\Delta\rangle = -i \hat{A}[m] |\Psi\rangle$.
Our goal is to minimize the difference $||\Delta_0 - \Delta||$. If the unitary $e^{-i\Delta\tau \hat{A}[m] }$ is defined over a 
sufficiently large domain $D$, then this error minimizes at $\sim 0$, for small $\Delta \tau$. Minimizing for real $a[m]$
corresponds to minimizing the quadratic function $f(a[m])$
\begin{align}
f(a[m]) = f_0 + \sum_I b_I a[m]_I + \sum_{IJ} a[m]_I S_{IJ} a[m]_J
\end{align}
where
\begin{align}
f_0 &= \langle \Delta_0 | \Delta_0 \rangle \quad , \\
S_{IJ} &= \langle \Psi | \sigma^\dag_I \sigma_J | \Psi\rangle \quad ,\\
b_I &= i \, \langle \Psi | \sigma^\dag_I | \Delta_0 \rangle -  i \, \langle  \Delta_0 | \sigma_I | \Psi \rangle \quad ,
\end{align}
whose minimum obtains at the solution of the linear equation
\begin{align}
\left( \mathbf{S}+\mathbf{S}^T \right) \mathbf{a}[m] = -\mathbf{b} \label{eq:lineareq}
\end{align}
In general, $\mathbf{S}+\mathbf{S}^T$ may have a non-zero null-space. Thus, we solve Eq.~\eqref{eq:lineareq}
either by applying the generalized inverse of $\mathbf{S}+\mathbf{S}^T$ or by an iterative algorithm such as conjugate gradient.

For fermionic Hamiltonians, we replace the Pauli operators in Eq.~(\ref{eq:pauli}) by fermionic field operators.
For a number conserving Hamiltonian, such as the fermionic Hubbard Hamiltonian treated in Fig. 3 in the main text, 
we write
\begin{align}
\hat{A}[m] &= \sum_{i_1i_2 \ldots i_D} a[m]_{i_1i_2 \ldots i_D} \hat{f}^\dag_{i_1}\ldots \hat{f}^\dag_{i_{D/2}} \hat{f}_{i_{D/2+1}} \ldots \hat{f}_{i_D}
\end{align}
where $\hat{f}^\dag$, $\hat{f}$ are fermionic creation, annihilation operators respectively.

\section{Proof of correctness from finite correlation Length}

Here we present a more detailed analysis of the running time of the algorithm. Consider a $k$-local Hamiltonian 
\begin{equation}
H = \sum_{l=1}^m h_l 
\end{equation}
acting on a $d$-dimensional lattice with $\Vert h_i \Vert \leq 1$, where $\Vert * \Vert$ is the operator norm. In imaginary time evolution (used e.g. in Quantum Monte-Carlo or in tensor network simulations) one typically applies Trotter formulae to approximate 
\begin{equation}
\frac{e^{- \beta H} | \Psi_0 \rangle}{ \Vert   e^{- \beta H} | \Psi_0 \rangle \Vert   }
\end{equation}
for an initial state $| \Psi_0 \rangle$ (which we assume to be a product state) by 
\begin{equation}  \label{trotterdecomp}
 \frac{  \left (  e^{- t h_1 / n} \ldots  e^{- t h_m / n}  \right)^{n} | \Psi_0 \rangle   }   { \Vert  \left (  e^{- t h_1 / l} \ldots  e^{- t h_m / n}  \right)^{n} | \Psi_0 \rangle   \Vert  }.
\end{equation}
This approximation leads to an error which can be made as small as one wishes by increasing the number of time steps $n$.

Let $| \Psi_s \rangle$ be the state (after renormalization) obtained by applying $s$ terms $e^{- t h_i / n} $ from $\left(  e^{- t h_1 / n} \ldots  e^{- t h_m / n}  \right)^{n}$; with this notation $| \Psi_{mn} \rangle$ is the state given by Eq. (\ref{trotterdecomp}). In the QITE algorithm, instead of applying each of the operators $ e^{- t h_i / n}$ to $| \Psi_0 \rangle$ (and renormalizing the state), one applies local unitaries $U_s$ which should approximate the action of the original operator. Let $| \Phi_s \rangle$ be the state after $s$ unitaries have been applied. 

Let $C$ be an upper bound on the correlation length of $| \Psi_s \rangle$ for every $s$: we assume that for every $s$, and every observables $A$ and $B$ separated by $\text{dist}(A, B)$ sites, 
\begin{equation} \label{correlationdecay}
\langle \Psi_s | A \otimes B | \Psi_s \rangle - \langle \Psi_s | A   | \Psi_s \rangle \langle \Psi_s |  B  | \Psi_s \rangle \leq \Vert A \Vert \Vert B \Vert e^{- \text{dist}(A, B) / C}.
\end{equation}

\begin{theorem}  \label{bounderrors}
For every $\varepsilon > 0$, there are unitaries $U_s$ each acting on 
\begin{equation}
k  (2 C)^d \ln^d(2 \sqrt{2} n m \varepsilon^{-1})
\end{equation}
qubits such that
\begin{equation}
\left \Vert   | \Psi_{mn} \rangle  -  | \Phi_{mn} \rangle  \right \Vert \leq \varepsilon 
\end{equation}

\end{theorem}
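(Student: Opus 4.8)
The strategy is to control a single Trotter step and then telescope the error across all $mn$ steps. Consider the step that applies $M := e^{-t h_l/n}$ to the current Trotter state $|\Psi_s\rangle$, write $\tau = t/n$, and let $X = \mathrm{supp}(h_l)$ be the $k$ sites acted on. I would take the domain $D$ to consist of all sites within lattice distance $R$ of $X$, so $X \subseteq D$ and $\mathrm{dist}(X, D^c) > R$; in a $d$-dimensional lattice a union bound over the $k$ sites of $X$ gives $|D| \le k(2R+1)^d$. The step produces $|\Psi_{s+1}\rangle = c^{-1/2} M |\Psi_s\rangle$ with $c = \langle \Psi_s | M^\dagger M | \Psi_s\rangle \in [e^{-2\tau}, e^{2\tau}]$.

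The crucial estimate is that this step barely disturbs the marginal on $D^c$: I claim $\| \mathrm{Tr}_D |\Psi_{s+1}\rangle\langle\Psi_{s+1}| - \mathrm{Tr}_D |\Psi_s\rangle\langle\Psi_s| \|_1 \le e^{4\tau} e^{-R/C}$. The point is that $M$ is supported on $X$ and hence commutes with every observable $A$ supported on $D^c$; for such $A$ with $\|A\|\le 1$ one gets $\mathrm{Tr}\!\big(A\,\mathrm{Tr}_D|\Psi_{s+1}\rangle\langle\Psi_{s+1}|\big) = \langle A\, e^{-2\tau h_l}\rangle_{\Psi_s} / \langle e^{-2\tau h_l}\rangle_{\Psi_s}$, so its difference from $\langle A\rangle_{\Psi_s}$ is precisely the connected correlator of $A$ and $e^{-2\tau h_l}$ divided by $\langle e^{-2\tau h_l}\rangle_{\Psi_s} \ge e^{-2\tau}$. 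Since $\mathrm{supp}(A)$ and $X$ are more than $R$ apart and $\|e^{-2\tau h_l}\| \le e^{2\tau}$, the correlation-decay hypothesis \eqref{correlationdecay} (applied to $|\Psi_s\rangle$, the state \emph{before} the step) bounds the numerator by $e^{2\tau} e^{-R/C}$; taking the supremum over Hermitian $\|A\|\le 1$ on $D^c$ gives the trace-norm bound.

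Next I would invoke Uhlmann's theorem with ``system'' $D^c$ and ``reference'' $D$: since $|\Psi_s\rangle$ and $|\Psi_{s+1}\rangle$ are purifications of their marginals on $D^c$, there is a unitary $U_{s+1}$ acting only on $D$ with $|\langle \Psi_{s+1} | U_{s+1} | \Psi_s\rangle|$ equal to the fidelity of those two marginals, which by the Fuchs--van de Graaf inequality is at least $1 - \tfrac12 e^{4\tau} e^{-R/C}$. Fixing the global phase, $\| U_{s+1} |\Psi_s\rangle - |\Psi_{s+1}\rangle \| \le e^{2\tau} e^{-R/(2C)}$. Taking $|\Phi_0\rangle = |\Psi_0\rangle$ and $|\Phi_{s+1}\rangle = U_{s+1}|\Phi_s\rangle$ (these are the unitaries witnessing the theorem), unitary invariance of the norm and the triangle inequality give $\| |\Psi_{s+1}\rangle - |\Phi_{s+1}\rangle \| \le e^{2\tau} e^{-R/(2C)} + \| |\Psi_s\rangle - |\Phi_s\rangle \|$, hence after $mn$ steps $\| |\Psi_{mn}\rangle - |\Phi_{mn}\rangle \| \le mn\, e^{2\tau} e^{-R/(2C)}$. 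For $n$ large $e^{2\tau}$ is a small constant, so setting $R$ of order $C\ln(2\sqrt2\, nm/\varepsilon)$ forces the bound below $\varepsilon$, and then $|D| \le k(2R+1)^d$ is of the stated order $k(2C)^d \ln^d(2\sqrt2\, nm\,\varepsilon^{-1})$.

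The main obstacle is the per-step locality estimate of the second paragraph: justifying that a local imaginary-time step moves distant marginals by only $e^{-\Omega(R)}$. The saving grace is that no Lieb--Robinson bound is needed --- because $M$ is exactly supported on $X$ and commutes with everything on $D^c$, the change collapses to the connected-correlator form, and the only things to watch are that $e^{-2\tau h_l}$ is a bounded observable to which \eqref{correlationdecay} legitimately applies, and that the normalization $\langle e^{-2\tau h_l}\rangle_{\Psi_s} \ge e^{-2\tau}$ cannot amplify the error. The remaining work --- passing from the trace-norm estimate to fidelity, incurring the square root in Uhlmann, accumulating $mn$ per-step errors, and counting $|D|$ --- is routine and is what pins down the explicit constants $2\sqrt2$ and $(2C)^d$ in Theorem~\ref{bounderrors}.
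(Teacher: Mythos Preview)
Your proposal is correct and follows essentially the same route as the paper's proof: telescope the error over the $mn$ Trotter steps, show that a single local imaginary-time step barely perturbs the marginal on the complement $D^c$ of a neighborhood of $\mathrm{supp}(h_l)$, invoke Uhlmann to produce a unitary on $D$, and accumulate. The only real difference is that where the paper black-boxes the marginal bound by citing Lemma~9 of \cite{brandao2015exponential}, you write out the connected-correlator argument explicitly (using that $e^{-2\tau h_l}$ is a bounded observable commuting with anything on $D^c$); this makes your version self-contained and, incidentally, gets the direction of the partial trace right for Uhlmann to yield a unitary on $D$ rather than on its complement.
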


\begin{proof}

We have
\begin{eqnarray} \label{boundingerror1}
\left \Vert  | \Psi_{s} \rangle  -| \Phi_{s} \rangle   \right  \Vert &=&
\left \Vert  | \Psi_{s} \rangle   - U_s | \Phi_{s-1} \rangle   \right  \Vert \nonumber \\
&\leq& \left \Vert   | \Psi_{s} \rangle   - U_s | \Psi_{s-1} \rangle   \right  \Vert    +  \left \Vert | \Psi_{s-1} \rangle  - | \Phi_{s-1} \rangle \right \Vert    
\end{eqnarray}

To bound the first term we use our assumption that the correlation length of $| \Psi_{s-1} \rangle$ is smaller than $C$. Consider a region $R_{v}$ of all sites that are at most a distance $v$ (in the Manhattan distance on the lattice) of the sites in which $h_{i_s}$ acts. Let $\text{tr}_{\backslash R_v}(| \Psi_s \rangle \langle \Psi_s | )$ be the reduced state on $R_v$, obtained by partial tracing over the complement of $R_v$ in the lattice. Since
\begin{equation}
 | \Psi_{s} \rangle = \frac{ e^{-\beta h_{i_s}/n}  | \Psi_{s-1} \rangle }{ \Vert e^{ - \beta h_{i_s}/n} | \Psi_{s-1} \rangle \Vert },
\end{equation}
it follows from Eq. (\ref{correlationdecay}) and Lemma 9 of \cite{brandao2015exponential} that 
\begin{equation} \label{boundmarginal}
\left \Vert \text{tr}_{\backslash R_v}(| \Psi_s \rangle \langle \Psi_s | ) -  \text{tr}_{\backslash R_v}(| \Psi_{s-1} \rangle \langle \Psi_{s-1} | ) \right \Vert_1 \leq   \Vert e^{h_{i_s}/n} \Vert^{-1} e^{- \frac{v}{C}} \leq 2 e^{- \frac{v}{C}},
\end{equation}
where we used that for $n \geq 2\beta$, $\Vert e^{- \beta h_{i_s}/n} \Vert \geq \Vert I - \beta h_{i_s}/n \Vert \geq 1 - \beta/n \geq 1/2$. Above $\Vert * \Vert_1$ is the trace norm. 

The key result in our analysis is Uhlmann's theorem (see e.g. Lemmas 11 and 12 of \cite{brandao2015exponential}). It says that two pure states with nearby marginals must be related by a unitary on the purifying system. In more detail, if $| \eta \rangle_{AB}$ and $| \nu \rangle_{AB}$ are two states s.t. $\Vert \eta_A - \nu_A \Vert_1 \leq \delta$, then there exists a unitary $V$ acting on $B$ s.t. 
\begin{equation} \label{uhlmannstatement}
\Vert | \eta \rangle_{AB} - (I \otimes V) | \nu \rangle_{AB}    \Vert \leq 2 \sqrt{\delta}.
\end{equation}

Applying Uhlmann's theorem to $| \Psi_s \rangle$ and $| \Psi_{s-1} \rangle$, with $B = R_v$, and using Eq. (\ref{boundmarginal}), we find that there exists a unitary $U_s$ acting on $R_{v}$ s.t. 
\begin{equation}
 \left \Vert  | \Psi_{s} \rangle   - U_s | \Psi_{s-1} \rangle   \right  \Vert \leq 2 \sqrt{2}   e^{- \frac{v}{2C}},
\end{equation}
which by Eq. (\ref{boundingerror1}) implies 
\begin{equation}
 \left \Vert  | \Psi_{s} \rangle   - U_s | \Psi_{s-1} \rangle   \right  \Vert \leq 2 \sqrt{2}  m  n e^{- \frac{v}{2C}},
\end{equation}

Choosing $\nu = 2 C \ln(2 \sqrt{2} n m \varepsilon^{-1})$ as the width of the support of the approximating unitaries, the error term above is $\varepsilon$. The support of the local unitaries is $k \nu^d$ qubits (as this is an upper bound on the number of qubits in $R_d$). Therefore each unitary $U_s$ acts on at most
\begin{equation}
k  (2 C)^d \ln^d(2 \sqrt{2} n m \varepsilon^{-1})
\end{equation}
qubits. 



\end{proof}

\vspace{0.4 cm}

\noindent \textit{Finding $U_s$:} In the algorithm we claim that we can find the unitaries $U_s$ by solving a least-square problem. This is indeed the case if we can write them as $U_s = e^{i A[s] / n}$ with $A[s]$ a Hamiltonian of constant norm. Then for sufficiently large $l$, $U_s = I + i A[s]/n + O((1/n)^2)$ and we can find $A[s]$ by performing tomography of the reduced state over the region where $U_s$ acts and solving the linear problem given in the main text. Because we apply Uhlmann's Theorem to $ | \Psi_{s-1} \rangle$ and
\begin{equation}
\frac{ e^{- \beta h_{i_s}/n}  | \Psi_{s-1} \rangle }{ \Vert e^{ - \beta  h_{i_s}/n} | \Psi_{s-1} \rangle \Vert },
\end{equation}
using $e^{ - \beta h_{i_s}/n} = I - \beta h_{i_s}/n + O((1/n)^2)$ and following the proof of the Uhlmann's Theorem, we find that the unitary can indeed be taken to be close to the identity, i.e. $U_s $ can be written as $e^{i A[s] / n}$ 

\vspace{0.4 cm}

\noindent \textit{Total Running Time:} Theorem \ref{bounderrors} gives an upper bound on the maximum support of the unitaries needed for a Trotter update, while tomography of local reduced density matrices gives a way to find the unitaries. The cost for tomography is quadratic in the dimension of the region, so it scales as $\exp(O( k  (2 C)^d \ln^d(2 \sqrt{2} n m \varepsilon^{-1})))$. This is also the cost to solve classically the linear system which gives the associated Hamiltonian $A[s]$ and of finding a circuit decomposition of $U_s = e^{i A[s] / n}$ in terms of 
two qubit gates. As this is repeated $mn$ times, for each of the $mn$ terms of the Trotter decomposition, the total running time (of both quantum and classical parts) is
\begin{equation}
ml \exp(O( k  (2 C)^d \ln^d(2 \sqrt{2} n m \varepsilon^{-1}))).
\end{equation}
This is exponential in $(C)^d$, with $C$ the correlation length, and quasi-polynomial in $n$ (the number of Trotter steps) and $m$ (the number of local terms in the Hamiltonian. Note that typically $m = O(N)$, with $N$ the number of sites). While this an exponential improvement over the $\exp(O(N))$ scaling classically, the quasi-polynomial dependence on $m$ is still prohibitive in practice. Below we show how to improve on that.

\vspace{0.4 cm}

\noindent \textit{Local Approximation:} We expect in practice to substantially beat the bound on the support of the unitaries given in Theorem \ref{bounderrors} above. Indeed, if one is only interested in a local approximation of the state (meaning that all the local marginals of $|\Phi_{nm} \rangle$ are close to the ones of $e^{- \beta H} |\Psi_0 \rangle$, but not necessarily the global states), then we expect the support of the unitaries to be independent of the number of terms of the Hamiltonian $m$ (while for global approximation we get a polylogarithmic dependence on $m$). 

The scaling with $m$ in the bound comes from the additive accumulation of error from each of the $ml$ steps (Eq. (\ref{boundingerror1})). The assumption of a correlation length $C$ ensures that the errors of replacing each local term in the Trotter decomposition by a unitary do not all add up if one is interested in local observables. Indeed, the contribution of the local error for a region $S$ from the replacement of $e^{-  \beta h_{j_s} / n}$ by $U_s$ is $\exp(- l / C)$, with $l$ the distance of the support of $h_{j_s}$ to $S$. Then we can substitute Eq.  (\ref{boundmarginal}) by 
\begin{equation}  \label{localerror term}
\left  \Vert   \text{tr}_{\backslash S} (  | \Psi_{mn} \rangle \langle \Psi_{mn} | ) -  \text{tr}_{\backslash S} (  | \Phi_{mn} \rangle \langle \Phi_{mn} | )    \right \Vert  \leq 2\sqrt{2}  n (C + |S|)  e^{- \frac{v}{2C}}. 
\end{equation}
with $|S|$ the size of the support of $S$. This gives a bound on the size of the support of the unitaries $U_s$ of 
\begin{equation}
k  (2 C)^d \ln^d(2 \sqrt{2} n (C + |S|) \varepsilon^{-1})
\end{equation}

Using this improved bound, the total running time becomes
\begin{equation}
ml \exp(O(  k  (2 C)^d \ln^d(2 \sqrt{2} n (C + |S|) \varepsilon^{-1})  )).
\end{equation}
As $m = O(N)$, we find  the scaling with the number of sites $N$ to be linear.

\vspace{0.4 cm}

 \noindent \textit{Non-Local Terms:} Suppose the Hamiltonian has a term $h_q$ acting on qubits which are not nearby, e.g. on two sites $i$ and $j$. Then $e^{- \beta h_q /n}$ can still be replaced by an unitary, which only acts on sites $i$  and $j$ and qubits in the neighborhoods of the two sites. This is the case if we assume that the state has a finite correlation length and the proof is again an application of Uhlmann's theorem (we follow the same argument from the proof of Theorem \ref{bounderrors}  but define $R_v$ in that case as the union of the neighborhoods of $i$ and $j$). Note however that the assumption of a finite correlation length might be less natural for models with long range interactions.

\section{Spreading of correlations}

In the main text, we argued that the correlation volume $V$ of the state $e^{-\beta H}|\Psi\rangle$ is bounded
for many physical Hamiltonians and saturates at the ground-state with $V \ll N$ where $N$ is the system size.
To numerically measure correlations, we use the mutual information between two sites, defined as
\begin{align}
I(i,j) =   S(i)+S(j) - S(i,j)
\end{align}
where $S(i)$ is the von Neumann entropy of the density matrix of site $i$ ($\rho(i)$) and similarly for $S(j)$, and $S(i,j)$
is the von Neumann entropy of the two-site density matrix for sites $i$ and $j$ ($\rho(i,j)$).

To compute the mutual information in Fig. 1 in the main text, we used matrix product state (MPS) and finite projected entangled pair state (PEPS) imaginary time evolution for the spin-$1/2$ 1D and 2D FM transverse field Ising model (TFI)
\begin{align}
  H_{TFI} = - \sum_{\langle ij \rangle} \sigma^z_i \sigma^z_j  - h \sum_{i} \sigma^x_i
\end{align}
where the sum over $\langle i, j \rangle$ pairs are over nearest neighbors. We use the parameter $h=1.25$ for the 1-D calculation and $h=3.5$ for the 2-D calculations as the ground-state is gapped in both cases.  It is known that the ground-state correlation length is finite.

\noindent \textbf{MPS}.
We performed MPS imaginary time evolution (ITE) on a 1-D spin chin with $L=50$ sites with open boundary conditions. We start from an initial state that is a random product state, and perform ITE using time evolution block decimation (TEBD) \cite{vidal2004TEBD,schollwock2011mps} with a first order Trotter decomposition. In this algorithm, the Hamiltonian is separated into terms operating on even and odd bonds. The operators acting on a single bond are exponentiated exactly. One time step is given by time evolution of odd and even bonds sequentially, giving rise to a Trotter error on the order of the time step $\Delta \tau$. In our calculation, a time step of $\Delta \tau = 0.001$ was used. 

We carry out ITE simulations with maximum bond dimension of $D=80$, but truncate singular values less than 1.0e-8 of the maximum singular value.  
In the main text, the ITE results are compared against the ground state obtained via the density matrix renormalization group (DMRG)). This should be equivalent to comparing to a long-time ITE ground state.  The long-time ITE ($\beta=38.352$) ground state reached an energy per site of -1.455071, while the DMRG ground-state energy per site is -1.455076. The percent error of the nearest neighbor correlations are on the order of 1.0e-4\% to 1.0e-3\%, and about 1.0e-2\% for correlations between the middle site and the end sites (a distance of 25 sites). The error in fidelity between the two ground states was about 5.0e-4.


\noindent \textbf{PEPS}. We carried out finite PEPS \cite{nishino1996corner,verstraete2004renormalization,
verstraete2006criticality,orus2014practical} imaginary time evolution for the two-dimensional transverse
field Ising model on a lattice
size of $21 \times 31$. The size was chosen to be large enough to see the spread of mutual information in the bulk 
without significant effects from the boundary. The mutual information was calculated
along the long (horizontal) axis in the center of the lattice.
The standard Trotterized
imaginary time evolution scheme for PEPS~\cite{VerstraeteITimeReview} was used with a time step $\Delta \tau = 0.001$,
up to imaginary time $\beta = 6.0$, starting from a random product state. To reduce computational cost from the
large lattice size, the PEPS
was defined in a translationally invariant manner with only 2 independent tensors \cite{VerstraeteIPEPS}
updated via the so-called ``simple update'' procedure \cite{XiangSimpleUpdate}. 
The simple update has been shown to be sufficiently accurate for capturing correlation 
functions (and thus $I(i,j)$) for ground states with relatively short correlation 
lengths (compared to criticality) \cite{LubaschAlgos,LubaschUnifying}. 
We chose a magnetic field value $h=3.5$ which is 
detuned from the critical field ($h \approx 3.044$) but         
still maintains a correlation length long enough to see
interesting behaviour.

\noindent \textit{Accuracy:} Even though the simple update procedure was used for the tensor update,
we still needed to contract the $21 \times 31$ PEPS at at every imaginary time step $\beta$ for a range of
correlation functions, amounting to a large number of contractions.
To control the computational cost, we limited  our bond dimension to $D=5$ and used an optimized
contraction scheme \cite{XiangContract}, with maximum allowed bond dimension 
of $\chi = 60$ during the contraction.
Based on converged PEPS ground state correlation functions
with a larger bond dimension of $D=8$, our $D=5$ PEPS yields $I(i,i+r)$ (where $r$ denotes horizontal separation) at large $\beta$ with
a relative error of $\approx 1\%$ for $r=1-4$, $5\%$ or less for $r=5-8$, and $10\%$ or greater for
$r > 8$. At smaller values of $\beta$ ($< 0.5$) the errors up to $r=8$ are much smaller because 
the bond dimension of 5 is able to completely support the smaller correlations (see Fig. 1, main text).
While error analysis on the 2D Heisenberg model \cite{LubaschAlgos} suggests
that errors with respect to $D=\infty$ may be larger, such analysis also confirms that 
a $D=5$ PEPS captures the qualitative behaviour of
correlation in the range $r=5-10$ (and beyond).
Aside from the bond dimension error, 
the precision of the calculations is governed by $\chi$ and the lattice size. Using the $21 \times 31$
lattice and $\chi = 60$, we were able to converge entries of single-site
density matrices $\rho(i)$ to a precision of $\pm 10^{-6}$ (two site density matrices $\rho(i,j)$
had higher precision). For $\beta = 0.001-0.012$,
the smallest eigenvalue of $\rho(i)$ fell below this precision threshold, leading
to significant noise in $I(i,j)$. Thus, these values of $\beta$ are omitted from Fig. 1 (main text) 
and the smallest reported values of $I$ are $10^{-6}$, although with more precision we expect $I \to 0$ as $r \to \infty$.
 
Finally, the energy and fidelity errors were computed with respect to the PEPS
ground state \textit{of the same bond dimension} at $\beta = 10.0$ (10000 time steps).
The convergence of the these quantities shown in Fig. 1 (main text) thus isolates the convergence of the
imaginary time evolution, and does not include effects of other errors that 
may result from deficiencies in the wavefunction ansatz.

\section{Parameters used in QVM and QPUs simulations}

In this section, we include the parameters used in our QPUs and QVM simulations.
Note that all noisy QVM simulations (unless stated otherwise in the text) were performed with noise parameters from noise model 1.

\begin{table}[h!]
  \begin{center}
    \caption{QPUs: 1-qubit QITE and QLanczos.}
    \label{tab:table1}
    \begin{tabular}{l|c|c|c|r} 
      \textbf{Trotter stepsize} & \textbf{nTrials} & \textbf{$\delta$} & \textbf{s} & \textbf{$\epsilon$}\\
      \hline
      0.2 & 100000 & 0.01 & 0.75 & $10^{-2}$\\
    \end{tabular}
  \end{center}
\end{table}

\begin{table}[h!]
  \begin{center}
    \caption{QPUs: 2-qubit QITE and QLanczos.}
    \label{tab:table1}
    \begin{tabular}{l|c|c|c|r} 
      \textbf{Trotter stepsize} & \textbf{nTrials} & \textbf{$\delta$} & \textbf{s} & \textbf{$\epsilon$}\\
      \hline
      0.5 & 100000 & 0.1 & 0.75 & $10^{-2}$\\
    \end{tabular}
  \end{center}
\end{table}

\begin{table}[h!]
  \begin{center}
    \caption{QPUs: 1-qubit METTS.}
    \label{tab:table1}
    \begin{tabular}{l|c|c|c|r} 
      \textbf{$\beta$} & \textbf{Trotter stepsize} & \textbf{nTrials} & \textbf{nMETTs} & \textbf{$\delta$}\\
      \hline
      1.5 & 0.15 & 1500 & 70 & 0.01\\
      2.0 & 0.20 & 1500 & 70 & 0.01\\
      3.0 & 0.30 & 1500 & 70 & 0.01\\
      4.0 & 0.40 & 1500 & 70 & 0.01\\
    \end{tabular}
  \end{center}
\end{table}

\begin{table}[h!]
  \begin{center}
    \caption{QVM: 2-qubit QITE and QLanczos.}
    \label{tab:table1}
    \begin{tabular}{l|c|c|c|r} 
      \textbf{Trotter stepsize} & \textbf{nTrials} & \textbf{$\delta$} & \textbf{s} & \textbf{$\epsilon$}\\
      \hline
      0.5 & 100000  & 0.1 & 0.75  & $10^{-2}$\\
    \end{tabular}
  \end{center}
\end{table}

\begin{table}[h!]
  \begin{center}
    \caption{QVM: 1-qubit METTS.}
    \label{tab:table1}
    \begin{tabular}{l|c|c|c|r} 
      \textbf{$\beta$} & \textbf{Trotter stepsize} & \textbf{nTrials} & \textbf{nMETTs} & \textbf{$\delta$}\\
      \hline
      1.0 & 0.10 & 1500 &70 & 0.01\\
      1.5 & 0.15 & 1500 &70 & 0.01\\
      2.0 & 0.20 & 1500 &70 & 0.01\\
      3.0 & 0.30 & 1500 &70 & 0.01\\
      4.0 & 0.40 & 1500 &70 & 0.01\\
    \end{tabular}
  \end{center}
\end{table}

\begin{table}[h!]
 \begin{center}
    \caption{QVM: 2-qubit METTS.}
    \label{tab:table1}
    \begin{tabular}{l|c|c|c|r} 
      \textbf{$\beta$} & \textbf{Trotter stepsize} & \textbf{nTrials} & \textbf{nMETTs} & \textbf{$\delta$}\\
      \hline
      1.0 & 0.10 & 10000 & 200 & 0.1\\
      1.5 & 0.15 & 10000 & 200 & 0.1\\
      2.0 & 0.20 & 10000 & 200 & 0.1\\
      3.0 & 0.30 & 10000 & 200 & 0.1\\
      4.0 & 0.40 & 10000 & 200 & 0.1\\
    \end{tabular}
  \end{center}
\end{table}

%
%

\end{appendices}
\printbibliography[heading=bibintoc]


\end{document}